\theoremstyle{plain}
\newtheorem{definition}{Definition}
\newtheorem{theorem}{Theorem}
\newtheorem{lemma}{Lemma}
\newtheorem{corollary}{Corollary}
\newtheorem{remark}{Remark}
\newtheorem{assumption}{Assumption}
\newtheorem{proposition}{Proposition}
\newcommand{\RR}{\mathbb{R}}
\newcommand{\ZZ}{\mathbb{Z}}
\newcommand{\NN}{\mathbb{N}}
\newcommand{\EE}{\mathbb{E}}
\newcommand{\PP}{\mathbb{P}}
\newcommand{\ind}{\mathds{1}}
\newcommand{\cov}{\mathrm{Cov}}
\newcommand{\indep}{\perp\!\!\!\perp}
\newcommand{\fdr}{\textnormal{FDR}}
\newcommand{\fdp}{\textnormal{FDP}}
\newcommand{\indc}[1]{\mathds{1}\left\{#1\right\}}
\newcommand{\eqd}{\stackrel{\textnormal{d}}{=}}
\newcommand{\stepa}[1]{\overset{\rm (a)}{#1}}
\newcommand{\stepb}[1]{\overset{\rm (b)}{#1}}
\newcommand{\stepc}[1]{\overset{\rm (c)}{#1}}
\newcommand{\stepd}[1]{\overset{\rm (d)}{#1}}
\newcommand{\given}{{\,|\,}}
\newcommand{\biggiven}{\,\big{|}\,}
\newcommand{\Biggiven}{\,\Big{|}\,}
\newcommand{\bigggiven}{\,\bigg{|}\,}
\newcommand{\Bigggiven}{\,\Bigg{|}\,}
\def\@#1\@{\begin{align}#1\end{align}}
\def\$#1\${\begin{align*}#1\end{align*}}
\definecolor{myblue}{rgb}{.8, .8, 1}
\definecolor{mathblue}{rgb}{0.2472, 0.24, 0.6} 
\definecolor{mathred}{rgb}{0.6, 0.24, 0.442893}
\definecolor{mathyellow}{rgb}{0.6, 0.547014, 0.24}
\newcommand{\cond}{\mid}
\newcommand{\te}{{\tilde{e}}}
\newcommand{\tD}{{\tilde{D}}}
\newcommand{\tT}{{\tilde{T}}}
\newcommand{\tX}{{\tilde{X}}}
\newcommand{\tY}{{\tilde{Y}}}
\newcommand{\tZ}{{\tilde{Z}}}
\newcommand{\cD}{{\mathcal{D}}}
\newcommand{\cE}{{\mathcal{E}}}
\newcommand{\cF}{{\mathcal{F}}}
\newcommand{\cH}{{\mathcal{H}}}
\newcommand{\cI}{{\mathcal{I}}}
\newcommand{\cN}{{\mathcal{N}}}
\newcommand{\cR}{{\mathcal{R}}}
\newcommand{\cS}{{\mathcal{S}}}
\newcommand{\cV}{{\mathcal{V}}}
\newcommand{\cW}{{\mathcal{W}}}
\newcommand{\cX}{{\mathcal{X}}}
\newcommand{\cY}{{\mathcal{Y}}}
\newcommand{\cZ}{{\mathcal{Z}}}
\renewcommand{\hat}{\widehat}
\renewcommand{\tilde}{\widetilde}
\newcommand{\be}{{\bm e}}
\newcommand{\bp}{{\bm p}}
\newcommand{\bh}{\textnormal{BH}}
\newcommand{\ebh}{\textnormal{e-BH}}
\newcommand{\wcs}{\mathrm{WCS}}
\newcommand{\boost}{\mathfrak{b}} 
\newcommand{\calib}{\textnormal{calib}}
\newcommand{\test}{\textnormal{test}}
\newcommand{\dif}{\mathrm{d}}
\newcommand{\kn}{\mathrm{kn}}
\newcommand{\dbh}{\textnormal{dBH}}
\newcommand{\hcR}{\hat{\cR}}
\newcommand{\ci}{\mathrm{CI}}
\newcommand{\avcs}{\mathrm{AVCS}}
\newcommand{\tbe}{\tilde{\bar e}}
\newcommand{\train}{\textnormal{train}}
\newcommand{\cc}{\textnormal{CC}}
\long\def\comment#1{}
\title{Boosting e-BH via conditional calibration}
\author[1]{Junu Lee} 
\author[1]{Zhimei Ren}
\affil[1]{{\normalsize Department of Statistics and Data Science, University of Pennsylvania}} 
\date{\today}
\begin{document}
\maketitle
\begin{abstract}
The e-BH procedure is an e-value-based multiple testing procedure that provably 
controls the false discovery rate (FDR) under any dependence structure 
between the e-values. Despite this appealing theoretical FDR control guarantee, 
the e-BH procedure often suffers from low power in practice. 
In this paper, we propose a general framework that boosts the power of 
e-BH without sacrificing its FDR control under arbitrary dependence.
This is achieved by the technique of conditional calibration, 
where we take as input the e-values and calibrate them to be a 
set of ``boosted e-values'' that are guaranteed to 
be no less---and are often more---powerful than the original ones.
Our general framework is explicitly instantiated 
in three classes of multiple testing problems: 
(1) testing under parametric models, 
(2) conditional independence testing under the model-X setting, and 
(3) model-free conformalized selection. Extensive numerical experiments 
show that our proposed method significantly improves the power of 
e-BH while continuing to control the FDR.
We also demonstrate the effectiveness of our method through an application to an 
observational study dataset for identifying individuals 
whose counterfactuals satisfy certain properties. 
\end{abstract}

\section{Introduction}
\label{sec:intro}
We study the problem of testing $m$ null hypotheses simultaneously while 
controlling the false discovery rate (FDR)~\citep{benjamini1995controlling}. 
For this task, classical methods (e.g.,~\citet{benjamini1995controlling,
benjamini2001control,storey2002direct}) associate each hypothesis with a 
p-value and decide a subset of hypotheses to reject based on these
p-values. Recently, the notion of e-values has been proposed for quantifying evidence
against the null hypothesis in place of 
p-values~\citep{shafer2011test,grunwald2020safe,vovk2021values,grunwald2023posterior}.
To be concrete, an {\em e-value} for a null hypothesis $H_0$ is the realization 
of an {\em e-variable} $E$, which obeys 
\$
E\ge 0 \text{ almost surely and }\EE_{H_0}[E]\le 1.
\$
In contrast, we recall that a {\em p-value} for $H_0$ is the realization of a 
{\em p-variable} $P$, such that 
\$
\PP_{H_0}(P \le t) \le t, \text{ for all } t \in (0,1).
\$
In what follows, we will not distinguish between e-values (resp.~p-values) 
and e-variables (resp.~p-variables) when the context is clear.
Per their definitions, both e-values and p-values are summaries of evidence against $H_0$, 
where we reject $H_0$ for small p-values or large e-values.
Compared to p-values, e-values have several properties which  make them attractive 
for hypothesis testing. For example, the e-value allows the experimenter 
to adaptively decide whether to collect new evidence or to stop the experiment.
It is also convenient for combining evidence from multiple sources (see 
Section~\ref{sec:background} for more discussion). 

When it comes to 
multiple testing,~\citet{wang2022false} propose a simple and elegant 
e-value-based procedure, 
called the e-BH procedure, that provably controls the FDR at the desired level 
under unknown arbitrary dependence among the e-values. This is a rather surprising result, 
since for the 
p-value-based procedures, the FDR control is only guaranteed under 
special dependency 
structures---e.g., when the p-values are independent or positively correlated---unless 
one is willing to tolerate an inflated FDR level. 
Despite this theoretical appeal, the e-BH procedure is observed to 
be conservative in practice: it often achieves an FDR much lower than
the target level, which greatly hinders its wide application. 
It is thus of great interest to improve the power of e-BH 
without sacrificing the FDR control guarantee. 

\subsection{A peek at our contribution}
In this paper, we propose a general framework that boosts the power of 
e-BH for a wide class of multiple testing problems when partial information 
on the dependence structure is accessible. To set the stage, 
consider $m$ null hypotheses $H_1,H_2,\ldots,H_m$, where the subset of
null hypotheses which are true is denoted by $\mathcal{H}_0$. 
For each $j \in [m] := \{1,2,\ldots,m\}$, 
$H_j$ is associated with an e-value $e_j$. Applying the e-BH procedure to 
$\be := \{e_1,e_2,\ldots,e_m\}$, one obtains a rejection set $\cR(\be) \subseteq 
[m]$. 
The procedure is guaranteed to control the 
FDR, which is the expected false discovery proportion (FDP) of a single run of the algorithm:
\begin{align}
  \begin{split}
    \label{eq:fdr_defn}
    \fdr &= \EE[\fdp],\quad 
    \fdp = \frac{\sum_{j \in \cH_0} 
    \ind\{j \in \cR(\be)\}}{|\cR(\be)|\vee 1}.
  \end{split}
\end{align}
We slightly abuse notation by using $\cH_0$ to also denote the indices of the true null hypotheses. We will do the same for other collections of hypotheses when the context is obvious.

To identify the source of conservativeness in e-BH, 
we follow~\citet{fithian2020conditional} and 
decompose the FDR into the 
``contribution'' of each null hypothesis:
\$
\fdr= \sum_{j\in \cH_0}\fdr_j := \sum_{j\in \cH_0}
\EE\bigg[\frac{  
\ind\{j \in \cR(\be)\}}{|\cR(\be)|\vee 1}\bigg].
\$ 
Intuitively, if a multiple testing procedure is tight---that is, 
it uses up all its FDR bugdet---then $\fdr_j$ should be close to 
$\alpha/m$ (or some other budget $b_j$ that adds up to $\alpha$ given additional 
information).  As we shall see later, $\fdr_j$'s are often much smaller than 
$\alpha/m$ for e-BH, leading to a loss of power. 

At a high level, our 
proposal is to boost the e-BH procedure by filling these gaps, which requires 
identifying for each $j\in [m]$ a sufficient statistic $S_j$ for which
we can evaluate the conditional distribution of $\be\given S_j$ under the null $H_j$. Operationally, 
our proposed method takes as input the e-values $\be$ and 
returns a set of {\em boosted e-values} $\be^\boost := \{e_1^\boost,e_2^\boost,\ldots,
e_m^\boost\}$ that are at least as powerful as the original e-values; 
it then applies the e-BH procedure to
$\be^\boost$ to obtain a rejection set $\cR(\be^\boost)$, which 
improves upon $\cR(\be)$ in terms of power while maintaining its 
FDR control guarantee.

Our second major contribution is to explicitly instantiate our general framework 
in three classes of multiple testing problems: (1) testing under a class of parametric 
models, (2) conditional independence testing in the model-X setting, 
and (3) model-free comformalized selection.
For each problem, we identify the sufficient statistics and provide a concrete way to 
boost the e-values. As a preview, Figure \ref{fig:signature}
shows the empirical power improvement of our proposed method over the e-BH procedure for a selection of experiments;
through all of this, it continues to theoretically and empirically control the FDR at a preset level,
In the simulation studies of Sections \ref{sec:example-parametric}, \ref{sec:example-knockoffs}, and \ref{sec:example-conformal}, 
we find that power improvement occurs over all settings, not just the specific ones chosen to be presented here.

\begin{figure}[hbt!]
    \centering
    \includegraphics[width = 0.9\textwidth]{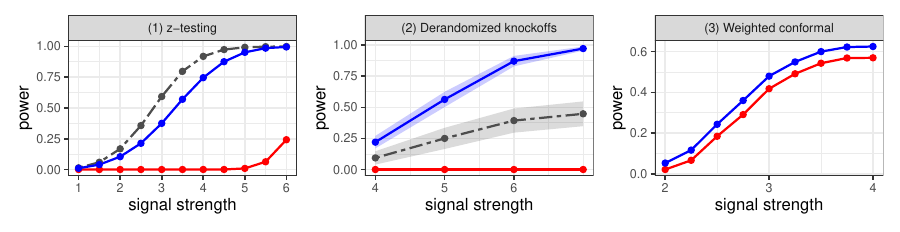} 
    \includegraphics[width = 0.9\textwidth]{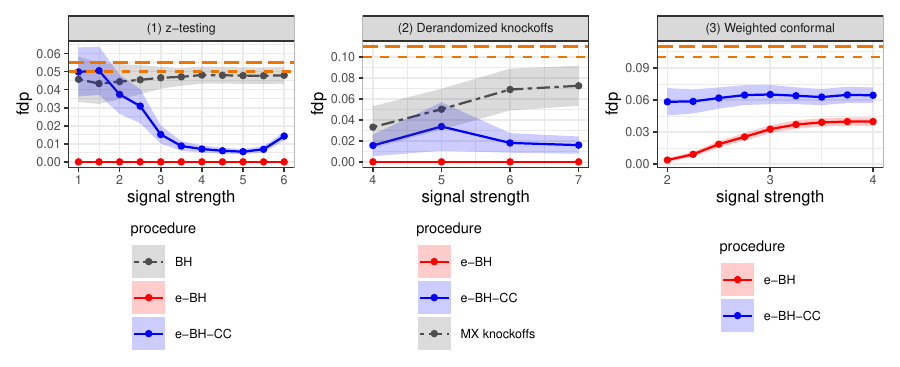} 
    \caption{{A selection of simulation results from each of the three instantiations of our proposed method, 
    e-BH-CC. The three problem instances are (1) one-sided $z$-testing using likelihood ratio e-values; 
    (2) conditional independence testing using derandomized knockoffs e-values; 
    and (3) conformal outlier detection using weighted conformal e-values. 
    The plots above show average power and false discovery proportion (FDP) curves over 1,000, $100$, 
    and 1,000 replications, respectively. 
    The dashed gray lines correspond to the relevant baselines (if they exist),  
    and the target FDR level is illustrated by the short-dashed orange line.
    Shading represents error bars, when they are deemed necessary.}} 
    \label{fig:signature}
    \centering
  \end{figure}
\paragraph{Organization of the paper.}
We introduce the background on e-values and discuss related literature 
in Section~\ref{sec:background}. Our main framework for boosting e-values 
is presented in Section~\ref{sec:boosting}. 
Sections~\ref{sec:example-parametric},~\ref{sec:example-knockoffs} and~\ref{sec:example-conformal} each
contain an instantiation of our method in testing under parametric models, 
conditional independence testing in the model-X setting, and model-free conformalized 
selection, respectively. These three sections are rather stand-alone---readers who are 
interested in a specific problem can jump directly to the corresponding section after reading 
the opening sections. 
Section~\ref{sec:realdata} presents the results from a real data analysis.
We conclude the paper 
with a discussion in Section~\ref{sec:discussion}.

\section{Background}
\label{sec:background}
\subsection{E-values}
The notion of e-values encompasses many commonly used statistics, such 
as betting scores, Bayes factors, likelihood ratios, and 
stopped martingales~\citep{wasserman2020universal,shafer2021testing,
howard2021confidence,grunwald2023posterior,waudby2024estimating}.
As mentioned earlier, we reject the null hypothesis $H_0$ when the e-value is large.
In particular, for any $\alpha \in (0,1)$, rejecting $H_0$ when $e \ge 1/\alpha$ yields a level $\alpha$ 
test as a consequence of Markov's inequality: 
\$ 
\PP_{H_0}(e \ge 1/\alpha) = \alpha \cdot \EE_{H_0}[e] \le \alpha.
\$
There is also a close connection between the e-value and the p-value. 
Let $e$ be an e-value for a null hypothesis $H_0$. Then, 
$p = 1/e$ is a p-value\footnote{We use 
the term ``p-value'' a bit loosely since we allow the p-value to be larger than $1$. One 
can always transform $p$ into a strict p-value by taking $\min(p,1)$.}
for $H_0$ by Markov's inequality, since
\$
\PP_{H_0}(p \le t) = \PP_{H_0}(e \ge 1/t) \le t \cdot \EE_{H_0}[e] \le t, \text{ for any }t\in (0,1).
\$ 
Conversely, a p-value $p$ for $H_0$ can be transformed into an e-value using a ``p-to-e calibrator''~\citep{shafer2011test}, 
defined as a decreasing function $f:[0,1] \mapsto [0,\infty)$, such that 
\$ 
\int_0^1 f(t) \, dt = 1.
\$
For example, we can take $f(t) = \lambda t^{\lambda - 1}$ for some $\lambda \in (0,1)$ to be the 
calibrator~\citep{shafer2021testing,vovk2021values}.
Then, for any p-value $p$, $f(p)$ is an e-value.

Prior works have provided fruitful discussions on 
when one would prefer e-values over p-values~\citep{shafer2021testing,grunwald2020safe,wang2022false,grunwald2023posterior,ramdas2023game}, and we refer the readers to them for 
a comprehensive review. Here, to build intuition and help motivate our examples,
we describe two scenarios where e-values are particularly useful.   

Imagine a scientist is conducting experiments to test a hypothesis $H_0$. After collecting 
the data and performing the data analysis, she obtains an e-value $e_1$ that fails to reject 
$H_0$; having seen $e_1$, she decides to collect the next batch of data and obtain another 
e-value $e_2$. The scientist can then combine the evidence from $e_1$ and $e_2$ by taking 
$e = e_1e_2$, which is a valid e-value as long as $\EE_{H_0}[e_2 \given e_1] \le 1$. 
In contrast, it is not clear how to combine p-values efficiently in such a sequential experiment setting.

Consider another example where two labs are interested testing the same hypothesis $H_0$, and
their data can be dependent in unknown ways (for example, there can be an overlap between the cohorts in the two studies). 
Each lab obtains its e-value, $e_1$ and $e_2$.
To combine the evidence from both labs, one can simply take $e = \frac{1}{2}(e_1 + e_2)$, which is still a valid 
e-value since $\EE_{H_0}[e] = \frac{1}{2}(\EE_{H_0}[e_1] + \EE_{H_0}[e_2]) \le 1$.
The same argument, however, does not hold for p-values.

\subsection{The e-BH procedure}
For $m$ null hypotheses $H_1,H_2,\ldots,H_m$ and their associated e-values $e_1,e_2,\ldots,e_m$,
let $e_{(1)}, e_{(2)},\ldots,e_{(m)}$ denote the ordered e-values in descending order.
The e-BH procedure at level $\alpha$ 
rejects the hypotheses corresponding to the $k^*$ largest e-values, where
\$ 
k^* = \max\Big\{k \in [m]: e_{(k)} \ge \frac{m}{\alpha k} \Big\},
\$
with the convention $\max \varnothing = 0$.
To understand why e-BH is conservative, it is helpful to go through 
the proof of its FDR control. Let $\cR^{\ebh}(\be)$ denote the e-BH rejection set. 
It can be checked that $j \in \cR^\ebh(\be)$ if and only if $e_j \ge \frac{m}{\alpha |\cR^\ebh(\be)|}$,
so we can write the FDR of e-BH as
\begin{equation}\label{eq:fdr_ineq}
    \fdr = 
    \sum_{j\in \cH_0}\EE\bigg[\frac{\ind\{e_j \ge \frac{m}{\alpha |\cR^\ebh(\be)|}\}}
    {|\cR^\ebh(\be)| \vee 1}\bigg] 
    \stepa{\le} \sum_{j\in \cH_0}\EE \bigg[ 
    \frac{e_j \frac{\alpha |\cR^\ebh(\be)|}{m}}{|\cR^\ebh(\be)|\vee 1} 
    \bigg] \le \frac{\alpha}{m}\sum_{j \in \cH_0}\EE[e_j]
    \stepb{\le} \alpha,
\end{equation} 
where step (a) follows from the deterministic 
inequality $\ind\{X\ge t\} \le X/t$ for any $t>0$, 
and step (b) is due to the definition of e-values. 
Note that the inequality in (a) 
is tight if and only if $e_j \in \{0,\frac{m}{\alpha |\cR^\ebh|}\}$, 
which is typically not the case. 
Step (a) therefore constitutes a major source of the gap between 
the realized FDR of e-BH and the target level $\alpha$.
The main idea behind our proposal, to be detailed in Section~\ref{sec:boosting}, 
is to identify and close this gap by leveraging 
the distribution of e-values 
conditional on certain sufficient statistics.

It is worth mentioning that~\citet*{wang2022false} also provide a method for 
boosting the e-values using available information 
on their {\em marginal} distributions.\footnote{\citet{wang2022false}
also provides a more powerful boosting scheme when the p-values are 
positively dependent on a subset (PRDS). Since the PRDS condition already 
ensures the FDR control of the BH procedure, in this paper we choose to 
focus primarily on the more general cases beyond the PRDS condition.} 
Specifically, the method transforms an e-value $e_j$ into $e_j' = b_je_j$
for some boosting factor $b_j \ge 1$, and then applies the e-BH procedure
to $\{e_j'\}_{j\in[m]}$. The boosting factor $b_j$ is chosen to be the largest
$b\ge 1$ such that $\EE\big[T(\alpha b e_j)\big] \le \alpha$, where 
$T(x)$ is the largest element in $\{0,1, m/(m-1),\ldots, m/2,m\}$ that is 
no greater than $x$. 
For example, when $e_j = \frac{1}{2}p_j^{-1/2}$, with $p_j$ being
a uniform random variable on $[0,1]$, one can take $b_j = (2/\alpha)^{1/2}$.
Since this boosting scheme mainly leverages the marginal distribution of e-values, 
we refer to it as the \emph{marginal boosting} scheme. 
Although there are settings in which marginal boosting can lead to more powerful
e-values, it is generally not sufficient for closing the gap caused by step (a) in 
Equation~\eqref{eq:fdr_ineq}. Appendix \ref{appd:marg_boost_not_enough} covers 
numerical experiments which detail a comparison between marginal boosting and our 
boosting proposal (to be introduced in Section \ref{sec:boosting}). From those studies, we see that
our method generally dominates marginal boosting in power, and can be used on top of marginal boosting
to attain even higher power.

Finally, we note that for the purpose of FDR control, it suffices to require  
that the sum of expectation of null e-values is bounded by $m$ (step (b) in~\eqref{eq:fdr_ineq}).
We call such e-values \emph{generalized e-values}, whose formal definition is given below. 
\begin{definition}[Generalized e-values]
    \label{def:generalized_eval}
The non-negative random variables $e_1,e_2,\ldots,e_m$ are called generalized e-values
if $\sum_{j \in \cH_0}\EE[e_j] \le m$.
\end{definition}
Applying the e-BH procedure to generalized e-values yields a level-$\alpha$ FDR 
control~\citep{wang2022false}, and this can easily be seen 
by noting that step (b) in Equation~\eqref{eq:fdr_ineq} still holds for generalized e-values.
We will see examples of such generalized e-values later.

\subsection{Related work}
Recently, there has been a growing literature on the topic of e-values,
including the interpretation and properties of 
e-values~\citep{shafer2011test,shafer2021testing,vovk2021values,vovk2023confidence,grunwald2023posterior},
the existence and construction of powerful 
e-values~\citep{wasserman2020universal,grunwald2020safe,zhang2023existence,larsson2024numeraire},
and the use of e-values in various statistical 
problems~\citep{howard2020time,howard2021confidence,waudby2024estimating,
vovk2024nonparametric,wang2022backtesting,waudby2022anytime}.
When it comes to multiple testing, as mentioned earlier,~\citet{wang2022false} propose the e-BH procedure for 
testing multiple hypotheses which controls the FDR under arbitrary dependence.~\citet{ignatiadis2023evalues}
study the problem when both p-values and e-values are 
available.~\citet{xu2023online} consider using the e-values for multiple testing 
in the online setting, while~\citet{xu2022post} concern themselves with building post-selection 
confidence intervals by inverting the e-values. 

Two recent papers~\citep{ramdas2023randomized,xu2023more}
discuss the use of external randomness for boosting the power of e-values, where 
the former focuses on testing a single hypothesis and the latter considers multiple testing.
For example, one could replace $e_j$ with $\tilde e_j \coloneq e_j/U_j$, for $U_j \sim \text{Unif}([0,1])$ that is 
independent of everything else, and then apply the e-BH procedure to the $\tilde e_j$'s,
still guaranteeing FDR control. Since $U_j<1$, such an approach indeed improves the power. But 
the external randomness can potentially hinder the reproducibility of the results 
(the procedure can be quite sensitive to the realization of the $U_j$'s and 
different runs of the algorithm can yield different selections), and could   
encourage ``hacking'' the data to obtain a desired result (practitioners can repeatedly 
sample the $U_j$'s until getting significant results).
Our proposed method, in contrast, is deterministic in principle 
(assuming sufficient computational resources) and is effectively  
stable across different runs of the algorithm in practice.

More broadly, there is a substantial line of works on multiple testing with 
FDR control. This problem was first studied by~\citet{benjamini1995controlling}, 
in which they also proposed the Benjamini-Hochberg (BH) procedure that operates on p-values. 
The BH procedure is only known to control the FDR with independent or 
positively correlated p-values. Otherwise, 
a severe correction is needed to ensure the FDR control~\citep{benjamini2001control}.
Subsequent works have investigated the asymptotic FDR 
control of BH~\citep{genovese2004stochastic,storey2004strong,ferreira2006benjamini,farcomeni2007some}.
The recent work of~\citet{chi2022multiple} discusses the FDR control of the BH 
procedure under negative dependence, providing better correction factors
than in the arbitrary dependence case.~\citet{sarkar2023controlling} develops 
a variant of the BH procedure the controls the FDR when testing multivariate normal 
means against two-sided alternatives.  
Our work draws inspiration 
from~\citet{fithian2020conditional}.
The authors introduce the dBH procedure that uses conditional calibration to modify the p-value threshold in the BH procedure, 
thereby achieving finite-sample FDR control for a larger class of p-value dependence structures; we instead use 
conditional calibration to boost the power of the e-BH procedure. 
We provide a detailed discussion on the connnection between their method and ours in 
Section~\ref{sec:boosting}.

Beyond the p-value-based mulitple testing 
procedure,~\citet{barber2015controlling,candes2018panning}
propose the knockoff procedure that controls the FDR 
by adding ``knockoff'' variables to the regression.
It has been shown that the knockoff-based methods also have 
e-value interpretations~\citep{ren2024derandomised}; based on this observation, 
we apply our framework to improving the power of the knockoff method 
in Section~\ref{sec:example-knockoffs}.

\section{Boosting e-BH using conditional calibration}
\label{sec:boosting}

Given a collection of e-values $\bm e = (e_1, \dots, e_m)$ corresponding to null hypotheses $H_1, \dots, H_m$, 
our method returns a new collection of e-values $\bm e^\boost = (e^\boost_1, \dots, e^\boost_m)$ through a 
technique called \emph{conditional calibration} (CC)~\citep{fithian2020conditional}. 
Originally proposed as a way to produce separately-calibrated thresholds for \textit{p-values} 
to return an FDR-controlling rejection set, conditional calibration plays a different role in
our procedure---instead of calibrating thresholds for rejection, 
we boost e-values to more powerful versions of themselves while retaining e-value validity. 
Thus, we can run e-BH on these boosted e-values to attain a rejection set which still controls the FDR at the pre-specified level $\alpha \in (0,1)$. Furthermore, the rejection set from the boosted method dominates that of e-BH in terms of power.

Suppose we can identify for each $j\in [m]$ a sufficient statistic $S_j$ such that we know the conditional joint distribution $\bm e \cond  S_j$ under the null hypothesis $H_j$. Denote $\cR(\bm e)$ as the rejection set returned by the e-BH procedure on $\bm e$ at level $\alpha \in (0,1)$. For each  $j \in [m]$, define $\hat \cR _j (\bm e) \coloneqq  \cR(\bm e) \cup \{j\}$ and subsequently define the function 
\begin{equation}
\begin{aligned}
  \label{eq:phi_func}
  \phi_j(c; S_j) \coloneqq \EE \left[\frac{m}{\alpha} \cdot \frac{\indc{c \te_j \ge \frac{m}{\alpha |\hat \cR_j(\bm \te)|}}}{|\hat \cR_j(\bm \te)|}  - \te_j \Bigggiven S_j \right]\;
\end{aligned}
\end{equation}
where $\bm \te = (\te_1, \dots, \te_m)$ follows the conditional distribution $\bm e \cond S_j$.
Noting that $\phi_j(c;S_j)$ is monotonically non-decreasing in $c$, we can define the associated critical value 
\begin{equation}
  \label{eq:critical_value}
  \hat c_j \coloneqq \sup\{c \colon \phi_j(c; S_j) \le 0\}.
\end{equation}
Since the function $\phi_j(c;S_j)$ is not necessarily 
continuous in $c$, it is possible that $\phi_j(\hat c_j;S_j) >0$.
We then construct our new collection of e-values slightly differently depending 
on the value of $\phi_j(\hat c_j;S_j)$: 
\begin{align}
  \label{eq:boosted_eval}
  e^\boost_j = 
  \begin{cases}
  \frac{m}{\alpha |\hat \cR_j(\bm e)|} \cdot \ind\big\{ \hat c_j  e_j \ge \frac{m}{\alpha |\hat \cR_j(\bm e)|}\big\} 
  & \text{if } \phi_j(\hat c_j;S_j) \le 0,\\
  \frac{m }{\alpha |\hat \cR_j(\bm e)|} \cdot \ind\big\{ \hat c_j  e_j > \frac{m}{\alpha |\hat \cR_j(\bm e)|}\big\}
  & \text{if } \phi_j(\hat c_j;S_j) > 0.
  \end{cases}
\end{align}
We formalize in Theorem~\ref{thm:eval_valid} that 
the boosted e-values $(e_1^\boost,\dots,e^\boost_m)$ 
are valid e-values, and provide its proof in Appendix~\ref{appd:proof_eval_valid}.
\begin{theorem}
\label{thm:eval_valid}
When $(e_1,\ldots,e_m)$ are (resp.~generalized) e-values, 
the boosted e-value $\be^\boost = (e^\boost_1,\ldots,e_m^\boost)$ 
defined in~\eqref{eq:boosted_eval} are (resp.~generalized) e-values.
\end{theorem}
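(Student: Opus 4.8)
The plan is to show that for each fixed $j \in \cH_0$, we have $\EE[e_j^\boost] \le \EE[e_j]$ when $H_j$ is true; summing this over $j \in \cH_0$ then immediately gives the (generalized) e-value property for $\be^\boost$, since $\sum_{j \in \cH_0}\EE[e_j^\boost] \le \sum_{j \in \cH_0}\EE[e_j] \le m$ (with $m$ replaced by $1$ in the ordinary e-value case). So everything reduces to a per-hypothesis, conditional-on-$S_j$ computation. The natural move is to work conditionally on $S_j$ and use the tower property: it suffices to show $\EE[e_j^\boost \mid S_j] \le \EE[e_j \mid S_j]$ almost surely, because $\EE[e_j \mid S_j] = \EE[\te_j \mid S_j]$ where $\te$ has the law of $\be \mid S_j$ under $H_j$ (this is the whole point of identifying $S_j$ as a statistic whose conditional null distribution we know).

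The key algebraic identity is that $\phi_j(c; S_j)$, as defined in~\eqref{eq:phi_func}, is exactly $\EE[e_j^{\boost}(c) \mid S_j] - \EE[\te_j \mid S_j]$ where $e_j^{\boost}(c) \coloneqq \frac{m}{\alpha|\hat\cR_j(\be)|}\indc{c\, e_j \ge \frac{m}{\alpha|\hat\cR_j(\be)|}}$ is the "candidate" boosted e-value at threshold parameter $c$. Here I am using the crucial fact that $\hat\cR_j(\be) = \cR(\be)\cup\{j\}$ and the self-consistency property of e-BH: on the event that $j$ is rejected by e-BH, $|\cR(\be)| = |\hat\cR_j(\be)|$, and in general the indicator $\indc{c\, e_j \ge m/(\alpha|\hat\cR_j(\be)|)}$ governs whether $e_j^\boost$ is nonzero. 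So the definition of $\hat c_j$ as $\sup\{c : \phi_j(c;S_j) \le 0\}$ is precisely engineered so that plugging $c = \hat c_j$ makes $\phi_j(\hat c_j; S_j) \le 0$, i.e. $\EE[e_j^{\boost}(\hat c_j)\mid S_j] \le \EE[\te_j \mid S_j]$ — \emph{except} that $\phi_j$ may fail to be left-continuous at $\hat c_j$, which is exactly why~\eqref{eq:boosted_eval} has two cases.

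The two-case split in~\eqref{eq:boosted_eval} is the part that needs care, and I expect it to be the main (though not deep) obstacle. When $\phi_j(\hat c_j; S_j) \le 0$, we take $e_j^\boost = e_j^{\boost}(\hat c_j)$ with a non-strict inequality $\hat c_j e_j \ge \frac{m}{\alpha|\hat\cR_j(\be)|}$, and then $\EE[e_j^\boost \mid S_j] = \EE[e_j^{\boost}(\hat c_j)\mid S_j] + \EE[\te_j\mid S_j] - \EE[\te_j\mid S_j] = \phi_j(\hat c_j;S_j) + \EE[\te_j\mid S_j] \le \EE[\te_j \mid S_j]$, as desired. When $\phi_j(\hat c_j; S_j) > 0$, the non-strict version overshoots, so we must instead use the strict indicator $\indc{\hat c_j e_j > \frac{m}{\alpha|\hat\cR_j(\be)|}}$. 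I would argue that $\EE\big[\frac{m}{\alpha|\hat\cR_j(\be)|}\indc{\hat c_j \te_j > \frac{m}{\alpha|\hat\cR_j(\be)|}}\mid S_j\big] - \EE[\te_j\mid S_j] = \lim_{c \uparrow \hat c_j}\phi_j(c; S_j) \le 0$, using monotone convergence (as $c \uparrow \hat c_j$ the events $\{c\te_j \ge \cdots\}$ increase to $\{\hat c_j \te_j > \cdots\}$ — note strict limit) together with the definition of $\hat c_j$ as a supremum, which guarantees $\phi_j(c;S_j)\le 0$ for all $c < \hat c_j$. The dominated/monotone convergence step requires a mild integrability check (the summand is bounded by $\frac{m}{\alpha} + \te_j$ in absolute value, and $\EE[\te_j \mid S_j] < \infty$), which I would dispatch quickly. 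A small edge case to mention: if $\{c : \phi_j(c;S_j)\le 0\}$ is empty or $\hat c_j$ could be $0$ or $\infty$ — I would note that $\phi_j(0;S_j) = -\EE[\te_j\mid S_j] \le 0$ so the set is nonempty and $\hat c_j \ge 0$ always, and if $\hat c_j = \infty$ the argument via the increasing limit still goes through. Finally, for $j \notin \cH_0$ there is nothing to prove since the e-value constraint only concerns null hypotheses; I would state this explicitly at the start.
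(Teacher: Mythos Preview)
Your proposal is correct and follows essentially the same route as the paper's proof: condition on $S_j$, split into the two cases of~\eqref{eq:boosted_eval}, handle the first case directly via $\phi_j(\hat c_j;S_j)\le 0$, and handle the second case by approximating the strict indicator $\ind\{\hat c_j e_j > \cdots\}$ from below via $\ind\{c\,e_j \ge \cdots\}$ for $c\uparrow \hat c_j$ (the paper takes the specific sequence $\hat c_{j,t}=\hat c_j-1/t$ and invokes dominated convergence, using the bound $m/\alpha$). Your edge-case remarks ($\phi_j(0;S_j)\le 0$ so $\hat c_j\ge 0$, and the $\hat c_j=\infty$ case) are a nice addition not spelled out in the paper.
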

As a consequence of Theorem~\ref{thm:eval_valid}, 
the rejection set returned from running e-BH on $\bm e^\boost$ at level $\alpha$
satisfies FDR control, i.e., $\fdr[\cR(\bm e^\boost)] \le \alpha$. 
We summarize this procedure, henceforth referred to as e-BH-CC, 
in Algorithm~\ref{algo:ebh_generic}.
\begin{remark}
\label{remark:ebh_cc}
Since $(e_1^\boost,\dots,e_m^\boost)$ are valid (generalized) e-values,  
applying e-BH on them at any level $\alpha_\ebh \in (0,1)$ yields a rejection set
with FDR controlled by $\alpha_\ebh$. In the most general version of e-BH-CC, 
we denote the level used in constructing the boosted e-values (the $\alpha$ hyperparameter in \eqref{eq:phi_func} and \eqref{eq:boosted_eval}) as $\alpha_\cc$, 
and allow $\alpha_\cc$ to differ from $\alpha_\ebh$. As we shall see shortly, 
the power improvement of e-BH-CC is theoretically guaranteed when 
$\alpha_\cc = \alpha_\ebh$. In what follows, we will assume that $\alpha_\cc = \alpha_\ebh = \alpha$
unless otherwise specified.
\end{remark}

\begin{algorithm}[ht]
\caption{e-BH-CC}
\label{algo:ebh_generic}
\DontPrintSemicolon
\KwIn{e-values $e_1,e_2,\ldots,e_m$; sufficient statistics $S_1,S_2,\ldots,S_m$; 
target FDR level $\alpha$.}
\For{$j \in [m]$}{
  1. Compute the boosting factor $\hat c_j$.\; 
  2. Construct the boosted e-values $e_j^\boost$ according to~\eqref{eq:boosted_eval}.\;
}
Apply e-BH to $(e_1^\boost, e_2^\boost,\ldots,e_m^\boost)$ at level $\alpha$ 
and obtain the rejection set $\cR^\ebh(\be^\boost)$. 

\KwOut{the rejection set $\cR^\ebh(\be^\boost)$.}
\end{algorithm}

The reader might wonder why we describe e-BH-CC as ``boosting'' e-BH. 
This is because by constructing boosted versions of the e-values and applying e-BH to them, 
we have created a more powerful procedure than regular e-BH. 
Specifically, $\cR(e^\boost)$ can be interpreted as a rejection set returned 
by a selection procedure which improves upon regular e-BH by tightening its FDR control.
Notably, the process described above is \emph{deterministic}
with respect to the 
original e-values collected---there is no randomness introduced.

\subsection{Power improvement}
We previously claimed that the boosted rejection set $\cR(\bm e^\boost)$ dominates  $\cR(\bm e)$
in power. The claim is formalized as follows:
\begin{theorem}\label{thm:power-improvement}
  Given e-values $\bm e = (e_1, \dots, e_m)$, denote $\bm e^\boost = (e^\boost_1, \dots, e^\boost_m)$ 
  to be the boosted e-values from conditional calibration defined in~\eqref{eq:boosted_eval}. 
  Then $\cR(\bm e^\boost) \supseteq \cR(\bm e)$, where each rejection set comes from running the e-BH procedure at the same level $\alpha \in (0,1)$.
\end{theorem}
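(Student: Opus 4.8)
The plan is to combine two ingredients: (i) the boosting factor always satisfies $\hat c_j \ge 1$, so the calibration never decreases an e-value that already clears its e-BH threshold; and (ii) the e-BH procedure is \emph{self-consistent}, i.e., for any e-value vector $\be'$ we have $j \in \cR(\be')$ if and only if $e'_j \ge \frac{m}{\alpha |\cR(\be')|}$, as recalled in Section~\ref{sec:background}.

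First I would verify that $\hat c_j \ge 1$ for every $j \in [m]$. Evaluating $\phi_j$ from~\eqref{eq:phi_func} at $c = 1$: since $j \in \hat\cR_j(\bm\te)$ we have $|\hat\cR_j(\bm\te)| \ge 1$, so applying the deterministic bound $\indc{X \ge t} \le X/t$ (the inequality used in step~(a) of~\eqref{eq:fdr_ineq}) with $X = \te_j$ and $t = \frac{m}{\alpha|\hat\cR_j(\bm\te)|} > 0$ shows that the integrand inside the conditional expectation in~\eqref{eq:phi_func} is pointwise nonpositive; hence $\phi_j(1; S_j) \le 0$, and the definition~\eqref{eq:critical_value} gives $\hat c_j \ge 1$. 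Moreover, because $\phi_j(\cdot;S_j)$ is non-decreasing and $\phi_j(1;S_j)\le 0$, the ``discontinuous'' case $\phi_j(\hat c_j;S_j) > 0$ in~\eqref{eq:boosted_eval} can occur only when $\hat c_j > 1$.

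Next, fix $j \in \cR(\be)$ and set $r := |\cR(\be)| \ge 1$. Since $j \in \cR(\be)$, the inflated set is $\hat\cR_j(\be) = \cR(\be)\cup\{j\} = \cR(\be)$, so $|\hat\cR_j(\be)| = r$, and self-consistency of e-BH gives $e_j \ge \frac{m}{\alpha r} > 0$. Using $\hat c_j \ge 1$ we get $\hat c_j e_j \ge e_j \ge \frac{m}{\alpha r}$, so the indicator in the first branch of~\eqref{eq:boosted_eval} equals $1$; in the second branch, where $\hat c_j > 1$ by the previous step, we even have $\hat c_j e_j > e_j \ge \frac{m}{\alpha r}$ (strict, since $e_j > 0$), so the strict indicator is also $1$. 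In both cases $e^\boost_j = \frac{m}{\alpha r}$.

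Finally I would assemble the conclusion. The previous paragraph shows the $r$ boosted e-values indexed by $\cR(\be)$ all equal $\frac{m}{\alpha r}$, so at least $r$ of the $e^\boost_j$'s are $\ge \frac{m}{\alpha r}$; hence the $r$-th largest boosted e-value satisfies $e^\boost_{(r)} \ge \frac{m}{\alpha r}$, and the e-BH cutoff $k^* = \max\{k : e^\boost_{(k)} \ge \frac{m}{\alpha k}\}$ on $\be^\boost$ satisfies $k^* \ge r$, i.e., $|\cR(\be^\boost)| \ge r$. Therefore $\frac{m}{\alpha |\cR(\be^\boost)|} \le \frac{m}{\alpha r}$, and for every $j \in \cR(\be)$ we get $e^\boost_j = \frac{m}{\alpha r} \ge \frac{m}{\alpha |\cR(\be^\boost)|}$, so $j \in \cR(\be^\boost)$ by self-consistency, giving $\cR(\be^\boost) \supseteq \cR(\be)$. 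The main obstacle is the bookkeeping around the discontinuity of $\phi_j$—ensuring the strict-indicator branch of~\eqref{eq:boosted_eval} still outputs $\frac{m}{\alpha r}$ for $j \in \cR(\be)$—together with the harmless degenerate cases ($e_j = 0$, which is excluded for $j \in \cR(\be)$, and $\hat c_j = +\infty$, where the indicator is again $1$); everything else is a short consequence of $\hat c_j \ge 1$ and self-consistency, and the argument crucially uses $\alpha_\cc = \alpha_\ebh$ as in Remark~\ref{remark:ebh_cc}.
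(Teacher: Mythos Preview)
Your proposal is correct and follows essentially the same approach as the paper's proof: both establish $\phi_j(1;S_j)\le 0$ via the pointwise bound $\ind\{X\ge t\}\le X/t$, deduce $\hat c_j\ge 1$ (and $\hat c_j>1$ in the discontinuous branch), then use $\hat\cR_j(\be)=\cR(\be)$ for $j\in\cR(\be)$ to get $e_j^\boost = m/(\alpha|\cR(\be)|)$, and finally conclude via the e-BH threshold. Your write-up is slightly more explicit in the last step (invoking self-consistency to verify each $j\in\cR(\be)$ actually lands in $\cR(\be^\boost)$), but the underlying argument is identical.
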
 
\begin{proof}[Proof of Theorem \ref{thm:power-improvement}]
  The claim mainly follows from the fact that $\phi_j(1;S_j)$ for each $j$. To see why, we invoke the inequality 
  that $\ind\{X \ge t\} \le X/t$ for $t >0$ on the indicator inside the function $\phi_j$:
  \begin{equation}
    \indc{  \te_j \ge \frac{m}{\alpha |\hat \cR_j(\bm  \te)|}} \le 
    \frac{\alpha | \hat \cR_j(\bm \te)|}m \cdot  \te_j.
  \end{equation}
  Therefore, 
  \begin{equation} 
    \phi_j(1;S_j) \le \EE\left[ \frac{m}{\alpha |\hat \cR_j(\bm \te)|} \cdot 
    \frac{\alpha | \hat \cR_j(\bm \te) |}m \cdot  \te_j - \te_j\bigggiven S_j\right] = 0.
  \end{equation}
  As a result, when $\phi_j(\hat c_j;S_j) \le 0$, we have $\hat c_j \ge 1$; when 
  $\phi_j(\hat c_j;S_j) > 0$, we have $\hat c_j > 1$. 
  
  We will now show that $j\in \cR(\bm e) \implies j \in \cR(\bm e^\boost)$. The case when $ \cR(\bm e)$ is empty is trivial, so we assume otherwise. For each $j \in \cR(\bm e)$, the containment $\hat\cR_j(\bm e) \supseteq \cR(\bm e)$ actually attains set equality, so
  \begin{equation}
  e^\boost_j = 
  \begin{cases}
  \frac{m}{\alpha |\cR(\bm e)|} \cdot \ind\big\{ \hat c_j  e_j \ge \frac{m}{\alpha |\cR(\bm e)|}\big\} 
  & \text{if } \phi_j(\hat c_j;S_j) \le 0,\\
  \frac{m }{\alpha |\cR(\bm e)|} \cdot \ind\big\{ \hat c_j  e_j > \frac{m}{\alpha |\cR(\bm e)|}\big\}
  & \text{if } \phi_j(\hat c_j;S_j) > 0.
  \end{cases}
  \end{equation}
  Recall that if $\phi_j(\hat c_j;S_j) \le 0$, then $\hat c_j \ge 1$, and 
  $\hat c_j > 1$ otherwise.
  In either case, the indicator in the above crystallizes to 1 
  since $e_j \ge \frac{m}{\alpha |\cR(\bm e)|}$ by virtue of $j$ having 
  been selected by e-BH. Therefore, $e_j^\boost = \frac{m}{\alpha|\cR(\bm e)|}$ 
  for each $j\in \cR(\bm e)$. Since there are $|\cR(\bm e)|$ many such indices, the e-BH procedure will select the set $\{ e_j^\boost \colon j\in\cR(\bm e) \}$ at the very least, proving the claim.
\end{proof} 

\subsection{Implementing e-BH-CC}
\label{seq:implementing}

To calculate $\hat c_j$ in \eqref{eq:critical_value}, we assume access to the conditional expectation $\phi_j$ through knowledge of the distribution $\be \cond S_j$ under $H_j$. 
In practice, it is unreasonable to expect an analytical form of $\phi_j$
for anything but the simplest of settings. In that sense, the e-BH-CC procedure we previously outlined amounts to an oracle algorithm.

Therefore, to design an implementation of e-BH-CC that is much more practical, 
we will avoid analytically calculating $\phi_j(\cdot; S_j)$ in favor of numerically evaluating it 
by using i.i.d.~resamples from $\be\cond S_j$.
With these resamples, we can use Monte-Carlo estimation in order to evaluate $\phi_j(\cdot; S_j)$ at any point. 
However, this introduces a tradeoff between computational cost and accuracy. 
Considering that, na\"ively, we desire to estimate the critical value of the function, it is not immediately clear how to translate the ability to resample into an efficient e-BH-CC implementation whose FDR control does not severely and unpredictably suffer from Monte-Carlo error.

In this subsection, we outline our computationally efficient Monte-Carlo method for impelementing e-BH-CC, 
which only requires resamples from $\be \cond S_j$. In our implementation, 
we forego estimating the critical value $\hat c_j$ and rather evaluate 
$\ind\{\hat c_j e_j \ge m/(\alpha| \hat\cR_j(\be)| )\}$ or $\ind\{\hat c_j e_j > m/(\alpha| \hat\cR_j(\be)| )\}$ directly, 
since this determines the value of $e_j^\boost$. We show that it suffices to evaluate $\phi_j(\cdot; S_j)$ at a specific value, which we can Monte-Carlo estimate using our resamples. By making this simplification, we crucially avoid the issue of finding $\hat c_j$. Our implementation also uses anytime-valid methods to control the MC estimation error in an online manner, which has methodological and computational benefits to be seen later. This approach takes inspiration from the work of~\cite{luo2022improving}, which similarly uses conditional calibration to improve the power of the knockoff filter~\citep{barber2015controlling,candes2018panning}. We draw a distinction with their method, as ours applies in a general multiple testing framework---boosting the power of the knockoff filter is a specific application of e-BH-CC, as detailed in Section \ref{sec:example-knockoffs}.

\subsubsection{Evaluating the numerator}

As aluded to previously, the only unknown quantity of $e^\boost_j$ is the numerator (the denominator is directly a function of the original e-values). To approach evaluating the numerator, we observe the following equivalences of events:
\begin{equation}
\begin{aligned}\label{eq:evaluate_fast_coef}
&\text{when }\phi_j(\hat c_j;S_j)\le 0: \quad 
\bigg\{\hat c_j e_j \ge \frac{m}{\alpha | \hat\cR_j (\be)| }\bigg\}\iff
\bigg\{\hat c_j\ge \frac{m}{\alpha | \hat\cR_j (\be)| }/ e_j\bigg\}\iff
\{ \phi_j(\tilde c_j; S_j) \le 0\},\\
&\text{when }\phi_j(\hat c_j;S_j)> 0: \quad 
\bigg\{\hat c_j e_j > \frac{m}{\alpha | \hat\cR_j (\be)| }\bigg\}\iff
\bigg\{\hat c_j> \frac{m}{\alpha | \hat\cR_j (\be)| }/ e_j\bigg\}\iff
\{ \phi_j(\tilde c_j; S_j) \le 0\},
\end{aligned}
\end{equation}
where $\tilde c_j   = \frac{m}{\alpha | \hat\cR_j (\be)| }/ e_j$. The last event
$\{\phi_j(\tilde c_j; S_j) \le 0\}$ in both cases may confuse the astute reader, who will 
recall that $\phi_j(c;S_j)$ is an expression with $c$ inside the conditional expectation.
By evaluating $\phi_j(\cdot;S_j)$ at the random $\tilde c_j$, we mean to evaluate it at the \emph{value taken} by $\tilde c_j$, rather than substituting $c$ with $\tilde c_j$ in \eqref{eq:phi_func}. 
This is equivalent to evaluating
\begin{equation}\label{eq:evaluate_fast_coef_explicit}
  \EE\left[
    \frac m \alpha \cdot \frac{ \indc{ (\frac{m}{\alpha | \hat\cR_j (\be)| }/ e_j) \te_j \ge \frac{m}{\alpha |\hat\cR_j(\tilde \be )|} } }{|\hat \cR_j(\tilde \be )|} - \tilde e_j \bigggiven S_j, \be
  \right]
\end{equation}
with the expectation over the distribution $\tilde \be \cond S_j, \be$.
{In~\eqref{eq:evaluate_fast_coef_explicit}, we can write the resample 
$\bm \te = f(S_j, U_j)$
for some function $f$ and a uniform ramdom 
variable $U_j$ that is independent of everything else.
From this representation, we can clearly see that $\tilde \be \indep \be _j \cond S_j$, and 
the conditional expectation \eqref{eq:evaluate_fast_coef_explicit} collapses to only conditioning on $S_j$.}
Equations \eqref{eq:evaluate_fast_coef} and \eqref{eq:evaluate_fast_coef_explicit} imply that we can evaluate the numerator by estimating the conditional mean
(given $S_j$ and $\be$) of 
\begin{equation}
  \frac{m}{\alpha }\cdot \frac{\indc{ \tilde e_j/e_j \ge |\hat \cR_j( \be )|/|\hat \cR_j(\tilde \be )|  }}{|\hat \cR_j(\tilde \be )|}.
\end{equation}
Assume we have $K$ i.i.d.~resamples from $\be \cond S_j$. 
From each resample $\tilde \be^{(k)}, k\in[K]$, we can compute 
\begin{equation}
  \label{eq:mc_samples}
  E_{j,k} = \frac{m}{\alpha}\cdot \frac{\indc{ \tilde e_j^{(k)}/e_j \ge |\hat \cR_j( \be )|/|\hat \cR_j(\tilde \be^{(k)} )|  }}{|\hat \cR_j(\tilde \be^{(k)} )|}
\end{equation} 
thereby giving us $K$ samples 
\begin{equation}
  E_{j,1}, E_{j,2}, \dots, E_{j,K} \stackrel{\mathrm{i.i.d.}}{\sim} \frac{m}{\alpha}\cdot \frac{\indc{ \tilde e_j/e_j \ge |\hat \cR_j( \be )|/|\hat \cR_j(\tilde \be )|  }}{|\hat \cR_j(\tilde \be )|} \cond S_j.
\end{equation}
The typical Monte-Carlo estimator for $\phi_j(\tilde c_j;S_i)$ is the average of the samples $\overline E_{j,K} \coloneqq \frac 1K \sum_{k=1}^K E_{j,k}$. 
When $K$ is large, giving us arbitrarily precise Monte-Carlo estimation, 
we can replace $\indc{\phi_j(\tilde c_j ; S_j) \le 0}$ with $\indc{ \overline E_{j,K} \le 0}$ in the construction of $e_j^\boost$ with no repercussions. 

When $K$ is instead chosen with regard to computational budget restrictions, then we may experience Monte-Carlo estimation error. 
The goal is then to control the effect of such error on the resulting FDR of the overall procedure. 
Importantly, we are more preoccupied with having confidence in the \emph{sign} of $\phi_j(\tilde c_j; S_j)$ 
rather than its value. 
Therefore, we can use confidence intervals to control the error from estimating the sign, which filters through as an additive penalty to the resulting FDR.

For some fixed $K \in \ZZ^+$ and for each $j$,
produce $K$ i.i.d.~resamples $\be^{(1)}, \dots, \be^{(K)} $  conditional on $S_j$, 
and compute $E_{j,1},\dots,E_{j,K}$ according to~\eqref{eq:mc_samples}.
Define $\alpha_{\ci} = \alpha_0 |\cR^\ebh(\be)| /m$, 
for some $\alpha_0 \in (0,1)$ corresponding to the Monte-Carlo 
error budget. 
Using the observations $E_{j,1}, \dots, E_{j,K}$, 
we construct a $(1-\alpha_{\ci})$-coverage confidence interval 
$C_{j,K} \coloneqq C_{j,K}(E_{j,1},\dots, E_{j,K})$  
such that 
\begin{equation}
  \PP\big( \phi_j(\tilde c; S_j)  \in C_{j,K}  \cond S_j, \be\big ) \ge 1-\alpha_\ci.
\end{equation}
Let $U_{j,K}$ be the upper endpoint of $C_{j,K}$. 
Define the CI-approximated boosted e-values
\begin{equation}
  \label{eq:ci_boosted_eval}
  e^{\boost, \ci}_j = m\cdot \frac{ \indc{U_{j,K} \le 0} }{\alpha |\hat \cR_j(\bm e)|}.
\end{equation}
We summarize the above procedure in Algorithm~\ref{alg:ebh_cc_ci}, 
and formalize the validity of the CI-approximated boosted e-values 
in Proposition~\ref{prop:mc_error}.
\begin{algorithm}[ht]
\caption{e-BH-CC with CI-approximated boosted e-values}
\label{alg:ebh_cc_ci}
\DontPrintSemicolon
\KwIn{e-values $e_1,e_2,\ldots,e_m$; sufficient statistics $S_1,S_2,\ldots,S_m$; 
target FDR level $\alpha$; number of Monte-Carlo samples $K$;  
Monte-Carlo error budget $\alpha_0$.}
$\alpha_\ci \leftarrow \alpha_0 |\cR^\ebh(\be)|/m$.\;
\For{$j \in [m]$}{
 Generate $\tilde{\be}^{(1)}, \ldots, \tilde{\be}^{(K)} \stackrel{\text{i.i.d.}}{\sim} \be \given S_j$.\;
 Compute $E_{j,1},\ldots,E_{j,K}$ according to~\eqref{eq:mc_samples}.\;
 Construct a $(1-\alpha_{\ci})$ confidence interval $C_{j,K}$ for 
 $\phi_j(\tilde{c}_j;S_j)$.\;
 Construct the CI-approximated boosted e-values as in~\eqref{eq:ci_boosted_eval}.
}
Apply e-BH to $(\be_1^{\boost,\ci},\ldots,\be^{\boost,\ci}_m)$ at level $\alpha$ and 
obtain the rejection set $\cR^\ebh(\be^{\boost,\ci})$.\;
\KwOut{the rejection set $\cR^\ebh(\be^{\boost,\ci})$.}
\end{algorithm}

\begin{proposition}
  \label{prop:mc_error}
  Suppose $\be = (e_1,\ldots,e_m)$ are generalized e-values. 
  Running e-BH on the collection of CI-approximated 
  boosted e-values $ \be^{\boost, \ci }=  \{e^{\boost, \ci}_j \}_{j \in [m]}$ 
  defined in~\eqref{eq:ci_boosted_eval} with the target FDR level $\alpha$ and 
  the Monte-Carlo error budget $\alpha_0$, we have 
  $$
    \fdr\big[\cR^\ebh(\be^{\boost, \ci})\big] \le \alpha+\alpha_0.
  $$
\end{proposition}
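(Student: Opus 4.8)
The plan is to replay the FDR analysis of e-BH from~\eqref{eq:fdr_ineq} applied to the numbers $\be^{\boost,\ci}$, while tracking the extra slack created by replacing the exact indicator $\indc{\phi_j(\tilde c_j;S_j)\le 0}$ with its Monte-Carlo surrogate $\indc{U_{j,K}\le 0}$. First I would note that the ``step (a)'' bound $\indc{X\ge t}\le X/t$ is purely deterministic, so it applies verbatim to $\be^{\boost,\ci}$ regardless of whether these are valid e-values; combined with the self-consistency of e-BH ($j\in\cR^\ebh(\be^{\boost,\ci})$ iff $e^{\boost,\ci}_j\ge \tfrac{m}{\alpha|\cR^\ebh(\be^{\boost,\ci})|}$), this gives
\begin{equation}
\fdr\big[\cR^\ebh(\be^{\boost,\ci})\big]=\sum_{j\in\cH_0}\EE\!\left[\frac{\indc{j\in\cR^\ebh(\be^{\boost,\ci})}}{|\cR^\ebh(\be^{\boost,\ci})|\vee 1}\right]\le\frac{\alpha}{m}\sum_{j\in\cH_0}\EE\big[e^{\boost,\ci}_j\big].
\end{equation}
So it suffices to show $\sum_{j\in\cH_0}\EE[e^{\boost,\ci}_j]\le m(1+\alpha_0/\alpha)$, i.e.\ that the CI-approximated boosted e-values satisfy the generalized e-value inequality up to an extra total budget of $m\alpha_0/\alpha$ in the sum of null expectations.

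Second, I would compare $e^{\boost,\ci}_j$ with the exact boosted e-value $e^\boost_j$ of~\eqref{eq:boosted_eval}. Write $\cG_j\coloneqq\{\phi_j(\tilde c_j;S_j)\in C_{j,K}\}$ for the coverage event, where $\tilde c_j=\tfrac{m}{\alpha|\hat\cR_j(\be)|}/e_j$ is the realized evaluation point. On $\cG_j$, since $U_{j,K}$ is the upper endpoint of $C_{j,K}$ we have $\phi_j(\tilde c_j;S_j)\le U_{j,K}$, so $U_{j,K}\le 0$ forces $\phi_j(\tilde c_j;S_j)\le 0$; by the equivalences in~\eqref{eq:evaluate_fast_coef} this is exactly the event on which the indicator defining $e^\boost_j$ in~\eqref{eq:boosted_eval} equals $1$, whence $e^\boost_j=\tfrac{m}{\alpha|\hat\cR_j(\be)|}=e^{\boost,\ci}_j$; and if $U_{j,K}>0$ then $e^{\boost,\ci}_j=0\le e^\boost_j$. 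Hence $e^{\boost,\ci}_j\indc{\cG_j}\le e^\boost_j$, and since $e^{\boost,\ci}_j\le\tfrac{m}{\alpha|\hat\cR_j(\be)|}$ always, we obtain the pointwise bound
\begin{equation}
e^{\boost,\ci}_j\le e^\boost_j+\frac{m}{\alpha\,|\hat\cR_j(\be)|}\,\indc{\cG_j^c}.
\end{equation}
Here I would also invoke the remark after~\eqref{eq:evaluate_fast_coef_explicit}, namely that the resamples can be written $\tilde\be=f(S_j,U_j)$ so that $\tilde\be\indep e_j\mid S_j$, which is what makes $\phi_j(\tilde c_j;S_j)$ well defined as a conditional expectation evaluated at the data-dependent point $\tilde c_j$.

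Third, I would bound the error term. Conditioning on $(\be,S_j)$, the quantities $|\hat\cR_j(\be)|$, $|\cR^\ebh(\be)|$, and $\alpha_\ci=\alpha_0|\cR^\ebh(\be)|/m$ are fixed, and the coverage guarantee of $C_{j,K}$ gives $\PP(\cG_j^c\mid\be,S_j)\le\alpha_\ci$; using $\hat\cR_j(\be)\supseteq\cR^\ebh(\be)$, hence $|\hat\cR_j(\be)|\ge|\cR^\ebh(\be)|\vee 1$, the tower property and pulling the $\be$-measurable terms out of the inner conditional expectation yield
\begin{equation}
\EE\!\left[\frac{m}{\alpha|\hat\cR_j(\be)|}\indc{\cG_j^c}\right]=\EE\!\left[\frac{m}{\alpha|\hat\cR_j(\be)|}\,\PP\big(\cG_j^c\mid\be,S_j\big)\right]\le\EE\!\left[\frac{\alpha_0\,|\cR^\ebh(\be)|}{\alpha\,(|\cR^\ebh(\be)|\vee 1)}\right]\le\frac{\alpha_0}{\alpha}.
\end{equation}
Combining with Theorem~\ref{thm:eval_valid}, which guarantees $\sum_{j\in\cH_0}\EE[e^\boost_j]\le m$ since the exact boosted e-values are generalized e-values, and with $|\cH_0|\le m$, we get $\sum_{j\in\cH_0}\EE[e^{\boost,\ci}_j]\le m+m\alpha_0/\alpha$; substituting into the first display gives $\fdr[\cR^\ebh(\be^{\boost,\ci})]\le\alpha+\alpha_0$.

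The part I expect to be most delicate is the bookkeeping of the conditioning in the second and third steps: the interval $C_{j,K}$ (and hence $U_{j,K}$ and $\cG_j$) depends on $\be$ through both $e_j$ and $|\hat\cR_j(\be)|$, so one must carefully use the \emph{conditional} coverage statement $\PP(\phi_j(\tilde c_j;S_j)\in C_{j,K}\mid\be,S_j)\ge 1-\alpha_\ci$ and check that $|\hat\cR_j(\be)|$ and $|\cR^\ebh(\be)|$ are measurable with respect to the conditioning $\sigma$-field so that they factor out. The other point requiring care is the identity $e^{\boost,\ci}_j\indc{\cG_j}\le e^\boost_j$, which rests on the exact equivalence~\eqref{eq:evaluate_fast_coef} between the sign of $\phi_j$ at the data-dependent point $\tilde c_j$ and the indicator appearing in~\eqref{eq:boosted_eval}; everything else is the standard e-BH argument together with a union over $j\in\cH_0$.
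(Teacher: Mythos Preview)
Your proof is correct and follows essentially the same route as the paper's: split $e_j^{\boost,\ci}$ into a Monte-Carlo error piece (bounded by $\alpha_0/\alpha$ via the conditional coverage of $C_{j,K}$ together with $|\hat\cR_j(\be)|\ge|\cR^\ebh(\be)|\vee 1$) and a ``good'' piece dominated by the exact boosted e-value, then apply the e-BH inequality. The only cosmetic difference is that you bound the good piece by $e_j^\boost$ and invoke Theorem~\ref{thm:eval_valid}, whereas the paper re-derives the bound $\EE\big[\tfrac{m}{\alpha|\hat\cR_j(\be)|}\indc{\phi_j(\tilde c_j;S_j)\le 0}\big]\le\EE[e_j]$ inline by replaying the two-case limit argument from that theorem's proof.
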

If we replace the $(1-\alpha_{\ci})$ confidence interval with 
an asymptotic $(1-\alpha_\ci)$ confidence interval, we obtain 
instead asymptotic FDR control, as formalized by the 
following corollary.
\begin{corollary}\label{cor:asymptotic_mc_error}
  Replacing the $(1-\alpha_\ci)$ confidence interval 
  in Proposition~\ref{prop:mc_error} with an {asymptotic} $(1-\alpha_{\ci})$ confidence interval leads to the analogous conclusion that 
  $$
    \lim_{K\to\infty}\fdr\big[\cR^\ebh(\be^{\boost, \ci})\big] 
    \le \alpha+\alpha_0,
  $$
  where $K$ is the number of resamples from $\be \cond S_j$.
\end{corollary}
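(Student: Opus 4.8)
The plan is to re-run the proof of Proposition~\ref{prop:mc_error} essentially verbatim, isolate the single step where the \emph{finite-sample} coverage of $C_{j,K}$ is used, and replace it by the asymptotic coverage guarantee together with one application of reverse Fatou to interchange $\limsup_{K\to\infty}$ with an expectation.

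First I would recall the skeleton of the proof of Proposition~\ref{prop:mc_error}. Running e-BH on $\be^{\boost,\ci}$ and invoking the deterministic inequality $\indc{X\ge t}\le X/t$ exactly as in step (a) of~\eqref{eq:fdr_ineq} gives
\[
  \fdr\big[\cR^\ebh(\be^{\boost,\ci})\big] \;\le\; \frac{\alpha}{m}\sum_{j\in\cH_0}\EE\big[e^{\boost,\ci}_j\big],
\]
so the whole statement reduces to controlling $\sum_{j\in\cH_0}\EE[e^{\boost,\ci}_j]$. Next, by the event equivalences in~\eqref{eq:evaluate_fast_coef} the oracle boosted e-value satisfies $e^\boost_j = \frac{m}{\alpha|\hat\cR_j(\be)|}\indc{\phi_j(\tilde c_j;S_j)\le 0}$, and since $U_{j,K}$ is the upper endpoint of $C_{j,K}$, on the coverage event $\{U_{j,K}\le 0\}$ forces $\{\phi_j(\tilde c_j;S_j)\le 0\}$. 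Hence pointwise
\[
  e^{\boost,\ci}_j - e^\boost_j \;=\; \frac{m}{\alpha|\hat\cR_j(\be)|}\Big(\indc{U_{j,K}\le 0}-\indc{\phi_j(\tilde c_j;S_j)\le 0}\Big) \;\le\; \frac{m}{\alpha|\hat\cR_j(\be)|}\,\indc{\phi_j(\tilde c_j;S_j)\notin C_{j,K}}.
\]
Conditioning on $(S_j,\be)$, using $|\cR^\ebh(\be)|\le|\hat\cR_j(\be)|$ and the choice $\alpha_\ci=\alpha_0|\cR^\ebh(\be)|/m$, the finite-sample bound $\PP(\phi_j(\tilde c_j;S_j)\notin C_{j,K}\mid S_j,\be)\le\alpha_\ci$ gives $\EE[e^{\boost,\ci}_j]\le\EE[e^\boost_j]+\alpha_0/\alpha$; summing over $j\in\cH_0$, using $\sum_{j\in\cH_0}\EE[e^\boost_j]\le m$ from Theorem~\ref{thm:eval_valid} and $|\cH_0|\le m$, delivers the $\alpha+\alpha_0$ bound.

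For the corollary the only change is in the coverage step: an asymptotic $(1-\alpha_\ci)$ confidence interval guarantees only $\limsup_{K\to\infty}\PP(\phi_j(\tilde c_j;S_j)\notin C_{j,K}\mid S_j,\be)\le\alpha_\ci$. Since the prefactor is uniformly bounded, $0\le \frac{m}{\alpha|\hat\cR_j(\be)|}\le\frac{m}{\alpha}$ for every $K$, reverse Fatou's lemma lets me pass $\limsup_{K\to\infty}$ through the outer expectation, so that $\limsup_{K\to\infty}\EE[e^{\boost,\ci}_j]\le\EE[e^\boost_j]+\alpha_0/\alpha$ just as before. Combining with the first display and using subadditivity of $\limsup$ over the finite sum indexed by $\cH_0$ then gives $\limsup_{K\to\infty}\fdr[\cR^\ebh(\be^{\boost,\ci})]\le\alpha+\alpha_0$, which is the asserted conclusion.

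The main (and essentially only) obstacle is the limit--expectation interchange, which is precisely why the uniform envelope $e^{\boost,\ci}_j\le \frac{m}{\alpha|\hat\cR_j(\be)|}\le\frac{m}{\alpha}$ is needed; the boundary cases ($\cR^\ebh(\be)=\varnothing$, $e_j=0$, and the split between the two branches of~\eqref{eq:boosted_eval}) are handled exactly as in Proposition~\ref{prop:mc_error}. I would also make sure that an ``asymptotic $(1-\alpha_\ci)$ confidence interval'' is understood in the sense of conditional coverage given $(S_j,\be)$, which is the natural notion here since the resamples $\tilde\be^{(k)}$ and hence $C_{j,K}$ are constructed conditionally on $S_j$; this is what allows the reverse-Fatou argument to be applied to the conditional miscoverage probabilities as written above.
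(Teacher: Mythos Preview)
Your proof is correct and follows essentially the same approach as the paper: isolate the miscoverage contribution from the rest of the bound on $\EE[e_j^{\boost,\ci}]$ (which the paper does by splitting on the sign of $\phi_j(\tilde c_j;S_j)$, and you do equivalently by comparing to $e_j^\boost$ and invoking Theorem~\ref{thm:eval_valid}), then pass the limit through the expectation using the uniform envelope $m/\alpha$. The only stylistic difference is that the paper appeals to dominated convergence (writing $\lim$), whereas you use reverse Fatou with $\limsup$; your version is slightly more careful since an asymptotic CI only guarantees a $\limsup$ bound on the conditional miscoverage probability.
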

The proof of Proposition~\ref{prop:mc_error}, as well as that of Corollary~\ref{cor:asymptotic_mc_error},
can be found in Appendix~\ref{appd:proof_mc_error}.

\subsubsection{Online Monte-Carlo estimation with error control}

When $K$ is fixed ahead of time, the user may experience a disappointing event: the Monte-Carlo estimate $\overline E_{j,K}$ is negative, yet the confidence interval contains zero, making the resulting boosted e-value zero as well. Given this, it would be quite tempting to continue resampling from $\be \cond S_j$, hopefully until the confidence interval is finally contained within ${\RR^{\le 0}}$.
Unfortunately, this adaptive mechanism will break the error control of the confidence interval, which has downstream implications for FDR control.

To address this issue in general,  
we resort to the \emph{anytime-valid confidence sequence} (AVCS) 
(see, e.g.,~\citet{ramdas2023game} for a review). A $(1-\alpha)$-coverage 
AVCS for some parameter $\theta$ is a sequence of confidence intervals $\{(L_k, U_k)\}_{k\ge1}$ such that
\begin{equation}
  \PP \big(\forall k \in \NN, \theta \in  [L_k , U_k]\big) \ge 1-\alpha.
\end{equation}
This is in contrast to the common confidence interval, where the ``for all $k$'' qualifier is outside of the probability. 
Usually, the AVCS is used as an online version of a confidence interval: at each time step $k$, 
a new sample $X_k$ is added to the existing sequence of samples to construct the latest iterate 
of the confidence interval $[L_k, U_k]$ in a way such that the miscoverage probability of this 
process at any point is at most $\alpha$.

Using the AVCS, we can ignore $K$ and replace the confidence interval in Proposition~\ref{prop:mc_error}
with its anytime-valid variant:
for each $j \in [m]$, accrue samples $E_{j,1}, E_{j,2}, \dots \cond S_j$ by resampling $\be^{(1)}, \be^{(2)}, \dots\cond S_j$ 
and using \eqref{eq:mc_samples}. Define $\alpha_{\avcs} = \alpha_0 |\cR^\ebh(\be)| /m$ for some $\alpha_0 \in (0,1)$ 
reflecting the Monte-Carlo error budget. 
We then construct a $(1-\alpha_\avcs)$-coverage anytime-valid confidence sequence 
$\{C_{j,k} \coloneq C_{j,k}(E_{j,1},\dots, E_{j,k})\}_{k\ge 1}$, 
and define the AVCS-approximated boosted e-values to be
  \begin{equation}
    \label{eq:avcs_boosted_eval}
    e^{\boost, \avcs}_j = m\cdot \frac{ \indc{\exists k \in \NN \colon U_{j,k} \le 0} }{\alpha |\hat \cR_j(\bm e)|},
  \end{equation}
  where $U_{j,k}$ is the upper endpoint of $C_{j,k}$. Algorithm~\ref{alg:ebh_cc_acvs} summarizes 
  the procedure of implementing e-BH-CC with AVCS-approximated boosted e-values, and 
  Proposition~\ref{prop:avcs_mc_error} characterizes the FDR control of the resulting rejection set.
  \begin{algorithm}[ht]
\caption{e-BH-CC with ACVS-approximated boosted e-values}
\label{alg:ebh_cc_acvs}
\DontPrintSemicolon
\KwIn{e-values $e_1,e_2,\ldots,e_m$; sufficient statistics $S_1,S_2,\ldots,S_m$; 
target FDR level $\alpha$; Monte-Carlo error budget $\alpha_0$.}
$\alpha_\ci \leftarrow \alpha_0 |\cR^\ebh(\be)|/m$.\;
\For{$j \in [m]$}{
 \textbf{Initialization:} $k \leftarrow 0$ and $C_{j,0} = \RR$.\; 
 \While{$U_{j,k} >0$ \textnormal{and} $L_{j,k}\le0$}{
 $k \leftarrow k+1$.\;
 Generate $\tilde{\be}^{(k)}{\sim} \be \given S_j$.\;
 Compute $E_{j,k}$ according to~\eqref{eq:mc_samples}.\;
 Construct the $k$th interval $C_{j,k}$ in the $(1-\alpha_{\avcs})$ confidence sequence for $\phi_j(\tilde{c}_j;S_j)$.\;
 }
  Construct the AVCS-approximated boosted e-values $e_j^{\boost,\avcs} = m\cdot \frac{\ind\{U_{j,k}\le 0\}}{\alpha |\hcR_j(\be)|}$.
}
Apply e-BH to $(\be_1^{\boost,\avcs},\ldots,\be^{\boost,\avcs}_m)$ and 
obtain the rejection set $\cR^\ebh(\be^{\boost,\avcs})$.\;
\KwOut{the rejection set $\cR^\ebh(\be^{\boost,\avcs})$.}
\end{algorithm}

\begin{proposition}\label{prop:avcs_mc_error}
  Suppose $\be = (e_1,\ldots,e_m)$ are generalized e-values. 
  Running e-BH on the collection of AVCS-approximated 
  boosted e-values  $ \be^{\boost, \avcs }=  \{e^{\boost, \avcs}_j \}_{j \in [m]}$
  with the target FDR level $\alpha$ and the target Monte-Carlo error budget $\alpha_0$, 
  we have
  $$
    \fdr\big[\cR^\ebh(\be^{\boost, \avcs})\big] \le \alpha+\alpha_0.
  $$
\end{proposition}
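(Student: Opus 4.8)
The plan is to show that the AVCS-approximated boosted e-values of~\eqref{eq:avcs_boosted_eval} behave like generalized e-values with an inflated budget, namely $\sum_{j\in\cH_0}\EE\big[e^{\boost,\avcs}_j\big]\le m(\alpha+\alpha_0)/\alpha$. Once this is established, feeding $\be^{\boost,\avcs}$ into e-BH at level $\alpha$ and rerunning the two displayed inequalities in~\eqref{eq:fdr_ineq}---whose only input beyond the deterministic indicator trick is a bound on $\sum_{j\in\cH_0}\EE[\cdot]$---yields $\fdr\big[\cR^\ebh(\be^{\boost,\avcs})\big]\le \frac{\alpha}{m}\cdot\frac{m(\alpha+\alpha_0)}{\alpha}=\alpha+\alpha_0$. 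So the whole argument reduces to the budget bound, which I would obtain by comparing $e^{\boost,\avcs}_j$ to the oracle boosted e-value $e^\boost_j$ of~\eqref{eq:boosted_eval} on a high-probability coverage event and using a crude bound on its complement.

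Fix $j\in\cH_0$ and work conditionally on $(S_j,\be)$. The resamples $\tilde\be^{(k)}$ are i.i.d.\ draws from $\be\mid S_j$, so the $E_{j,k}$ of~\eqref{eq:mc_samples} are i.i.d.\ given $(S_j,\be)$ with common mean equal to the value of $\phi_j(\tilde c_j;S_j)$, where $\tilde c_j=\tfrac{m}{\alpha|\hat\cR_j(\be)|}/e_j$ is $\be$-measurable (this is exactly the content of~\eqref{eq:evaluate_fast_coef_explicit}). Let $\cG_j=\{\phi_j(\tilde c_j;S_j)\in C_{j,k}\text{ for all }k\ge1\}$ be the event that the confidence sequence covers at every step. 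Since $\alpha_\avcs=\alpha_0|\cR^\ebh(\be)|/m$ is itself $\be$-measurable, the anytime-validity guarantee applies in this conditional world and gives $\PP(\cG_j^c\mid S_j,\be)\le \alpha_\avcs$.

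On $\cG_j$ I claim $e^{\boost,\avcs}_j\le e^\boost_j$. If there is a $k$ with $U_{j,k}\le0$, then on $\cG_j$ we have $\phi_j(\tilde c_j;S_j)\le U_{j,k}\le0$; by the equivalences in~\eqref{eq:evaluate_fast_coef}, $\{\phi_j(\tilde c_j;S_j)\le 0\}$ is precisely the event on which the oracle numerator indicator in~\eqref{eq:boosted_eval} equals $1$ (in either branch of that display), so $e^\boost_j=\tfrac{m}{\alpha|\hat\cR_j(\be)|}=e^{\boost,\avcs}_j$; otherwise $e^{\boost,\avcs}_j=0\le e^\boost_j$. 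Hence $\EE\big[e^{\boost,\avcs}_j\ind_{\cG_j}\big]\le\EE[e^\boost_j]$. On $\cG_j^c$ I only use the deterministic bound $0\le e^{\boost,\avcs}_j\le \tfrac{m}{\alpha|\hat\cR_j(\be)|}$, so
\[
\EE\big[e^{\boost,\avcs}_j\ind_{\cG_j^c}\big]
\le\EE\Big[\tfrac{m}{\alpha|\hat\cR_j(\be)|}\,\PP(\cG_j^c\mid S_j,\be)\Big]
\le\EE\Big[\tfrac{m}{\alpha|\hat\cR_j(\be)|}\cdot\tfrac{\alpha_0|\cR^\ebh(\be)|}{m}\Big]
\le\frac{\alpha_0}{\alpha},
\]
where the last step uses $|\cR^\ebh(\be)|\le|\hat\cR_j(\be)|=|\cR^\ebh(\be)\cup\{j\}|$. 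Summing over $j\in\cH_0$, invoking Theorem~\ref{thm:eval_valid} to get $\sum_{j\in\cH_0}\EE[e^\boost_j]\le m$, and using $|\cH_0|\le m$ gives the desired budget bound $\sum_{j\in\cH_0}\EE[e^{\boost,\avcs}_j]\le m+m\alpha_0/\alpha$. (Corollary~\ref{cor:asymptotic_mc_error} would follow from the same decomposition with an asymptotically valid confidence sequence, whose coverage bound holds in the limit $K\to\infty$.)

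The main obstacle I anticipate is the conditioning bookkeeping: I need that, conditionally on $(S_j,\be)$, the $E_{j,k}$ are i.i.d.\ with mean exactly the oracle quantity $\phi_j(\tilde c_j;S_j)$ rather than something off by a $\be$-dependent shift---this is where the representation $\tilde\be=f(S_j,U_j)$ and the collapse of the conditioning noted after~\eqref{eq:evaluate_fast_coef_explicit} are used---and that the random miscoverage level $\alpha_\avcs$ is measurable with respect to the conditioning $\sigma$-field, so that the anytime-valid coverage statement can be applied verbatim. Everything else (the comparison with $e^\boost_j$, the ratio $|\cR^\ebh(\be)|/|\hat\cR_j(\be)|\le1$, and the final summation) is routine.
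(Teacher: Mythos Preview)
Your proposal is correct and follows essentially the same route as the paper's proof. The only cosmetic difference is the partition: the paper splits $\ind\{\exists k:U_{j,k}\le 0\}$ according to the sign of $\phi_j(\tilde c_j;S_j)$ (the ``bad'' piece being the one-sided miscoverage $\{\exists k:U_{j,k}<\phi_j(\tilde c_j;S_j)\}$), whereas you split by the full coverage event $\cG_j$; both reduce the ``good'' piece to the oracle quantity $\tfrac{m}{\alpha|\hat\cR_j(\be)|}\ind\{\phi_j(\tilde c_j;S_j)\le 0\}$, which is exactly $e_j^\boost$, and then invoke Theorem~\ref{thm:eval_valid} (the paper cites the identical computation inside the proof of Proposition~\ref{prop:mc_error}).
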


In practice, we can replace the $(1-\alpha_\avcs)$-coverage AVCS by an asymptotic $(1-\alpha_\avcs)$-coverage AVCS~\citep{waudby2021time} 
in Proposition \ref{prop:avcs_mc_error} when $k$ is large enough. We delegate the definition of an asymptotic AVCS and the proof 
of Proposition \ref{prop:avcs_mc_error} 
in Appendix~\ref{appd:proof_avcs_mc_error}.

\subsubsection{Filtering the (potential) rejection set} 

One additional improvement, in both computational feasibility and power, is to restrict the focus onto potentially rejectable hypotheses. Since we have to repeat the process of resampling and Monte-Carlo estimation for each $j\in [m]$, we can save an entire iteration of computation by choosing to not boost specific e-values.

For example, one rudimentary filter is to avoid the indices $\{j\colon e_j = 0\}$. For all such $j$, the boosted e-value $e_j^\boost$ will be zero directly by construction \eqref{eq:boosted_eval} regardless of the crystallization of the critical value $\hat c_j$. Sections \ref{sec:example-knockoffs} and \ref{sec:example-conformal} contain two examples of problem settings where the corresponding e-values can be zero with positive probability. More generally, any e-value that is designed as an ``all-or-nothing bet''~\citep{shafer2021testing} takes value $1/p$ with probability $p > 0$ and $0$ otherwise.

Another possible strategy to filter out low-potential discoveries is to use preliminary test statistics, such as p-values or linear model coefficients. Take, for example, the $m$-dimensional one-sided $z$-testing problem setting
(covered in-depth in Section \ref{sec:zstat}). Each hypothesis $H_j$  has a corresponding p-value $p_j$ and e-value $e_j$. To control the FDR at $\alpha$, the BH procedure rejects $p_j$ if it lies below ${\alpha \hat k}/{m}$ for some data-driven $\hat k \in [m]$; therefore, a prerequisite to rejecting $p_j$ is that it is at most $\alpha$. We can define a filter set $M\coloneqq \{j\colon  p_j \le\alpha \}$ and use it to construct \emph{masked} boosted e-values:
\begin{equation}\label{eq:masked_boosted_eval}
  e^{\boost, M}_j  \coloneqq  e^\boost_j \indc{j \in M}.
\end{equation}
Since $\EE[e^{\boost, M}_j  ] = \EE[e^\boost_j \indc{j \in M}] \le \EE[e^\boost_j]$, the masked e-values are also valid. 
To see specific examples of filters used in simulations, we direct the reader to the details of Sections \ref{sec:sim-parametric} and \ref{sec:sim-knockoffs}. 

For any mask $M \supseteq \cR(\be)$, using \eqref{eq:masked_boosted_eval}
gives us a collection of zeroed-out e-values, 
over which we can run e-BH and control the FDR. At this point, the reader may express concern that filtering may cause e-BH-CC to become overly conservative. We argue this is not the case, in the sense that $M \cap \cR(\be^\boost)    =  \cR(\be^{\boost, M})$ exactly. 
In other words, $j \in M$ will be in the filtered e-BH-CC rejection set $\cR(\be^{\boost, M})$ as long as it would have been rejected by e-BH-CC with no filtering. 
This is a straightforward observation stemming from the fact that the e-BH rejection threshold over $\be^{\boost, M}$, denoted $\hat t_\ebh(\be^{\boost, M})$, has the following characterization:
\begin{equation}
  \hat t_\ebh(\be^{\boost, M})\begin{cases}
    =  \frac{m}{\alpha |\cR(\be)|}  & \textnormal{when }\cR(\be)=\cR(\be^{\boost,M}),\\ 
    \le \frac{m}{\alpha (|\cR(\be)|+1)} &\textnormal{when }\cR(\be)\subsetneq\cR(\be^{\boost,M}),
  \end{cases}
\end{equation}
where we use the fact that $\cR(\be) \subseteq \cR(\be^{\boost,M})$.
The conclusion is trivial when $\cR(\be) = \cR(\be^{\boost,M})$,  
so we suppose instead that $\cR(\be) \subsetneq \cR(\be^{\boost,M})$.
Then for any $j\in \cR(\be)\cap M$, the boosted e-value $e_j^\boost$ 
takes magnitude at least $\frac{m}{\alpha (|\cR(\be)|+1)}$,  implying that $j \in \cR(\be^{\boost,M})$.
We can thus conclude that $\cR(\be^\boost) \cap M  \subseteq  \cR(\be^{\boost, M})$. 
Since the other direction of inclusion follows from the definition, we have $\cR(\be^\boost) \cap M =  \cR(\be^{\boost, M})$.

We also previously claimed that filtering improves the power of our method. First, we note that $M$ is generally designed to contain $\cR^\ebh(\be)$ in order to preserve the uniform power improvement. In addition, when implementing e-BH-CC in practice  with Monte-Carlo estimation, we use a confidence interval or AVCS and follow Proposition \ref{prop:mc_error} or \ref{prop:avcs_mc_error}, respectively. 
In both cases, the confidence level ($\alpha_\ci$ or $\alpha_\avcs$) has a dependence of $1/m$, where $m$ is the number of hypotheses (i.e., the number of confidence intervals or sequences that can err). However, when we employ a filter $M$, the number of boosted e-values we may incorrectly evaluate 
becomes $|M|\le m$. Therefore, we can replace $m$ with $|M|$ in $\alpha_\ci$ and $\alpha_\avcs$ while preserving the FDR control with additive Monte-Carlo error in Propositions \ref{prop:mc_error} and \ref{prop:avcs_mc_error}. The effect on the confidence mechanism is that it becomes less conservative, which leads to higher power.

We close this section by formalizing the effect of filtering on the FDR control guarantee of the e-BH-CC procedure,
whose proof can be found in Appendix~\ref{appd:filtering_proof}. 
\begin{proposition}
  \label{prop:filtering}
  Let $\cR(\be)$ be the original e-BH rejection set and $\cR(\be^\boost)$ be the boosted e-BH rejection set, both at level $\alpha \in (0,1)$. Then for all sets $\cS$ such that $\cR(\be)\subseteq \cS \subseteq \cR(\be^\boost)$, $\fdr[\cS]\le \alpha$.
\end{proposition}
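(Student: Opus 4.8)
The plan is to deduce Proposition~\ref{prop:filtering} from the masked-boosted-e-value construction~\eqref{eq:masked_boosted_eval} together with the FDR guarantee of e-BH for generalized e-values, by taking the target set $\cS$ itself as the mask. Concretely, fix any (possibly data-dependent) $\cS$ with $\cR(\be)\subseteq\cS\subseteq\cR(\be^\boost)$ and set $M\coloneqq\cS$, to which the filtering analysis applies since $M\supseteq\cR(\be)$. Because $0\le e_j^\boost\indc{j\in\cS}\le e_j^\boost$ holds pointwise, Theorem~\ref{thm:eval_valid} yields $\sum_{j\in\cH_0}\EE[e_j^{\boost,\cS}]\le\sum_{j\in\cH_0}\EE[e_j^\boost]\le m$, so $\be^{\boost,\cS}$ is a valid collection of (generalized) e-values, and running e-BH on it controls FDR at level $\alpha$ by \citet{wang2022false}. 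It therefore suffices to establish the pathwise set identity $\cR(\be^{\boost,\cS})=\cS$.

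First I would dispatch the two easy inclusions. We have $\cR(\be^{\boost,\cS})\subseteq\cS$ because an e-value equal to $0$ is never rejected by e-BH, so every rejected index $j$ has $e_j^{\boost,\cS}>0$ and hence $j\in\cS$. And $\cR(\be^{\boost,\cS})\subseteq\cR(\be^\boost)$ because $\be^{\boost,\cS}\le\be^\boost$ coordinatewise and e-BH is monotone: from the characterization $j\in\cR^\ebh(\be)\iff e_j\ge\frac{m}{\alpha|\cR^\ebh(\be)|}$, a coordinatewise decrease cannot enlarge $|\cR^\ebh|$ and hence cannot add indices. Together these give $\cR(\be^{\boost,\cS})\subseteq\cR(\be^\boost)\cap\cS$.

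The step with real content is the reverse inclusion $\cR(\be^\boost)\cap\cS\subseteq\cR(\be^{\boost,\cS})$, and here I would reuse the threshold computation already displayed in the ``Filtering the (potential) rejection set'' subsection with the generic mask specialized to $\cS$. Since every $j\in\cR(\be)\subseteq\cS$ has $e_j^{\boost,\cS}=e_j^\boost=\frac{m}{\alpha|\cR(\be)|}$ (from the proof of Theorem~\ref{thm:power-improvement}), we get $\cR(\be)\subseteq\cR(\be^{\boost,\cS})$; then a dichotomy on whether $\cR(\be^{\boost,\cS})=\cR(\be)$ or $\cR(\be^{\boost,\cS})\supsetneq\cR(\be)$ finishes the claim, using that the e-BH threshold over $\be^{\boost,\cS}$ equals $\frac{m}{\alpha|\cR(\be)|}$ in the first case (so a self-consistency argument rules out rejecting any index outside $\cR(\be)$) and is at most $\frac{m}{\alpha(|\cR(\be)|+1)}$ in the second, while by~\eqref{eq:boosted_eval} every $j\in\cR(\be^\boost)$ has $e_j^\boost\in\{\frac{m}{\alpha|\cR(\be)|},\frac{m}{\alpha(|\cR(\be)|+1)}\}$, both of which clear that threshold. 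The one thing to double-check is that the earlier filtering argument used nothing about the mask beyond its containing $\cR(\be)$, which is the case here.

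Combining the inclusions gives $\cR(\be^{\boost,\cS})=\cR(\be^\boost)\cap\cS=\cS$, the last equality because $\cS\subseteq\cR(\be^\boost)$. Finishing is then immediate: $\fdr[\cS]=\fdr[\cR(\be^{\boost,\cS})]\le\alpha$. I expect the only nontrivial obstacle to be that reverse inclusion---i.e., confirming that the filtering analysis shows the e-BH rejection set of the masked e-values is \emph{exactly} the intersection $\cR(\be^\boost)\cap M$, not merely contained in it---while the rest is bookkeeping around validity of the masked e-values and monotonicity of e-BH.
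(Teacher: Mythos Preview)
Your approach is correct but genuinely different from the paper's. You reduce the proposition to the FDR guarantee of e-BH applied to the masked collection $\be^{\boost,\cS}$, after establishing the pathwise identity $\cR(\be^{\boost,\cS})=\cS$. The paper instead goes directly at the FDR decomposition: it writes
\[
\fdr[\cS]=\sum_{j\in\cH_0}\EE\bigg[\frac{\indc{j\in\cS}}{|\cS\cup\{j\}|}\bigg]
\le \sum_{j\in\cH_0}\EE\bigg[\frac{\indc{j\in\cS}\,\indc{e_j^\boost\ge m/(\alpha|\hcR_j(\be)|)}}{|\cR(\be)\cup\{j\}|}\bigg]
\le \frac{\alpha}{m}\sum_{j\in\cH_0}\EE\big[\indc{j\in\cS}\,e_j^\boost\big]\le\alpha,
\]
using $\cR(\be)\subseteq\cS$ for the denominator bound and $\cS\subseteq\cR(\be^\boost)$ to force the inner indicator to equal $1$ whenever $j\in\cS$. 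This is four lines and avoids any set-identity verification. Your route, by contrast, yields the cleaner conceptual statement that every sandwiched $\cS$ is literally the e-BH output on some valid e-value collection, at the price of checking $\cR(\be^{\boost,\cS})=\cS$.

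One small wobble: your Case~1 explanation (``a self-consistency argument rules out rejecting any index outside $\cR(\be)$'') does not by itself deliver the needed inclusion $\cS\subseteq\cR(\be^{\boost,\cS})$. The clean way to dispatch Case~1 is to note it forces $\cS=\cR(\be)$: if some $j\in\cS\setminus\cR(\be)$ existed, then together with the $|\cR(\be)|$ e-values equal to $m/(\alpha|\cR(\be)|)$ you would have $|\cR(\be)|+1$ masked e-values at least $m/(\alpha(|\cR(\be)|+1))$, so e-BH would reject $|\cR(\be)|+1$ of them, contradicting Case~1. With that tweak your argument goes through cleanly.
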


\subsection{Multiple rounds of conditional calibration}

One benefit of boosting with conditional calibration is that it can be applied to any collection of e-values which admit a resampling scheme given a sufficient statistic. Recall that boosting is a deterministic algorithm which transforms $\be$ into $\be^\boost$, which are also valid e-values. To improve the power even further, one might consider boosting $\be^\boost$ with conditional calibration \emph{again}.

However, a reapplication of e-BH-CC on $\be^\boost$ would fail to make any more rejections, as the failed-to-reject e-values $e_j^\boost$ are zero. We have previously discussed the impossibility of boosting e-values when this happens. 
Therefore, we propose a different formulation of running ``multiple rounds'' of CC. 
For all $t \in \ZZ^{\ge 0}$, define the e-values in the $(t+1)$-th round:
$\forall j\in[m]$,
\$ 
e_j^{\boost, (t+1)} &\coloneqq  \frac{m}{\alpha |\hat \cR_j^{(t)}(\bm e)|} \cdot \indc{ \hat c_j ^{(t)} e_j \ge \frac{m}{\alpha |\hat \cR_j^{(0)}(\bm e)|} },
\$
where $\cR^{(t)} \coloneqq \cR^\ebh( \be ^{\boost, (t)})$
is the rejection set obtained from applying e-BH to the boosted e-values in the $t$-th round and
$\hcR_j^{(t)} \coloneqq \cR^{(t)} \cup \{j\}$; the critical value $\hat c_j^{(t)}$ is  
defined as 
\begin{align}
  \begin{split}
    &\hat c_j^{(t)} \coloneqq \sup\{c \colon \phi_j^{(t)}(c;S_j) \le0\},\\
    \text{where }&\phi_j^{(t)} (c;S_j) \coloneqq   \EE \left[\frac{m}{\alpha} \cdot \frac{\indc{c \te_j \ge \frac{m}{\alpha |\hat \cR_j^{(0)}(\bm \te)|}}}{|\hat \cR_j^{(t)}(\bm \te)|}  - \te_j \Bigggiven S_j \right].
  \end{split}
\end{align}
For the ease of presentation, we assume that  
$\phi_j^{(t)}(\hat c_j^{(t)};S_j) \le 0$ for all $t \in \ZZ^{\ge 0}$; 
when this is not true, we can simply replace the ``$\ge$'' with ``$>$'' in the definition of
$e_j^{\boost, (t+1)}$ as before. For $t=0$, we denote $\be^{\boost, (0)}$ to be the original e-values. 

The iterative loop for multiple CC rounds is as follows. At time $t$, with knowledge of the conditional distribution of $\be, \hat c_1^{(t-1)}, \dots, \hat c_m^{(t-1)} \cond S_j$, 
evaluate the new critical value $\hat c_j^{(t)}$ of $\phi_j^{(t)}(\cdot;S_j)$. 
Having obtained these critical values, we can make the $t$th iteration of conditionally calibrated boosted e-values, denoted $\be^{\boost, (t+1)}$. Specifically, the notation lines up with e-BH-CC when running the loop for the first iteration $t=0$, where $\hat c_1^{(t-1)}$ is taken to be 1.

To make the CC loop work as intended, we require the following assumption about these critical values, which was hidden in the high-level explantion:
\begin{assumption}
  \label{ass:recursive}
  For each $j\in[m]$ and $t \ge 1$, the joint distribution of $\be, \hat c_1^{(t-1)}, \dots, \hat c_m^{(t-1)} \cond S_j$ is known.
\end{assumption}
In most examples, these boosting factors are functions of $\be$ and $S_j$, so 
Assumption~\ref{ass:recursive} is automatically satisfied.
Under the assumption, the following proposition regarding the rejection set over the rounds is true,
and its proof appears in Appendix~\ref{appd:proof_recursive_refinement}.
\begin{proposition}
  \label{prop:recursive_refinement}
  For all $t \ge 0$, $\cR^{(t+1)} = \{j\in[m] \colon \hat c_j^{(t)} \ge \frac{m}{\alpha |\hcR_j^{(0)}(\be)|}/e_j\} \supseteq \cR^{(t)}$. 
\end{proposition}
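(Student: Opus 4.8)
The plan is an induction on $t \ge 0$ in which the inductive claim at stage $t$ bundles two assertions: (i) $\cR^{(t+1)} = A^{(t+1)}$, where $A^{(t+1)} \coloneqq \{j \in [m] : \hat c_j^{(t)} e_j \ge \tfrac{m}{\alpha |\hcR_j^{(0)}(\be)|}\}$ is exactly the set appearing in the statement after clearing denominators (with the convention that an index with $e_j = 0$ lies in neither side, which is immediate from \eqref{eq:boosted_eval}-type reasoning); and (ii) $\cR^{(t)} \subseteq \cR^{(t+1)}$. A preliminary observation is that, under Assumption~\ref{ass:recursive}, each $\cR^{(s)}$ is a fixed deterministic map of the input e-value vector, so any containment established for a generic input vector holds verbatim for the resampled vectors $\bm \te$ that appear inside $\phi_j^{(s)}$; I will use this transfer silently. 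The base case $t = 0$ is the instance with vacuous hypothesis, and (reading $\hat c_j^{(-1)} \coloneqq 1$) its containment part is precisely Theorem~\ref{thm:power-improvement}.

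The first step is a monotonicity lemma for the boosting factors: $\hat c_j^{(t)} \ge \hat c_j^{(t-1)}$ for every $j$ and every $t \ge 0$. For $t = 0$ this reads $\hat c_j^{(0)} \ge 1$ and follows by evaluating $\phi_j^{(0)}$ at $c = 1$ and applying $\indc{X \ge u} \le X/u$ to the indicator, exactly as in the proof of Theorem~\ref{thm:power-improvement}. For $t \ge 1$, I would compare $\phi_j^{(t)}(c; S_j)$ with $\phi_j^{(t-1)}(c; S_j)$: their integrands are identical except that the (outer) denominator is $|\hcR_j^{(t)}(\bm \te)|$ in the former and $|\hcR_j^{(t-1)}(\bm \te)|$ in the latter; the inductive hypothesis (ii) at round $t-1$, transferred to $\bm \te$, gives $\cR^{(t-1)}(\bm \te) \subseteq \cR^{(t)}(\bm \te)$, hence $|\hcR_j^{(t-1)}(\bm \te)| \le |\hcR_j^{(t)}(\bm \te)|$, and since the numerator is non-negative we obtain $\phi_j^{(t)}(c; S_j) \le \phi_j^{(t-1)}(c; S_j)$ for all $c$. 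Therefore $\{c : \phi_j^{(t)}(c; S_j) \le 0\} \supseteq \{c : \phi_j^{(t-1)}(c; S_j) \le 0\}$, and taking suprema yields $\hat c_j^{(t)} \ge \hat c_j^{(t-1)}$.

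Next I would establish (ii), then (i). For $j \in \cR^{(t)}$ we have $e_j^{\boost,(t)} > 0$, which by the defining formula for $e_j^{\boost,(t)}$ is equivalent to $\hat c_j^{(t-1)} e_j \ge \tfrac{m}{\alpha|\hcR_j^{(0)}(\be)|}$ (for $t = 0$ use instead $j \in \cR(\be)$, so $e_j \ge \tfrac{m}{\alpha|\cR(\be)|} = \tfrac{m}{\alpha|\hcR_j^{(0)}(\be)|}$, the equality because $\hcR_j^{(0)}(\be) = \cR^{(0)}$ when $j \in \cR^{(0)}$); combining with the monotonicity lemma gives $\hat c_j^{(t)} e_j \ge \tfrac{m}{\alpha|\hcR_j^{(0)}(\be)|}$, i.e.\ $j \in A^{(t+1)}$, so $\cR^{(t)} \subseteq A^{(t+1)}$. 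For (i): the inclusion $\cR^{(t+1)} \subseteq A^{(t+1)}$ is immediate since $j \in \cR^{(t+1)}$ forces $e_j^{\boost,(t+1)} > 0$, the defining event of $A^{(t+1)}$. For the reverse inclusion, note each coordinate of $\be^{\boost,(t+1)}$ is $0$ off $A^{(t+1)}$, equals $\tfrac{m}{\alpha|\cR^{(t)}|}$ for $j \in \cR^{(t)}$, and equals $\tfrac{m}{\alpha(|\cR^{(t)}|+1)}$ for $j \in A^{(t+1)} \setminus \cR^{(t)}$; splitting into the cases $j \in \cR^{(t)}$ and $j \notin \cR^{(t)}$ (the latter forcing $|A^{(t+1)}| \ge |\cR^{(t)}| + 1$ since $\cR^{(t)} \subseteq A^{(t+1)}$), one checks $e_j^{\boost,(t+1)} \ge \tfrac{m}{\alpha|A^{(t+1)}|}$ for every $j \in A^{(t+1)}$. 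Hence the $|A^{(t+1)}|$-th largest coordinate of $\be^{\boost,(t+1)}$ is at least $\tfrac{m}{\alpha|A^{(t+1)}|}$, so the e-BH index $k^*$ on $\be^{\boost,(t+1)}$ obeys $k^* \ge |A^{(t+1)}|$, i.e.\ $|\cR^{(t+1)}| \ge |A^{(t+1)}|$; together with $\cR^{(t+1)} \subseteq A^{(t+1)}$ this gives $\cR^{(t+1)} = A^{(t+1)}$, and with $\cR^{(t)} \subseteq A^{(t+1)}$ it closes the induction step, proving both (i) and (ii).

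The step I expect to require the most care is the recursion bookkeeping: claim (i) at round $t$ and the monotonicity lemma at round $t$ both rely on claim (ii) at earlier rounds being valid not only for the observed vector $\be$ but for the resampled vectors $\bm \te$ inside $\phi_j^{(\cdot)}$, so the induction must genuinely be a statement about the deterministic maps $\be \mapsto \cR^{(s)}(\be)$ — and it is precisely Assumption~\ref{ass:recursive} that guarantees $\cR^{(s)}$ (equivalently, the boosting factors feeding it) is a well-defined function of $(\be, S_j)$, which is what makes these resampled quantities meaningful. The concluding counting argument is routine once one observes that $\be^{\boost,(t+1)}$ takes at most three distinct values, but the corner cases $\cR^{(t)} = \varnothing$, $A^{(t+1)} = \varnothing$, and $|A^{(t+1)}| = |\cR^{(t)}|$ (which forces $A^{(t+1)} = \cR^{(t)}$) should be verified explicitly rather than waved through.
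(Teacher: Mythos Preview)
Your argument is correct and follows essentially the same inductive route as the paper: both prove the monotonicity $\hat c_j^{(t)} \ge \hat c_j^{(t-1)}$ via the pointwise comparison $\phi_j^{(t)} \le \phi_j^{(t-1)}$, then conclude by observing that $\be^{\boost,(t+1)}$ takes at most three values so that e-BH rejects exactly the nonzero coordinates. Your presentation is if anything slightly more careful than the paper's in making explicit that the containment $\cR^{(t-1)} \subseteq \cR^{(t)}$ must transfer to the resampled vectors $\bm \te$ appearing inside $\phi_j$, which is precisely the role of Assumption~\ref{ass:recursive}.
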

Hence, the benefit of using CC multiple times comes from the monotonic increase of $\hat c_j^{(t)}$, which itself comes from improving the $\hcR_j^{(t)}$ to its superset $\hcR_j^{(t+1)}$ in the denominator inside the $\phi$-function. This is an example of \emph{recursive refinement}, first proposed for CC in~\citet{fithian2020conditional} and later used in~\citet{luo2022improving}. Unlike the former, which refines an estimate of the rejection set in order to provide better calibration, we update the rejection set in order to gain better power (similar in spirit to the latter).
Since $|\cR^{(t)}|$ is bounded and monotonically non-decreasing in $t$, 
there exists $s_0 \ge 0$ such that $|\cR^{(s)}| =  |\cR^{(s_0)}|$ for all $s\ge s_0$, 
or equivalently, $\cR^{(s)} = \cR^{(s_0)}$ for all $s\ge s_0$.

\subsection{Connection to dBH}
As briefly discussed earlier,~\citet{fithian2020conditional} propose a method named dBH, 
which uses conditional calibration to adjust the p-value threshold in BH, so as to control the FDR under dependence. 
Concretely, supposing access to a sufficient statistic $S_j$ for each $j \in [m]$ such 
that one can evaluate the conditional distribution of $\bp\coloneq (p_1,\dots,p_m) \given S_j$ under the null hypothesis $H_j$,
it finds the critical value $\hat \tau_j$ (via numerical integration)\footnote{This is a simplified version of the dBH procedure. In the original formulation, 
the threshold is written as a function of the tuning parameter $c$.}:

\@\label{eq:dbh}
\hat \tau_j = \sup\Bigg\{c \in (0,1): \EE_{H_j}\bigg[\frac{\ind\{p_j \le c\}}{\hat{r}_j(\bp)} \Biggiven S_j \bigg] 
\le \frac{\alpha}{m}\Bigg\},
\@
where 
$\hat{r}_j(\bp)$ is an estimate of  $|\{j\}\cup \{k \in [m]: p_k \le \hat{\tau}_k\}|$ (not to be confused with $\hcR_j(\be)$).
The adjusted p-value threshold then yields a selection set 
$\hat{\cR}^+ = \{j\in[m]:p_j \le \hat{\tau}_j\}$. Note that the size of $\hcR^+$ 
may differ from the estimates $\hat r_j(\bp)$, so the construction of $\hat \tau_j$ 
does not necessarily imply FDR control.  The dBH procedure takes additional 
pruning steps to ensure FDR control: 
\begin{itemize}
\item [(1)] if $|\hat{\cR}^+| \ge \hat{r}_j(\bp)$ for all $j\in[m]$, then 
stop the procedure and return $\cR^\dbh \coloneq \hat{\cR}^+$;
\item [(2)] if there exists $j\in[m]$ such that $|\hat{\cR}^+| < \hat{r}_j(\bp)$, 
then generate $U_1,\ldots,U_m \stackrel{\text{i.i.d.}}{\sim} \text{Unif}([0,1])$ that are 
independent of everything else, and return 
\$
\cR^{\dbh} := \big\{j \in \hat{\cR}^+:U_j \le \tfrac{r^*}{\hat{r}_j(\bp)}\big\},
\text{ where } 
r^* = \max \big\{r \in [m]: \big|\{j\in \hat{\cR}^+: U_j \le \tfrac{r}{\hat{r}_j(\bp)} \}\big|\ge r \big\}.
\$ 
\end{itemize}
As seen above, dBH makes use of conditional calibration to 
achieve the FDR control, while in our framework, we start with 
a FDR-controlling selection set and use conditional calibration 
to increase the power. Additionally, the dBH procedure involves 
additional pruning steps that introduce external randomness, 
while our method does not. As a side point, 
we point out an interesting fact about dBH---it actually has an e-BH interpretation,  
based on which the external randomness can be avoided (with potential loss of power). 
The details are deferred to Appendix~\ref{app:dbh_equiv}.

\section{Example: parametric testing}
\label{sec:example-parametric}

First, we study the relatively simple task of identifying which components of a Gaussian random variable have {positive} mean. Given an $m$-dimensional $z$-statistic $Z \sim \cN_m(\mu, \Sigma)$ with $\Sigma \succ 0$, we define the null hypotheses $H_j\colon \mu _j = 0$ and their corresponding one-sided alternative $H_j^{\mathrm{alt}}\colon \mu_j > 0$. \cite{fithian2020conditional} consider this problem for both known and unknown covariance matrix $\Sigma$ as an example application of dBH, their conditionally-calibrated correction to the BH procedure. After constructing the relevant e-values for testing against $H_1, \dots, H_m$, we closely follow their work in Sections 3.1 and 3.2 as their conditioning statistic plays the role of our sufficient statistic for the e-value resampling step. 

\subsection{Multivariate $z$-statistics}\label{sec:zstat}

Consider $Z \sim \cN_m(\mu, \Sigma)$ with $\Sigma$ known. Without loss of generality, we assume $\Sigma_{jj} = 1$ for all $j$. \cite{vovk2021values} observe that for $a_j\neq 0$,
\begin{equation}\label{eq:zstat-evalue}
  e_j = \exp(a_j Z_j - a_j^2/2) 
\end{equation} 
is a valid e-value with respect to $H_j$. $e_j$ is the likelihood ratio test (LRT) statistic for $H_j$ versus the point alternative 
$H_j^{(a_j)}\colon \mu_j =a_j$. The choice of the hyperparameter $a_j$ can be either fixed \emph{a priori} 
or derived from an independent hold-out set and will not affect the validity of the e-value.
However, different values of $a_j$ will lead to varying levels of power. 
The intuition from the Neyman-Pearson lemma is that the optimal choice of $a_j$ is the one 
that matches with the true data-generating distribution under the alternative. When the alternative hypothesis for $H_j$ is composite, our choice of $a_j$ may lead to $e_j$ being far from optimally powerful---but it is impossible to be certain without assuming we know the alternative distribution. In these cases, the practitioner may attempt to learn the best $a_j$ independently from a separate, independent dataset. We run simulations, detailed in Section \ref{sec:sim-parametric}, where our method is applied for e-values constructed using correctly specified ``exact'' LRT e-values as well as a range of misspecified LRT e-values. We find that when estimating $\mu$ using a hold-out set, the performance is quite similar to the correctly specified e-values, so we do not present those results.

In order to use conditional calibration, we define the sufficient statistic $S_j = Z_{-j} - \Sigma_{-j, j} Z_j$ for each $j \in [m]$, 
as in \cite{fithian2020conditional}. To resample $(\te_1, \dots, \te_m) \cond S_j$, 
it suffices to resample from the conditional joint distribution $Z\cond S_j$ (under $H_j$) and compute \eqref{eq:zstat-evalue} for each $j$. 
We claim that we can sample $Y  \sim \cN(0, \Sigma_{jj})$ and construct the $z$-statistic $\tZ^{(j)} = (\tZ^{(j)}_1 ,\dots, \tZ^{(j)}_m)$ such that
\begin{equation}
  \label{eq:resample_zstat}
  \tZ^{(j)} _j = Y,\quad \tZ^{(j)} _{-j} =    S_j + \Sigma_{-j,j} Y
\end{equation}
in order to obtain a resample from $Z\cond S_j$. The following proposition formalizes the claim. 
\begin{proposition}\label{prop:zstat}
  For each $j\in[m]$, choose sufficient statistic $S_j = Z_{-j} - \Sigma_{-j, j} Z_j$ and resample the $m$-dimensional $z$-statistic $\tZ^{(j)}$ from $Z \cond S_j$ as written in \eqref{eq:resample_zstat}. Define the resampled e-values $\tilde e_k^{(j)} = \exp(a_k \tZ_k^{(j)}- a_k^2/2)$ for each $k\in[m]$. Then
  $$
  \left(\te_1^{(j)}, \ldots, \te_m^{(j)} \right)\cond S_j \sim (e_1, \ldots, e_m) \cond S_j
  $$
  under $H_j$.
  
  Furthermore, independent samples of $Y\sim \cN(0, \Sigma_{jj})$ lead to independent resamples from $(e_1, \ldots, e_m) \cond S_j$.
\end{proposition}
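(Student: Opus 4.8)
The plan is to reduce the claim to a statement about the Gaussian vector $Z$ alone. Observe that the map $z \mapsto (\exp(a_1 z_1 - a_1^2/2), \dots, \exp(a_m z_m - a_m^2/2))$ that produces the e-value vector is deterministic and, crucially, free of the unknown mean $\mu$; the resampled e-values apply this very map to $\tZ^{(j)}$ while the original e-values apply it to $Z$. Hence it suffices to show that, under $H_j$,
\[
\tZ^{(j)} \cond S_j \;\eqd\; Z \cond S_j,
\]
after which the distributional identity for the e-value vectors follows by pushing forward under the map above.

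First I would record the joint law of $(Z_j, S_j)$ under $H_j$. Since both are linear images of the Gaussian $Z$, the pair is jointly Gaussian, with $Z_j \sim \cN(0, \Sigma_{jj}) = \cN(0,1)$ (using $\mu_j = 0$ and the normalization $\Sigma_{jj}=1$). The one computation needed is the cross-covariance: treating $Z_{-j}$ and $S_j$ as column vectors,
\[
\cov(S_j, Z_j) = \cov(Z_{-j}, Z_j) - \Sigma_{-j,j}\,\var(Z_j) = \Sigma_{-j,j} - \Sigma_{-j,j} = 0,
\]
using $\var(Z_j) = \Sigma_{jj} = 1$. For jointly Gaussian vectors, zero covariance is equivalent to independence, so $Z_j \indep S_j$; in particular the conditional law $Z_j \cond S_j$ equals the marginal law $\cN(0,\Sigma_{jj})$.

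Next I would note that $Z$ is an invertible affine function of $(S_j, Z_j)$: coordinate $j$ is $Z_j$ itself, and the remaining coordinates satisfy $Z_{-j} = S_j + \Sigma_{-j,j} Z_j$, which is just the definition of $S_j$ rearranged. Consequently, conditionally on $S_j$, the vector $Z$ has exactly the law of: draw $Y \sim \cN(0,\Sigma_{jj})$ (the conditional law of $Z_j$ given $S_j$, by the previous step), place $Y$ in coordinate $j$, and set the other coordinates to $S_j + \Sigma_{-j,j} Y$ — which is precisely the construction \eqref{eq:resample_zstat} of $\tZ^{(j)}$. This proves $\tZ^{(j)}\cond S_j \eqd Z\cond S_j$, and taking the image under the coordinatewise map $z_k \mapsto \exp(a_k z_k - a_k^2/2)$ gives $(\te_1^{(j)},\dots,\te_m^{(j)}) \cond S_j \sim (e_1,\dots,e_m)\cond S_j$. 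For the independence of resamples: given $S_j$, each $\tZ^{(j)}$ (hence each resampled e-value vector) is a fixed function of $S_j$ and of its own copy of $Y$; since the copies of $Y$ are mutually independent and independent of $S_j$, the resampled vectors are conditionally independent given $S_j$.

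I do not anticipate a genuine obstacle — the argument is essentially the conditioning device of \citet{fithian2020conditional}. The only points requiring care are (i) keeping the row/column conventions straight in the covariance computation, so that the cancellation $\Sigma_{-j,j} - \Sigma_{-j,j}\Sigma_{jj} = 0$ is visibly a consequence of $\Sigma_{jj}=1$, and (ii) making explicit that the e-value map carries no dependence on $\mu$, so that conditioning on $S_j$ under the null $H_j$ is exactly what is needed to match the two conditional laws.
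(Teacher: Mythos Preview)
Your proof is correct and follows essentially the same route as the paper: reduce to showing $\tZ^{(j)}\cond S_j \eqd Z\cond S_j$, establish $Z_j\indep S_j$ via the covariance computation $\cov(S_j,Z_j)=\Sigma_{-j,j}-\Sigma_{-j,j}\Sigma_{jj}=0$ and joint Gaussianity, then use that $Z_{-j}$ is the same deterministic function $g(z,S)=S+\Sigma_{-j,j}z$ of $(Z_j,S_j)$ as $\tZ^{(j)}_{-j}$ is of $(Y,S_j)$. The paper's proof is slightly terser but structurally identical.
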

\begin{proof}[Proof of Proposition \ref{prop:zstat}]
  It is sufficient to show that the resample $\tZ$ follows the conditional joint distribution $Z\cond S_j$ under $H_j \colon \mu_j  = 0$. Since $\cov(Z_j, S_j) = 0$, the two entities are indepdendent. Thus, the independently sampled $\tZ_j = Y \sim \cN(0, \Sigma_{jj})$ follows $Z_j\cond S_j$ automatically.
  
  Define the deterministic function $g\colon \RR\times\RR^{m-1} \to \RR^{m-1} $ such that $g(z, S) = S + \Sigma_{-j,j} z$. Then $Z_{-j} = g(Z_j, S_j)$ and $\tZ_{-j} = g(Y, S_j)$ (by \eqref{eq:resample_zstat}). Therefore, $$Z_j \cond S_j \sim Y \cond S_j \implies (Z_{-j}, Z_j)  \cond S_j\sim (\tZ_{-j}, \tZ_j) \cond S_j$$
  and we conclude as desired.
\end{proof}

\subsection{Multivariate $t$-statistics}

When the covariance matrix $\Sigma$ is instead partially or totally unknown, 
the $t$-statistic replaces its $z$-statistic counterpart. 
Assume that $Z \sim \cN_m(\mu, \Sigma)$, where $\Sigma = \sigma^2\Psi$ with $\Psi \succ 0$ 
known but $\sigma \in \RR^+$ unknown. Furthermore, assume we have access to an auxiliary 
independent vector $W \sim \cN_{n-m}(0, \sigma^2 I_{n-m})$ for estimating $\sigma^2$.
 Then, under the same null hypotheses $H_1, \dots, H_m$, the $t$-statistic used for testing $H_j$ is 
\begin{equation}
  T_j \coloneqq \frac{Z_j}{\sqrt{\hat\sigma^2 \Psi_{jj}}} \sim t_{n-m},
\end{equation}
where $(n-m)\hat\sigma^2 = \|W\|^2 \sim \sigma^2 \chi_{n-m}^2$. Here, $t_{n-m}$ and $\chi^2_{n-m}$ refers to the student-$t$ and $\chi^2$ distributions with $n-m$ degress of freedom. 

For e-values, we can again construct a likelihood ratio test statistic per $j$. Under the point alternative $H_j^{(a_j)}\colon \mu_j =a_j$, $T_j \sim t_{n-m}(a_j)$, the noncentral $t$-distribution with noncentrality parameter $a_j$ and $n-m$ degrees of freedom; this distribution is a generalization of the regular $t$-distribution under the relation $t_{n-m}(0) \eqd t_{n-m}$. Denoting the density of $t_{n-m}(a)$ as $f_{t_{n-m}, a}$, we then define for each $H_j$ the LRT e-value 
\begin{equation}\label{eq:tstat-evalue}
  e_j = \frac{f_{t_{n-m}, a_j}(T_j)}{f_{t_{n-m}, 0}(T_j)},
\end{equation}
where again the choice of $a_j$ can be chosen \emph{a priori} or through sample splitting.

For the conditioning statistic $S_j$, \cite{fithian2020conditional} construct the pair $(U_j, V_j)$ with the components defined as follows:
\begin{align}\label{eq:tstat-suffstat-pair}
  \begin{split}
    U_j &= Z_{-j} - \Psi_{-j,j} \Psi_{jj}^{-1} Z_j,\\
    V_j &= \|W\|^2 + \frac{Z_j^2}{\Psi_{jj}}.
  \end{split}
\end{align}
By choosing this statistic, we again have $S_j \indep T_j$; in fact, $U_j, V_j$, and $T_j$ are all mutually independent. In addition, it can be shown that $T_{-j}$ is a deterministic function of $(T_j, U_j, V_j)$:
\begin{equation}\label{eq:t-from-suffstat}
  T_k = U_{jk}\sqrt{\frac{n-m + T_j^2}{\Psi_{kk} V_j}} + \frac{\Psi_{kj}}{\Psi_{jj}}T_j\quad \text{for }k\neq j.
\end{equation}
As a direct consequence of \eqref{eq:t-from-suffstat}, we can resample from $T\cond S_j$ by sampling $Y \sim t_{n-m}$ independent from $S_j$, setting $\tT_j^{(j)} = Y$, and setting $\tT_{-j}^{(j)}$ as the function of $\tT_j^{(j)}, S_j$ above. The LRT e-values can be calculated per-component. We formalize the correctness of this resampling scheme in the following proposition. 
\begin{proposition}\label{prop:tstat}
  For each $j\in[m]$, choose sufficient statistic $S_j = (U_j, V_j)$ as defined in \eqref{eq:tstat-suffstat-pair}. Using $U_j$, $V_j$, and $\tT_j^{(j)} = Y  \sim t_{n-m}$ (independent from $S_j$), construct $\tT^{(j)}_{-j}$ as per \eqref{eq:t-from-suffstat}. Then $\tT^{(j)} \cond S_j \sim T\cond S_j$ and  
  $$
  \left(\te_1^{(j)}, \ldots, \te_m^{(j)} \right)\cond S_j \sim (e_1, \ldots, e_m) \cond S_j
  $$
  under $H_j$, where $e_k$ and $\tilde e_k^{(j)}$ is constructed as the component-wise LRT by using \eqref{eq:tstat-evalue} on $T$ and $\tT^{(j)}$, respectively.
  
  Furthermore, independent samples of $Y  \sim t_{n-m}$ lead to independent resamples from $(e_1, \ldots, e_m) \cond S_j$.
\end{proposition}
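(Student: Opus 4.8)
The plan is to follow exactly the template of the proof of Proposition~\ref{prop:zstat}: reduce the claim about the e-value vectors to the claim that the resampled statistic vector $\tT^{(j)}$ has the same conditional law as $T$ given $S_j$ under $H_j$, and then transfer this through the fact that the LRT e-values are deterministic coordinatewise functions of the $t$-statistics. So the proof really has two moving parts — a distributional identity for the $t$-statistics, and a pushforward argument — and the e-value statement comes for free at the end.

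First I would record the two structural facts, both from~\cite{fithian2020conditional}. Under $H_j\colon \mu_j=0$: (i) $T_j\sim t_{n-m}$ and $T_j$ is independent of $S_j=(U_j,V_j)$ (in fact $U_j$, $V_j$, $T_j$ are mutually independent — $U_j\indep Z_j$ because $\cov(Z_{-j}-\Psi_{-j,j}\Psi_{jj}^{-1}Z_j,\,Z_j)=0$ with $(U_j,Z_j)$ jointly Gaussian, $U_j\indep W$ trivially, so $U_j\indep(T_j,V_j)$; and $T_j\indep V_j$ is the one slightly non-obvious piece, following from a Basu/beta--gamma argument applied to the independent $\chi^2_1$ and $\chi^2_{n-m}$ variables $Z_j^2/(\sigma^2\Psi_{jj})$ and $\lVert W\rVert^2/\sigma^2$, together with the independence of $\sign(Z_j)$); and (ii) the deterministic identity \eqref{eq:t-from-suffstat}, which exhibits $T_{-j}=h(T_j,U_j,V_j)$ for the explicit map $h$ read off from that display. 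From (i), the conditional law $T_j\cond S_j$ equals the unconditional $t_{n-m}$, which is precisely the law from which $Y=\tT^{(j)}_j$ is drawn independently of $S_j$; hence $T_j\cond S_j \eqd \tT^{(j)}_j\cond S_j$.

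Next I would push this equality through the common deterministic map. Write $T=(h(T_j,S_j),\,T_j)$ and, by the construction in the statement, $\tT^{(j)}=(h(\tT^{(j)}_j,S_j),\,\tT^{(j)}_j)$ with the same $h$; conditional on $S_j=s$, both are the image of $T_j$, resp.\ $\tT^{(j)}_j$, under the fixed measurable map $x\mapsto(h(x,s),x)$, so $T\cond S_j\eqd\tT^{(j)}\cond S_j$ under $H_j$. Finally, $e_k=f_{t_{n-m},a_k}(T_k)/f_{t_{n-m},0}(T_k)$ is a deterministic function of $T_k$, and $\tilde e_k^{(j)}$ is the identical function of $\tT^{(j)}_k$, so applying the vector map $T\mapsto(e_1,\dots,e_m)$ to both sides of the conditional-law identity gives $(\tilde e_1^{(j)},\dots,\tilde e_m^{(j)})\cond S_j\eqd(e_1,\dots,e_m)\cond S_j$. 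For the independence claim, observe that conditionally on $S_j$ each resample is $h(Y^{(k)},S_j)$ for i.i.d.\ $Y^{(k)}\sim t_{n-m}$ independent of $S_j$, hence the resampled e-value vectors are conditionally i.i.d.\ given $S_j$, i.e.\ independent draws from $(e_1,\dots,e_m)\cond S_j$.

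The only genuine obstacle is justifying the independence structure in (i), and within that the single delicate point $T_j\indep V_j$; but since this and the identity \eqref{eq:t-from-suffstat} are quoted from~\cite{fithian2020conditional}, the proposition collapses to the "apply a common deterministic map to conditionally-equal-in-law inputs" argument, mirroring Proposition~\ref{prop:zstat}. I would therefore keep the write-up short: cite \cite{fithian2020conditional} for (i) and (ii), spend a sentence on the Basu-type justification of $T_j\indep V_j$ for completeness, and then run the two-line pushforward.
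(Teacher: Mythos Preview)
Your proposal is correct and follows essentially the same approach as the paper's proof: cite \cite{fithian2020conditional} for the independence $T_j\indep S_j$ and the deterministic identity \eqref{eq:t-from-suffstat}, then push the conditional-law equality $T_j\cond S_j\eqd \tT_j^{(j)}\cond S_j$ through the common deterministic map, exactly mirroring Proposition~\ref{prop:zstat}. The only difference is that you spell out a Basu-type justification for $T_j\indep V_j$, whereas the paper simply defers this entirely to the citation.
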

\begin{proof}[Proof of Proposition \ref{prop:tstat}]
  We can proceed similarly to the proof of Proposition \ref{prop:zstat}. As \cite{fithian2020conditional} show, $S_j = (U_j, V_j)$ is independent to $T_j$. Therefore, the independently sampled $\tT_j^{(j)} = Y \sim t_{n-m}$ is distributed as $T_j \cond S_j$. Noting that $\tT_{-j}$ depends on $\tT_j^{(j)}, S_j$ in the same exact deterministic way as $T_{-j}$ depends on $T_j, S_j$, we conclude that  
  $$  (T_{-j}, T_j)  \cond S_j\sim (\tT_{-j}^{(j)}, \tT_j^{(j)}) \cond S_j.$$
  The proposition is then immediate.
\end{proof}

\subsection{Simulation studies}\label{sec:sim-parametric}

We show the power improvement given by e-BH-CC on the LRT e-values for both the $z$-testing and $t$-testing problems. 

\paragraph{Settings.} We take $m=100$ and the set of nonnull $\cH_1  =  \{1, \dots, 10\}$, so that 
\$
\mu = (\underbrace{A, A, \dots, A}_{10}, 0, \dots, 0) \in \RR^{100},
\$
where $A$ is a constant determining the signal strength.
The covariance matrix has the form $\Sigma_{ij} = \rho^{|i-j|}$ for any $i,j \in [m]$. 
The Gaussian data is thus generated from $\cN(\mu, \Sigma)$. Depending on the specific testing problem, we will choose various ranges for the signal strength $A$ and covariance parameter $\rho$.

\paragraph{Methods.} We compare e-BH and e-BH-CC, both at level $\alpha = 0.05$. As a reference, we also run BH at $\alpha$ on the one-sided p-values derived from the $z$-statistics and $t$-statistics. Note that BH does not have FDR control guarantees when $\rho < 0$, which we consider in the $t$-testing simulations, so it is not truly comparable with our e-value methods (which do guarantee FDR control).

\paragraph{Computational details.} We use a hybrid AVCS formulation to control the Monte-Carlo error. This is a computational trick which balances the conservativeness of the exact finite-sample AVCS with the asymptotic AVCS (although we lose any meaningful statement regarding the limiting FDR). The first 3000 samples, batched in sizes of 100 and scaled to the unit interval, are used to form the hedged capital confidence sequence (HCCS) as proposed in~\cite{waudby2024estimating}. After this point, the next samples, also batched in sizes of 100, are used to construct the asymptotic AVCS described in Theorem 2.2, \cite{waudby2021time}. If after 5000 samples the resulting AVCS still contains zero, we stop early and fail to boost the e-value. 

In addition, we employ a filter to cut down the number of e-values which we attempt to boost. For all experiments, we filter out all $j$ such that $e_j = 0$, since they see no benefit from boosting. Furthermore, we use the filter 
$$
M = \{  j\colon p_j \le 3 \alpha \},
$$
where $p_j$ are the one-sided p-values formed from the $z$-statistics or $t$-statistics.

We chooose $\alpha_0 = 0.1 \cdot \alpha = 0.005$ and use $\alpha_\avcs$ as described in Algorithm \ref{alg:ebh_cc_acvs}, replacing $m$ with the size of the filter. This is in line with our target FDR (accounting for Monte-Carlo error) of $0.05$. Even though the asymptotic AVCS only bounds above the limit supremum of the FDR at $\alpha + \alpha_0 = 0.055$, we see that the FDR is empirically controlled at 0.05.

\subsubsection{Experiments for $z$-testing}

\begin{figure}[!t]
  \centering
  \includegraphics[scale=1.0]{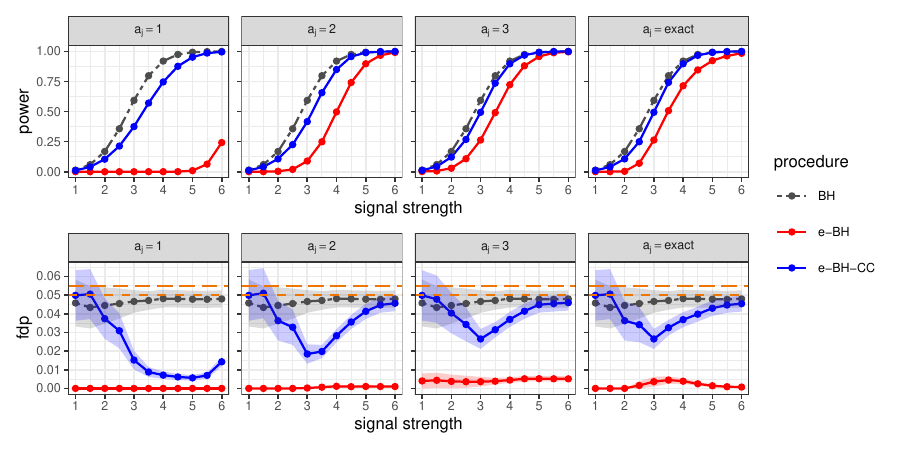} 
  \caption{Realized power and FDP of the simulated experiments for $z$-testing. 
  Each plot contains the averaged metrics over 1,000 replications. 
  The short and long orange dashed lines in the FDP plots represent 
  the target FDR (0.05) and the FDR bound with Monte-Carlo error (0.055), respectively. Shading represents error bars of two standard errors above and below.}
  \label{fig:zstat}
  \centering
\end{figure}

We vary $A \in \{1, 1.5, 2, \dots, 6\}$ and set $\rho = 0.5$. Four different constructions of LRT e-values 
are considered: the first three constructions let $a_j = a,~ \forall j\in[m]$ for $a \in \{1, 2, 3\}$; 
for the last construction, we use a correctly specified $a_j$ (setting it equal to $A$). The mean power and FDP curves, taken over 1000 replications, are shown in Figure \ref{fig:zstat}.

For all choices of $a_j$, we see a major improvement in power comparing e-BH-CC to base e-BH. The power improvement is uniform over all possible signal strengths, but is especially large when $a_j$ is incorrectly specified. For example, when $a_j =1$, we see that e-BH has minimal power even in the large signal experiments. e-BH-CC immensely outperforms e-BH in this situation.

It is worth noting that in these experiments, BH still controls the FDR as the p-values are PRDS~\citep{benjamini2001control}. Thus, we can intrepret the power gap between BH and e-BH as the power loss from translating the multiple testing problem from p-values to e-values. The e-BH-CC power curve then demonstrates how much of the power loss can be reclaimed. In all cases, we see that the power of e-BH-CC is quite comparable to that of BH. The power reclamation is even more evident in the misspecified case $a_j = 1$, as mentioned previously. One takeaway is that when $a_j$ is chosen poorly through fixed means or estimation, e-BH-CC can still perform powerfully, giving an added margin of safety regarding parameter misspecification.

Lastly, it is clear that the FDR is controlled empirically at $\alpha + \alpha_0$ 
(even at $\alpha$) for all settings. The realized FDP of e-BH-CC is generally much higher than that of base e-BH (which is close to zero), affirming that we are able to use more of the FDR budget by boosting e-BH.

\subsubsection{Experiments for $t$-testing}

\begin{figure}[!t]
  \centering
  \includegraphics[scale=1.0]{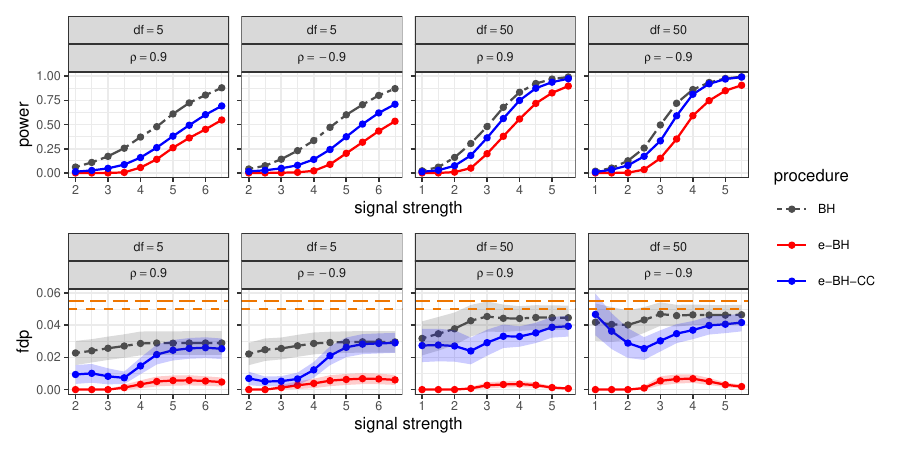} 
  \caption{Realized power and FDP of the simulated experiments for $t$-testing. 
  The details are otherwise the same as in Figure~\ref{fig:zstat}.}
  \label{fig:tstat}
  \centering
\end{figure}

We vary $\rho \in \{0.9, -0.9\}$ and degrees of freedom $n-m \in \{5, 50\}$ to visualize the effect of e-BH-CC 
in both heavy and light tailed settings. For $n-m = 5$ (the heavy tailed setting), 
we vary $A \in \{2, 2.5, 3, \dots, 6.5\}$, while for $n-m=50$ (the light tailed setting), we vary $A \in \{1, 1.5, 2, \dots, 5.5\}$. For each choice of hyperparameters, we assume that $a_j$ is chosen correctly, setting it equal to the signal strength $A$. The mean power and FDP curves, taken over 1000 replications, are shown in Figure \ref{fig:tstat}. 

Again, we see that in all settings e-BH-CC improves upon e-BH. 
The improvement is notable when the signals are stronger, 
although when the signals are weak all three methods (including BH) suffer in power. 
Note that in these settings, BH does not necessarily control the FDR; however, we still find it pertinent to consider the gap between it and e-BH a loss of power. In terms of reclaiming the power gap between BH and e-BH, e-BH-CC does quite well and is even comparable to BH in the light-tailed experiments. To conclude, we again note that e-BH-CC controls the FDR empirically and shows a better usage of the FDR budget compared to e-BH.

\section{Example: knockoffs}
\label{sec:example-knockoffs}

\subsection{Conditional independence testing}

\noindent We turn to the problem of \emph{feature selection}, where the goal is to discover which of the covariates $X = (X_1, X_2, \dots, X_m)$ are significant to the value of the outcome variable $Y$. We can encode the notion of $X_j$ being significant to the outcome with the following \emph{conditional independence} hypothesis:
\begin{equation}
  H_j\colon Y \indep X_j \cond X_{-j}.
\end{equation}
These can serve as the null hypotheses: if $H_j$ is true, then the value of $X_j$ has no effect on the outcome, controlling for all other variables.
We call $X_j$ a \emph{null} variable when $H_j$ is true and \emph{nonnull} when $H_j$ is false. 
The null hypothesis set $\cH_0$ corresponds to the indices of the null covariates,
and analogously $\cH_1$ corresponds to the nonnull covariates. Our goal is to reject 
a subset of $[m]$ in a way which controls the FDR, i.e., there is little intersection with $\cH_0$.

One popular approach for testing conditional independence is to assume the \emph{model-X framework}. In this framework, the covariate-outcome tuple $(X_1, \dots, X_m, Y)$ is interpreted as a draw from some joint distribution $\PP_{XY} = \PP_X \times \PP_{Y\cond X}$, where $\PP_{X}$ (the covariate distribution) is assumed to be known (or reasonably well-approximated) while the model $\PP_{Y\cond X}$ is totally unknown. Under these assumptions, one can use a multiple testing procedure known as the \emph{model-X knockoff filter}~\citep{candes2018panning} to conduct feature selection with a provable exact-sample FDR control guarantee. We give a brief exposition of the procedure in the following section with the goal of showing its relevance to e-values and e-BH-CC.

\subsection{Model-X knockoffs}

Given the design matrix $\bm X = (X_{i1}, \dots, X_{im})_{i \in [n]} \in \RR^{n\times m}$ 
and outcomes $\bm Y = (Y_1, \dots, Y_n)^\top$ in the dataset, the model-X knockoff filter uses the practitioner's knowledge of the covariate distribution $\PP_X$ to construct ``knockoff variables'' $\bm \tX =  (\tX_{i1}, \dots, \tX_{im})_{i \in [n]}$. These knockoff variables must satisfy an independence condition $\bm Y\indep \bm \tX \cond \bm X$ as well as the ``swap'' condition:
\begin{equation}
  (\bm X_{j}, \bm \tX_{j}, \bm X_{-j}, \bm \tX_{-j}) \sim (\bm \tX_{j}, \bm X_{j}, \bm X_{-j}, \bm \tX_{-j})\;\; \forall j\in[m],
\end{equation}
that is, the joint distribution of $(\bm X, \bm \tX)$ is preserved after swapping the positions of the columns $\bm X_j$ and $\bm \tX_j$. Hence, $\bm X$ and $\bm \tX$ are quite similar in terms of their dependency structure, but the latter has no significance to the outcome that is not already expressed through $\bm X$. Hence, conditional on $\bm X$, $\bm \tX$ are knockoff covariates which carry no value (in predicting $\bm Y$).

At a high level, access to these knockoff variables provides a way to appropriately calibrate any feature importance measure of $X_j$. 
The procedure uses the augmented design matrix $[\bm X, \bm \tX]$ and the outcome $\bm Y$ to construct a feature importance vector $W = \cW([\bm X, \bm \tX], \bm Y) \in \RR^{m}$ 
(where $W_j$ corresponds to $X_j$ and its knockoff) using some algorithm $\cW(\cdot)$ with the property that switching $\bm X_j$ and $\bm \tX_j$ in the augmented design matrix flips the sign of the resulting $W_j$. In general, such statistics are computed by finding individual feature importances for $X_j$ and $\tX_j$ and taking their difference. As noted in~\cite{candes2018panning}, 
the signs of $\{W_j \colon j\in \cH_0\}$ conditioned on their magnitudes $\{|W_j| \colon j\in \cH_0\}$ are i.i.d.~coin flips. In contrast, $W_j$ for nonnull $X_j$ would tend to be positive (signifying more importance of $X_j$ than $\tX_j$). This aligns with the intuition of $\tX_j$ as ``negative controls'': when $X_j$ and $\tX_j$ are both insignificant, the feature importance statistic is symmetric around 0, but when $X_j$ is significant $W_j$ will be positively skewed. Using the feature importances $W$, we select the set of features:
\begin{equation}\label{eq:mxkn_threshold}
  \cR^{\mathrm{kn}} = \{j\colon W_j \ge T  \},\quad\textnormal{ where } T \coloneqq \inf\left\{ t > 0 \colon \frac{1 + \sum_{j\in[m]} \indc{W_j \le -t}}{\sum_{k \in [m]} \indc{W_k \ge t}} \le \alpha \right\}.
\end{equation}
The threshold $T$ is defined to be the smallest $t >0$ such that an estimate of the FDR (which is constructed using our intuition on the null i.i.d.~coin flips above) is controlled by $\alpha$. For a rigorous proof of FDR control, see~\cite{candes2018panning,barber2015controlling}.

Although the knockoff filter is an elegant and effective approach to feature selection, it still has a few weaknesses. One such weakness, known as the \emph{threshold phenomenon}, occurs when the number of features with non-negligiable significance is less than $ 1/ \alpha$. In this sparse setting, the knockoffs procedure experiences a drastic loss in power. An explanation for this phenomenon is in the definition of $T$, where we compare an estimator of the FDP (false discovery proportion) against $\alpha$. The denominator measures the size of the rejection set when using $t$, while the numerator is at least 1; when the denominator is less than $ 1 / \alpha$, $\cR^{\mathrm{kn}}$ is forced to be empty. In practice, one still sees the procedure suffer from powerlessness when the proportion of significant features is generally sparse, even when $|\cH_1|$ is technically above $ 1 / \alpha$.

\subsubsection{Derandomizing knockoffs with e-values}

In addition to the threshold phenomenon, another weakness of the knockoff filter is its high selection variability. Sampling the knockoff matrix $\bm \tX$ introduces extraneous randomness into the procedure, so two different runs of the knockoff filter (with two distinct knockoff matrices) may result in wildly different rejection sets. For practical purposes, this is a major detraction---for example, two scientists studying the same feature selection problem with the same dataset may conclude totally different sets of significant covariates. Hence, for scientific reproducibility, a way to derandomize the knockoff filter was highly sought-after.

\cite{ren2024derandomised} solve the variability issue by proposing a derandomized version of the knockoff filter. They begin by connecting the knockoff filter with the e-BH procedure through defining
\begin{equation}\label{eq:kn_evalues}
  e_j \coloneqq m \cdot \frac{\indc{W_j \ge T}}{1 + \sum_{k\in [m]} \indc{W_k \le -T} },
\end{equation}
where $W_j$ is the feature importance for $X_j$ and $T$ is the rejection threshold in \eqref{eq:mxkn_threshold}. It can be shown that $\sum_{j\in\cH_{0}}\EE[e_j] = m$, implying that $e_1, \ldots, e_m$ are generalized e-values (Definition \ref{def:generalized_eval}). Running e-BH on these e-values will return a rejection set with FDR control, and the authors show that this rejection set is identical to the output of the knockoff filter using the same knockoff matrix.  

The procedure to derandomize the knockoff filter is as follows. After choosing a hyperparameter $\alpha_\kn \in (0,1)$, sample $d$ knockoff samples $(\bm \tX^{(1)}, \dots, \bm \tX^{(d)})$. For each knockoff $\bm \tX^{(k)}$, we run the knockoff filter with the feature importances $W^{(k)} = \cW([\bm X, \bm \tX^{(k)}], \bm Y)$ at level $\alpha_{kn}$, which results in the threshold 
\begin{equation}
  T^{(k)}\coloneqq  \inf\left\{ t > 0 \colon \frac{1 + \sum_{j\in[m]} \indc{W_j^{(k)} \le -t}}{\sum_{\ell \in [m]} \indc{W_\ell^{(k)} \ge t}} \le \alpha_\kn \right\}.
\end{equation}
We then use \eqref{eq:kn_evalues} to construct the generalized e-values for the $k$th run of the knockoff filter:
\begin{equation}\label{eq:kth_knockoffs_eval}
  e_j^{(k)} =m \cdot \frac{\indc{W_j^{(k)} \ge T^{(k)}}}{1 + \sum_{\ell\in [m]} \indc{W_\ell^{(k)} \le -T^{(k)}} },\quad\forall j\in[m].
\end{equation}
For each feature $j$, we can construct a derandomized e-value by averaging $e_j^{(k)}$ over all $d$ runs of the knockoff filter. The averaged e-value, formally defined as
\begin{equation}\label{eq:averaged_knockoffs_eval}
  \bar e_j \coloneqq \frac 1d\sum_{k\in[d]} e_j^{(k)},
\end{equation}
exhibits more stability as the extraneous randomness from each run of knockoffs is averaged out. Since taking the average preserves (generalized) e-value validity, we can run e-BH on $(\bar e_1, \dots, \bar e_m)$ to get a rejection set while controlling the FDR at level $\alpha$. 

The hyperparameter $\alpha_\kn$ does not necessarily have to equal $\alpha$ (the desired FDR control), and~\cite{ren2024derandomised} suggest having the former depend on the latter through the choice $\alpha_\kn =\alpha /2$ in order to achieve good power. 
Regardless of the choice of $\alpha_{\kn}$, the resulting $\bar e_1, \dots, \bar e_m$ will be valid e-values. 
In Figures \ref{fig:mxkn-tp25}, \ref{fig:mxkn-tp10}, and \ref{fig:mxkn-tp9}, we show the effect of conditional calibration on power and FDR for multiple choices of $\alpha_\kn$. In addition, there are other slight improvements which can affect the power of derandomized knockoffs---we discuss our implementation for the simulations in more detail in Section \ref{sec:sim-knockoffs}.

Unfortunately, even with these added techniques, the power of derandomized knockoffs procedure tends to suffer relative to that of the original method. This is due to the fact that the knockoff e-values in \eqref{eq:kn_evalues} (over which e-BH is equivalent to the original method) are ``tight'' in the sense that the inequality (i) in \eqref{eq:fdr_ineq} is nearly tight. However, the average of tight e-values is no longer tight, leading to a loss of power from e-BH. This power drop is most apparent in regimes with low-to-moderate signal strength; one can intrepret this as looseness in the FDR control inequality translating to signals becoming relatively weaker. To add to this, the gap becomes noticeably larger the more derandomization runs $d$ are used.

Therefore, we can identify two settings for conditional independence testing where the knockoff filter, as an e-value procedure, suffers from power loss: sparse signals and derandomization. However, we can reclaim the power loss and fill in the gap by using conditional calibration to boost the derandomized knockoffs procedure---even in settings where the threshold phenomenon may happen. In the following subsection, we will describe how to implement e-BH-CC for derandomized model-X knockoffs.

\subsection{Conditional calibration for derandomized knockoffs}

The choice of conditioning statistic $S_j$ must allow the i.i.d.~resampling of $(e_1, \dots, e_m)$ under the null conditional independence hypothesis $H_j$. Intuitively, it is sufficient if we can resample the design matrix itself in a way that fulfills $H_j$; i.e., the $j$th column of the resampled matrix is no longer significant to the outcome (conditional on the other columns). By taking this resampled design matrix as the true $\bm X$, we can sample its knockoff matrix $d$ times to construct the derandomized e-values. The $j$th e-value will correspond to a true null $H_j$ by the nature of the resampled design matrix.

Resampling the design matrix is a crucial step in the \emph{conditional randomization test} (CRT), which gives a p-value for testing against $H_j$ in the model-X framework~\citep{candes2018panning}. In the first step of the CRT, the resampled matrix is constructed by drawing a new copy of the $j$th column from $\PP_{X_j \cond X_{-j}}$, which is known through the model-X assumption; it is then concatenated with the non-$j$th columns of the original design matrix and combined with the outcome vector $\bm Y$ to simulate the original dataset. This formulation of the CRT leads us to choose $S_j = (\bm X_{-j}, \bm Y)$ as the sufficient statistic.

Like in the CRT, we can construct a resample $\bm X'$ of the design matrix $\bm X$ conditional on $S_j$: we invoke our knowledge of $\PP_{X_j \cond X_{-j}}$ to let $\bm X'_j$ be samples from the conditional distribution and assign to $\bm X'_{-j}$ the corresponding column values of $\bm  X_{-j}$. We can use this resample as our starting point for the derandomized knockoffs procedure (with $d$ derandomization runs), which outputs the averaged e-values $\tbe_1, \dots, \tbe_m$.

The following proposition states the correctness of the afore-described resampling procedure in producing i.i.d. resamples from $(\bar e_1, \dots, \bar e_m)\cond S_j$.
\begin{proposition}
  \label{prop:knockoffs_eval_resample}
  Fix $\alpha \in (0,1)$, $\alpha_\kn\in(0,1)$, and $d \in \NN$. For each $j\in[m]$, choose sufficient statistic $S_j = (\bm X_{-j}, \bm Y)$. As in the CRT~\citep{candes2018panning}, construct a resample $\bm X'$ of $\bm X$ by letting $\bm X'_{-j} = \bm X_{-j}$ and $ X'_{ij}$ be a sample from $\PP_{X_{ij} \cond X_{i, -j}}$. Using the new, resampled dataset $(\bm X', \bm Y)$, run the derandomized knockoffs procedure (with hyperparameters $d$ and $\alpha_\kn$) at level $\alpha$ to get the e-values  $\tbe_1, \dots, \tbe_m$, where $\tbe_j$ is constructed as \eqref{eq:kth_knockoffs_eval} and \eqref{eq:averaged_knockoffs_eval}. Then 
  $$
  (\tbe_1, \dots, \tbe_m) \cond S_j \sim (\bar e_1, \dots, \bar e_m)\cond S_j
  $$
  under $H_j$. Furthermore, independent resamples $\bm X'$ will lead to independent $(\tbe_1, \dots, \tbe_m)$ conditional on $S_j$ (also under $H_j$).
\end{proposition}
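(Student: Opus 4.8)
The plan is to realize the derandomized-knockoffs map as a measurable function of the data $(\bm X,\bm Y)$ together with an independent source of randomness, and then to reduce the claim to the conditional-randomization-test (CRT) identity for the law of $\bm X$ given $S_j=(\bm X_{-j},\bm Y)$ under $H_j$.

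First I would set up the functional representation. Fixing $\alpha,\alpha_\kn,d$, the only randomness in the derandomized procedure beyond $(\bm X,\bm Y)$ is the draw of the $d$ knockoff matrices; encode this as external randomness $\xi$ with $\xi\indep(\bm X,\bm Y)$. Given $(\bm X,\bm Y,\xi)$, the feature-importance vectors $W^{(k)}$, the thresholds $T^{(k)}$, the per-run e-values in~\eqref{eq:kth_knockoffs_eval}, and the averages in~\eqref{eq:averaged_knockoffs_eval} are all deterministic, so there is a measurable $g$ with $(\bar e_1,\dots,\bar e_m)=g(\bm X,\bm Y,\xi)$. By construction the resampled quantities are $(\tbe_1,\dots,\tbe_m)=g(\bm X',\bm Y,\xi')$, where $\bm X'$ keeps $\bm X'_{-j}=\bm X_{-j}$, draws $X'_{ij}\sim\PP_{X_{ij}\cond X_{i,-j}}$ independently across $i$, and $\xi'$ is a fresh independent copy of $\xi$.

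Second, I would prove the key identity: under $H_j$, $\bm X\cond S_j\eqd\bm X'\cond S_j$. Using that the rows $(X_{i1},\dots,X_{im},Y_i)$ are i.i.d.\ and that $H_j$ asserts $Y\indep X_j\cond X_{-j}$, for each row we get $X_{ij}\cond(X_{i,-j},Y_i)\eqd X_{ij}\cond X_{i,-j}=\PP_{X_{ij}\cond X_{i,-j}}$; independence across rows then shows that $\bm X_j\cond(\bm X_{-j},\bm Y)$ has law $\prod_i\PP_{X_{ij}\cond X_{i,-j}}$ with $\bm X_{-j}$ held fixed, which is exactly the law of $\bm X'_j$ given $S_j$. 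Since $\xi\indep(\bm X,S_j)$ and $\xi'\indep(\bm X',S_j)$ with $\xi\eqd\xi'$, conditioning on $S_j$ (which contains $\bm Y$) gives $(\bm X,\xi)\cond S_j\eqd(\bm X',\xi')\cond S_j$; pushing this through the measurable map $g(\cdot,\bm Y,\cdot)$ yields the first display of the proposition. For the ``furthermore'' claim, I would note that drawing resamples $\bm X'^{(\ell)}$ (each with a fresh $\xi'^{(\ell)}$) independently given $S_j$ makes the inputs conditionally i.i.d.\ given $S_j$, hence so are the outputs $g(\bm X'^{(\ell)},\bm Y,\xi'^{(\ell)})$, each distributed as $(\bar e_1,\dots,\bar e_m)\cond S_j$ by the first part.

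The main obstacle I anticipate is essentially bookkeeping: cleanly isolating the knockoff-sampling randomness as something independent of $(\bm X,\bm Y)$ and checking that the entire derandomized pipeline (including how $\alpha_\kn$ enters each $T^{(k)}$) is a genuine measurable function of $(\bm X,\bm Y,\xi)$. Once that representation is pinned down, the distributional heart of the argument is just the standard CRT exchangeability fact for the conditional law of $\bm X_j$ given $(\bm X_{-j},\bm Y)$ under $H_j$, exactly as used in~\cite{candes2018panning}.
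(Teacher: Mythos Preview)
Your proposal is correct and follows essentially the same route as the paper: reduce to showing $\bm X_j\cond S_j\eqd\bm X'_j\cond S_j$ under $H_j$ via the CRT identity (i.e., $H_j$ implies $X_{ij}\cond(X_{i,-j},Y_i)\eqd X_{ij}\cond X_{i,-j}$), then push this through the e-value construction. You are more explicit than the paper in isolating the knockoff-sampling randomness $\xi$ and writing the derandomized pipeline as a measurable map $g(\bm X,\bm Y,\xi)$, which the paper leaves implicit, but the underlying argument is the same.
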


\begin{proof}[Proof of Proposition \ref{prop:knockoffs_eval_resample}]
  The proposition follows once we show the distributional equality between the old and resampled columns $\bm X_j$ and $\bm X'_j$ (conditional on $\bm X_{-j}$ and $\bm Y$) under the null $H_j$. The null states that $\bm Y \indep \bm X_j \cond \bm X_{-j}$. Since the resample was thus created without any extraneous knowledge of $\bm Y$ (outside of what is contained when conditioning on $\bm X_{-j}$), the desired property simplifies to
  \begin{equation}
    \bm X_j \cond \bm X_{-j} \sim \bm X'_j \cond \bm X_{-j}.
  \end{equation}
  However, the above is evident since we resampled $\bm X'_j$ by the law $\otimes_{i=1}^n \PP_{X_{ij} \cond X_{i, -j}}$. 
\end{proof}

\subsection{Simulation studies}\label{sec:sim-knockoffs}

We illustrate the power improvement obtained from using e-BH-CC on the derandomized knockoffs e-values through numerical experiments. 

\begin{figure} 
  \centering
  \includegraphics[scale=1.0]{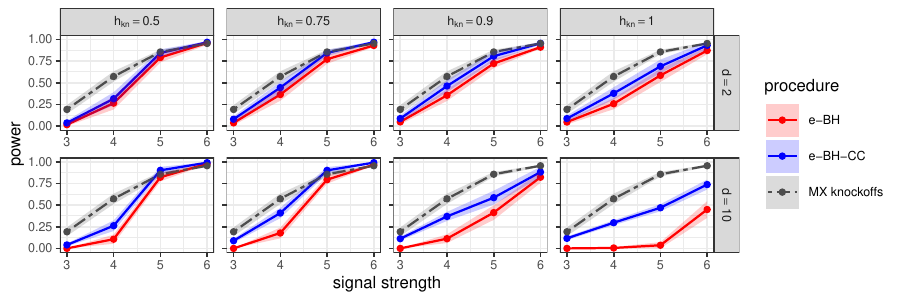}
  \includegraphics[scale=1.0]{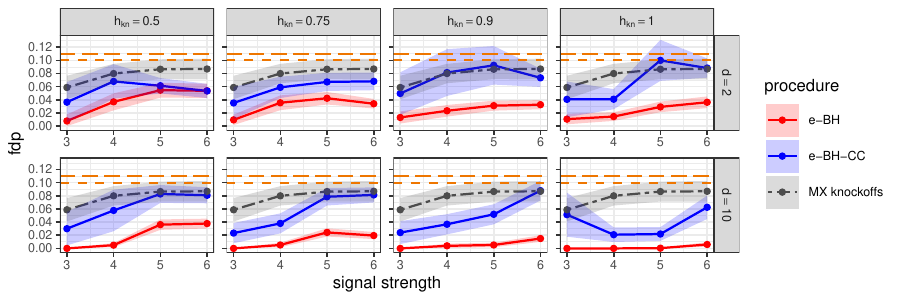}
  \caption{Realized power and FDP for the knockoffs simluations when $\beta$ is dense ($|\cH_1|=25$). 
  Each facet contains the averaged metrics over 100 replications; shading indicates error bars. 
  The short and long orange dashed lines in the FDP plots represent the target FDR (0.1) and the FDR bound with Monte-Carlo error (0.11), respectively.
 }
  \label{fig:mxkn-tp25}
  \centering
\end{figure}

\begin{figure}[t]
  \centering
  \includegraphics[scale=1.0]{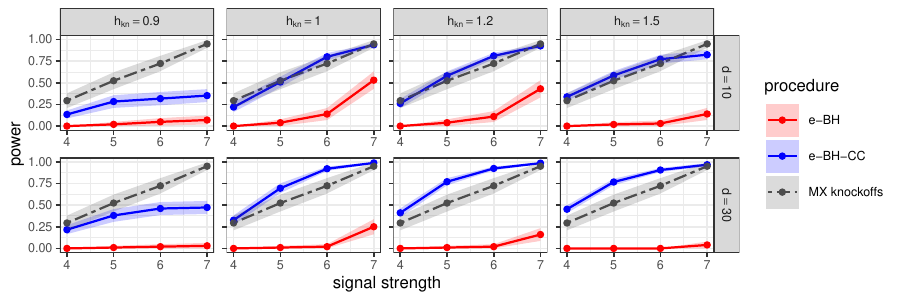}
  \includegraphics[scale=1.0]{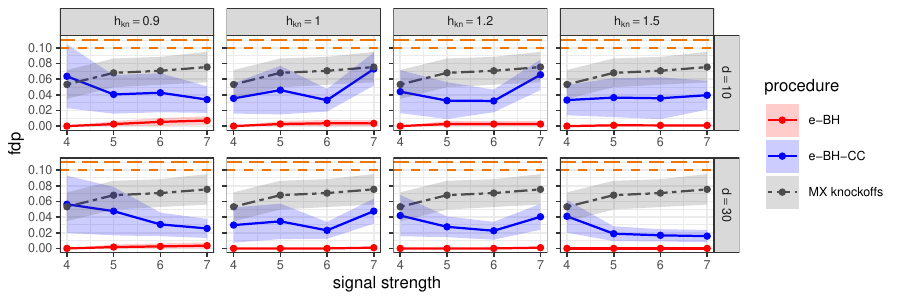}
  \caption{Realized power and FDP for the knockoffs simluations when $\beta$ is sparse (specifically, $|\cH_1|=10$ and $\alpha=0.1$, so this is \emph{at} the threshold). 
  The details are otherwise the same as in Figure \ref{fig:mxkn-tp25}.
  }
  \label{fig:mxkn-tp10}
  \centering
\end{figure}

\begin{figure}[t]
  \centering
  \includegraphics[scale=1.0]{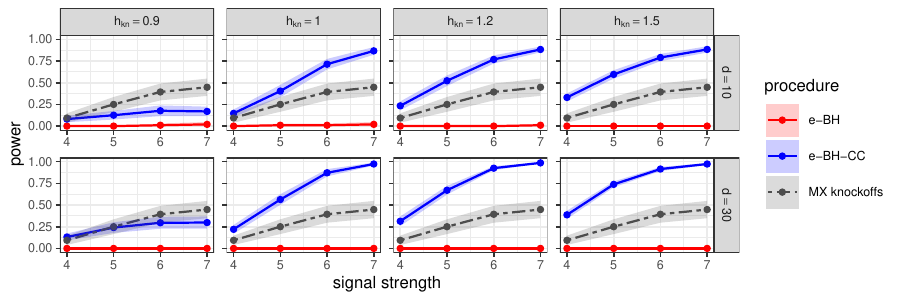}
  \includegraphics[scale=1.0]{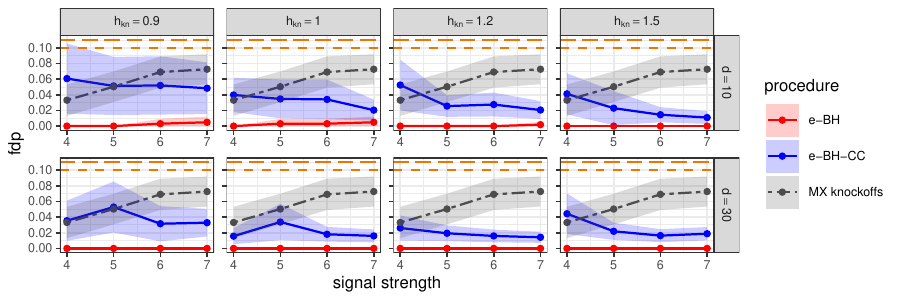}
  \caption{Realized power and FDP for the knockoffs simluations when $\beta$ is sparse (specifically, $|\cH_1|=9$ and $\alpha=0.1$, so this is \emph{below} the threshold). 
  The details are otherwise the same as in Figure \ref{fig:mxkn-tp25}.}
  \label{fig:mxkn-tp9}
  \centering
\end{figure}

\paragraph{Settings.} 
We generate the design matrix and outcome vector under the Gaussian linear model. 
Let $n = 500$ be the number of observations and $m=200$ be the number of covariates 
(i.e., number of hypotheses). For each observation, the row of covariates are jointly drawn from $X\sim  \cN_m(\mu, \Sigma)$. In our experiments, we set $\mu =  0$ and $\Sigma_{ij} = 0.5^{|i-j|}$, for any $i,j\in[m]$.

We then conditionally draw the response by the Gaussian linear model
$Y\cond X \sim \cN_n(X^\top\beta, 1)$, giving one independent draw of the covariate-outcome pair $(X, Y)$. We repeat this $n$ times to obtain our dataset $(\bm X, \bm Y)$.
The coefficient vector $\beta$ is formatted as
$$
  \beta = \bigg(\underbrace{0, \ldots, 0}_z, \frac{A}{\sqrt{n}}, \underbrace{0, \ldots, 0}_z,  -\frac{A}{\sqrt{n}}, \ldots \bigg) \in \RR^m;
$$
that is, every $z$ zeroes is followed by one nonzero amplitude with alternating sign. Note that this directly translates to $|\cH_1| = \lfloor \frac{m}{z+1}\rfloor$. 

We will consider two separate regimes: the dense regime (with $z = 7$ and $|\cH_1| = 25$ nonnull variables) and the sparse regime (with $z \in \{19, 20\}$ and $|\cH_1| \in \{10, 9\}$). By running simulations for both regimes, we can demonstrate the power improvement from e-BH-CC even in the presence of the threshold phenomenon, as the target level will be set at $\alpha = 0.1$. In the dense regime, we will simulate for each $A \in \{3, 4, 5, 6\}$; in the sparse regime, we will simualte for each $A \in \{4, 5, 6, 7\}$. By doing so, we can capture the range of power over a variety of signal magnitudes.

\paragraph{Methods.} The objective is to compare the derandomized knockoffs procedure before and after boosting at the target FDR level $\alpha = 0.1$, so we keep the hyperparameters and implementations the same between the two procedures. Thus, we find it prudent to only detail the basic derandomized knockoffs procedure. The Monte-Carlo details will only pertain to the boosted procedure.

For the dense $\beta$ experiments, we use $d\in \{2, 10\}$ copies of the knockoff design matrix to derandomize. 
We run experiments for multiple values of the hyperparameter $\alpha_\kn$ by choosing a factor $h_\kn \in \{0.5, 0.75, 0.9, 1.0\}$ and 
defining $\alpha_\kn = h_\kn  \alpha =  0.1 h_\kn$. In contrast, for the sparse $\beta$ experiments, we use $d\in \{10, 30\}$ copies of the knockoff design matrix to derandomize and instead consider $h_\kn \in \{0.9, 1.0, 1.2, 1.5\}$. We consider a range of $h_\kn$ to show that e-BH-CC empirically leads to uniform improvements in power. Further, to combat the threshold phenomenon it intuitively helps to have $\alpha_\kn > \alpha$ so that the threshold $T$ is lower and the number of nonzero e-values is consequently higher (to demonstrate this, we let $h_\kn$ range from below to above $1$). The reason for $\alpha_\kn \le \alpha$ in the dense $\beta$ case comes from the extensive experiments in the original derandomized knockoffs paper~\citep{ren2024derandomised}.

\paragraph{Knockoffs details.} When sampling the knockoff design matrix, we use Equation (3.2) in \citet{candes2018panning}, where the $S$-matrix is constructed via the minimum variance-based reconstructibility criterion (see Definition 3.1 in \citet{spector2022powerful}). We compute the lasso-coefficient difference (Equation (3.7), \citet{candes2018panning}) as the feature importance measure. To avoid a computational slowdown from having to repeat cross-validation steps for each call of LASSO regression, we use a separate independently-sampled dataset $(n=500)$ as a ``hold-out dataset'' to pre-compute the $\lambda$ regularization parameter. This pre-computation is repeated for every replication per experiment.

We also use a variant of the threshold $T$ in \eqref{eq:mxkn_threshold} which~\citet{ren2024derandomised} argue lead to a uniform improvement in power for derandomized knockoffs. The alternative threshold
\begin{equation}\label{eq:mxkn_threshold_alt}
  T^{(k)}_{\mathrm{alt}}\coloneqq  \inf\left\{ t > 0 \colon \frac{1 + \sum_{j\in[m]} \indc{W_j^{(k)} \le -t}}
  {\sum_{\ell \in [m]} \indc{W_\ell^{(k)} \ge t}} \le \alpha_\kn \;\textnormal{or}\; 
  \sum_{j\in[m]} \indc{W_j^{(k)} \ge t} < \frac{1}{\alpha_\kn}\right\}
\end{equation}
is an ``early-stopping'' version---when the original threshold is infinite (and all the e-values will be zero), using the alternative stopping time can only increase the e-values. We use $T^{(k)}_{\mathrm{alt}}$ to give the base procedure some more power, while still observing that there is a large power gap to recover.

In our implementation, we frequently make use of the \texttt{knockpy} Python package, which is a Python implementation of the knockoffs procedure~\citep{spector2022powerful}. We build on their functionality to implement derandomized knockoffs.

\paragraph{Computational details.} We again use a hybrid AVCS formulation to control the Monte-Carlo error (see Section \ref{sec:sim-parametric} for details). The first 1200 samples (in batches of 100) are used to construct the exact AVCS, while the next 800 samples are used to construct the asymptotic AVCS. After 2000 total samples, if the resulting AVCS still contains zero we stop early and fail to boost.

In addition, we will filter to cut down the number of e-values which we attempt to boost. For all experiments, we filter out all $j$ such that $e_j = 0$, since they see no benefit from boosting. For $d = 30$, we will additionally filter using 
$$
M = \{j\colon \hat\beta_j^{\mathrm{LASSO}}\neq 0\} \cup \{j\colon p_j^{\mathrm{regression}}\le \alpha\},
$$ 
where $\hat\beta^{\mathrm{LASSO}}$ is the estimated coefficient vector obtained from LASSO regression on $(\bm X, \bm Y)$ and $p_j^{\mathrm{regression}}$ is the p-value corresponding to the $j$th covariate, derived from running OLS with intercept. The filter $M$ attempts to capture covariates which are significant to the outcome while avoiding being too strict. Lastly, we chooose $\alpha_0 = 0.1 \cdot \alpha = 0.01$ and use $\alpha_\avcs$ as described in Algorithm \ref{alg:ebh_cc_acvs}, replacing $m$ with the size of the filter. 

\paragraph{Experiments and results.}
We run 100 replications for each choice of $A, z, d, h_\kn$. 
Each replication consists of constructing the basic derandomized knockoffs e-values and boosting these e-values using conditional calibration. 
In addition, for each choice of $A$ and $z$, we run the original model-X knockoff filter (with the same implementation as derandomized knockoffs). 
This serves as a baseline and a demonstration of the power loss created by derandomization; 
the difference between the two power curves (shown in Figures~\ref{fig:mxkn-tp25},~\ref{fig:mxkn-tp10}, and~\ref{fig:mxkn-tp9}) of original knockoffs and derandomized knockoffs illustrates this power gap. We find that using e-BH-CC on the derandomized knockoffs e-values closes the gap, sometimes even exhibiting comparable-to-better power than the original knockoffs filter. However, we stress that the comparison between e-BH-CC and original knockoffs is technically an apples-to-oranges comparison, as the latter is a randomized procedure. For each procedure, we record its power and false discovery proportion (FDP) and plot their means per setting on the aforementioned figures. 

The results of the dense setting experiment are shown in Figure \ref{fig:mxkn-tp25}. Even when $d=2$, the drop in power exhibited by derandomized knockoffs is significant. However, e-BH-CC is able to improve the power \emph{for all} choices of $d$ and $\alpha_\kn$. The power drop, and subsequent power gain from boosting, are even more apparent when $d=10$. The FDR is controlled well below the theoretical Monte-Carlo-adjusted error.

The results of the sparse setting experiments are shown in Figures \ref{fig:mxkn-tp10} and \ref{fig:mxkn-tp9}. The setting of Figure \ref{fig:mxkn-tp10}, where $|\cH_1| = 1/\alpha  = 10$, experiences a major power drop over all signal amplitudes for both $d=10, 30$. The resulting power gain from e-BH-CC is large---for $d=10$, the power is comparable to the original knockoffs filter, while for $d=30$, the power looks to be marginally better. 

The setting of Figure \ref{fig:mxkn-tp9}, where the threshold phenomenon is in full swing (as $|\cH_1| < 1/\alpha = 10$), shows an even larger contrast between the three procedures. When $h_\kn \ge1$ and for either $d\in\{10,30\}$, derandomized knockoffs shows near 0 power. The original knockoffs procedure also exhibits lower power than in the case where $|\cH_1| =10$. Using e-BH-CC leads to an extremely large power gain over e-BH-CC, going from near-zero power to near-full power for high signal amplitudes. Furthermore, e-BH-CC has much better performance than the regular knockoffs filter. These results suggest that e-BH-CC does not experience a phase transition at the threshold in the same way that the knockoffs filter does. The power curves of e-BH-CC between Figures \ref{fig:mxkn-tp10} and \ref{fig:mxkn-tp9} differ slightly, while that of the knockoffs filter differ greatly, faltering when $|\cH_1| <  1/ \alpha$.

Finally, the results reiterate that even with the asymptotic AVCS trick, we still attain empirical FDR control at $\alpha$ uniformly over all choices of hyperparameters.

\subsection{Connection to cKnockoff}
\label{sec:cknockoff}

The construction of the e-values can be exactly applied to \emph{fixed-X knockoffs}~\citep{barber2015controlling}, the precursor to model-X knockoffs. Although their method is nonrandom in theory, most practical implementations involve randomly sampling an orthonormal matrix. Thus, one can apply the e-value formulation of the fixed-X knockoffs in order to aggregate them for the purpose of derandomization---see Appendix A.1 in \cite{ren2024derandomised} for the exact details.

As mentioned previously in Section \ref{seq:implementing}, \cite{luo2022improving} employ conditional calibration in order to improve the power of the fixed-X knockoffs (and model-X) procedure in the presence of the threhsold phenomenon. Their proposal, named \emph{cKnockoff}, uses an additional feature importance statistic $T_j$ to return the improved rejection set $\cR^\kn\cup\{j\colon T_j \ge \hat c_j \}$, where the thresholds $\hat c_j$ are data-adaptive and conditionally calibrated to achieve their FDR control guarantee. 

Specifically, for each feature $j\in [m]$, they choose a conditioning statistic $S_j$ and budget $B_j(S_j)$ and consider

\begin{equation}
  \label{eq:cknockoff}
  \EE\left[ \frac{\indc{j \in \cR^\kn} \vee \indc{T_j\ge c_j} }{|\cR^\kn \cup \{j\}|}   \Biggiven S_j \right] \le B_j(S_j).
\end{equation}
$\hat c_j$ is chosen as the smallest value of $c_j\in \RR$
such that \eqref{eq:cknockoff} is satsfied. The budget has the restriction that $\sum_{j\in\cH_0} \EE [B_j] \le \alpha$ (unconditionally), so that the calibrated thresholds achieve FDR control.

The parallels between e-BH-CC and cKnockoff are evident by the form of \eqref{eq:cknockoff}. 
One can view this formulation as equivalent to a variant of e-BH-CC, where the decision to boost the e-values also involves some auxiliary statistic $W_j$. 
Using a modification of the $\phi$-function
\begin{equation}
  \label{eq:auxiliary_phi}
   \phi_j(c; S_j) = \EE \left[\frac{m}{\alpha} \cdot \frac{\indc{e_j \ge \frac{m}{\alpha |\hat \cR_j(\bm e)|} \text{ or } W_j \ge c}}
   {|\hat \cR_j(\bm e)|}  - e_j \bigggiven S_j \right],
\end{equation} 
we can then boost $e_j$ to $\frac{m}{\alpha |\hcR_j(\be)|}$ when the critical value of \eqref{eq:auxiliary_phi} is no greater than $W_j$. Observing that 
$$\indc{e_j \ge \frac{m}{\alpha |\hat \cR_j(\bm e)|} }
= \indc{e_j \ge \frac{m}{\alpha (| \cR_j(\bm e)| \vee 1)} }
= \indc{ j \in \cR^\kn }
$$
by the self-consistency of e-BH~\citep{wang2022false} and e-BH interpretation of knockoffs~\citep{ren2024derandomised}, 
we note that the two methods are equivalent when $T_j = W_j$ and the budget is set to the authors' default of $(\alpha/m) \EE[e_j\given S_j]$.
The distinct benefit of our formulation, which was designed for e-values in general, is that we can directly apply it when $e_j$ are now derandomized e-values from either fixed-X or model-X knockoffs.

\section{Example: conformalized outlier detection}
\label{sec:example-conformal}

Given a set of data, we consider the task of identifying the units 
whose distributions differ from the that of a 
reference dataset. 
This problem is known as outlier detection, and it is a fundamental task in many fields.
For example, in finance, an important task is to detect fraudulent user activities in transaction data~\citep{ahmed2016survey};
in medical diagnosis, it is crucial to identify the patients whose 
symptoms or lab results are different from the normal 
population~\citep{tarassenko1995novelty,cejnek2021novelty}; 
in proteomics, neuroscientists are interested in selecting 
the proteins that have higher levels of expression 
in the treatment condition compared with 
the negative controls~\citep{shuster2022situ,gao2023simultaneous}.
More generally, in some applications, the inliers (e.g., the non-fraudulent user activities) may follow 
different distributions in the reference and test datasets---this could happen if 
the reference dataset is collected with some preferences based on observed covariates. 
For example, one might include more individuals from minority groups in the reference dataset to 
ensure representation. In these cases, the goal is then to distinguish the outliers from the inliers 
given an identifiable distribution shift between the inliers and the reference dataset.

Formally, let $Z = (X,Y)$ denote a unit,
where $X\in\cX$ is the covariates and $Y \in \cY$ is the response. 
Assume that we are given a calibration dataset 
$\cD_{\calib} = \{Z_i\}_{i=1}^n$ that are assumed to be 
i.i.d.~drawn from some distribution $P$. For a test dataset of independent 
units $\cD_\test = \{Z_{n+j}\}_{j=1}^m$, 
the objective is to decide whether each unit $Z_{n+j}$ follows from $Q$, where 
\begin{equation}\label{eq:radon_nikodym}
    \frac{\dif Q}{\dif P}(z) = w(x)
\end{equation}
for some known weight function $w\colon \cX\to\RR^+$ 
(where $ \frac{\dif Q}{\dif P}$ denotes the Radon-Nikodym derivative). 
Here, we are restrcting our attention to the distribution shift in the inliers 
that is entirely driven by the observable covariates.
As a special case, 
if $w(z) \equiv 1$ (such that $Q = P$), we are back to the (vanilla) ourlier detection 
problem where the goal is to select test units that are different from the 
calibration units in distribution.

We consider this problem under the multiple testing framework, 
where we can define for each $j\in[m]$ the null hypothesis 
\begin{equation}
    \label{eq:outlier_hypothesis}
    H_j \colon Z_{n+j} \sim Q;
\end{equation}
i.e., $Z_{n+j}$ is an inlier. By rejecting $H_j$, we are expressing the belief 
that $Z_{n+j}$ is an outlier in the dataset. The objective is to choose a subset of these hypotheses to reject while controlling the FDR.

This multiple testing problem has been studied in~\citet{bates2023testing} 
for $P=Q$, and generalized by 
\cite{jin2023weighted} to allow for the covariate shift in the inliers.
The methods proposed therein are based on the so-called (weighted) conformal p-values~\citep{vovk2005algorithmic}.
In what follows, we are to introduce the methods of conformalzed outlier detection, 
establish their equivalent e-BH interpretations, 
and then describe  how to use conditional calibration to boost their power.

Throughout, we assume that we have access to a fixed nonconformity score function $V\colon \cZ\to\RR$ 
that assigns a score to each unit in $\{Z_{i}\colon i\in[n+m]\}$ such that a larger score translates to more 
evidence of being an outlier (though, it need not be accurate to guarantee FDR control).
For example, $V$ can be some model fitted on independent set-aside data.

\subsection{Warm-up: conformal p-values and e-values}

At its very simplest, conformalized outlier detection (with no covariate shift) uses $\cD_{\calib}$ and $V$ 
to construct a p-value for each unit in $\cD_{\test}$.  
Using the shorthand $V_i \coloneqq V(Z_i)$ for $i\in[n+m]$,
the conformal p-value for $H_j$ can be constructed as follows:
\begin{align}\label{eq:conf_pval}
    p_j = \frac{1 + \sum_{i=1}^n \indc{V_i \ge V_{n+j}}  }{n+1},\quad\forall j\in[m].
\end{align}
It can be verified that under $H_j$, 
$p_{j}$ is super-uniform, making it a valid p-value. 
In addition, under the alternative hypothesis, $p_j$ is expected to be smaller---$V_{n+j}$ should tend to be larger than the scores of the ``conforming'' calibration units. 
In the case of ties, we can add randomized tiebreakers to the numerator in order 
to attain exact uniformity. 
Finally,~\cite{bates2023testing} show that although the conformal p-values 
are not independent, they are PRDS, so applying BH to the conformal p-values 
$(p_1,\ldots,p_m)$ will still control the FDR.

We now proceed to construct a collection of conformal e-values. First, define 
\begin{align}\label{eq:conf_threshold}
    T = \inf\left\{ t\in \{V_i\}_{i=1}^{n+m} \colon \frac{m}{n+1} \cdot \frac{1 + \sum_{i=1}^n \indc{V_i \ge t} }{( \sum_{j=1}^m \indc{V_{n+j} \ge t} ) \vee 1} \le \alpha  \right\}.
\end{align}
Using this threshold (which finds the smallest rejection threshold for the scores that controls an estimate of the FDR, similar to the model-X knockoff filter), we can construct a conformal e-value per hypothesis:
\begin{align}\label{eq:conf_eval}
    e_j = (n+1) \cdot \frac{ \indc{V_{n+j} \ge T} }{1 + \sum_{i=1}^n \indc{ V_i \ge T }},\quad\forall j\in[m].
\end{align}
As we shall show later in Proposition~\ref{prop:conf_bh_ebh_equiv}, 
the evalue $e_j$ defined above satisfies $\EE[e_j] \le 1$ for all 
$j\in\cH_0$. Furthermore, applying e-BH to them 
will yield the same rejection set as applying BH to the conformal p-values defined 
in~\eqref{eq:conf_pval}.

\begin{remark}
Such an e-value construction (with a slight adjustment to the threshold $T$) 
was proposed in~\cite{bashari2024derandomized}, whose authors use a martingale 
argument to prove that the conformal e-values are generalized e-values (Definition~\ref{def:generalized_eval}).
By slightly modifying their proof strategy, we can show that the conformal e-values 
in the form of~\eqref{eq:conf_eval} are strict e-values, for a general class of 
thresholds including the one in~\eqref{eq:conf_threshold} and the one in~\citet{bashari2024derandomized}.
The threshold in~\eqref{eq:conf_threshold} is a special case that leads to the 
equivalence between BH and e-BH. 
\end{remark}

\begin{proposition}
    \label{prop:conf_bh_ebh_equiv}
    Suppose $P=Q$.
    Given calibration data $\cD_{\calib}$ and test data $\cD_{\test}$, 
    construct the conformal p-values $p_1, \dots, p_m$ using~\eqref{eq:conf_pval} 
    and conformal e-values $e_1, \dots, e_m$ using~\eqref{eq:conf_eval}. 
    Let $\cR^{\bh}\coloneqq \cR^{\bh}(p_1, \dots, p_m)$ and  $\cR^{\ebh}\coloneqq \cR^{\ebh}(e_1, \dots, e_m)$ be the rejection sets obtained by BH and e-BH, respectively, at some FDR control level $\alpha \in (0,1)$. 
    The following statements hold:
    \begin{itemize}
    \item [(1)] $\EE[e_j] \le 1$ for any $j\in \cH_0$;
    \item [(2)] $\cR^{\bh}=\cR^{\ebh}$.
    \end{itemize}
\end{proposition}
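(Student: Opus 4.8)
The plan is to handle the two claims separately. Part~(1) is the substantive one; part~(2) is a deterministic unwinding of the definitions.

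For part~(1), I would reduce $\EE[e_j]$ to a finite computation by conditioning. Since $P=Q$, all $n+m$ units are i.i.d.\ from $P$, so fixing $j$ I condition on $\cM_j$, the multiset $\{V_1,\dots,V_n,V_{n+j}\}$, together with the remaining test scores $(V_{n+k})_{k\ne j}$. Conditionally, $V_{n+j}$ is uniform over the $n+1$ ``slots'' of $\cM_j$ (ties handled slot-wise), and --- crucially --- the multiset of all $n+m$ scores, hence the grid over which the infimum in~\eqref{eq:conf_threshold} is taken, does not depend on which slot $V_{n+j}$ lands in. Writing $v_1\ge\cdots\ge v_{n+1}$ for the sorted elements of $\cM_j$ and $T_\ell$ for the threshold produced when $V_{n+j}=v_\ell$, a short calculation (using that if $v_\ell\ge T_\ell$ then deleting the occurrence $v_\ell$ from $\cM_j$ drops the count of calibration scores $\ge T_\ell$ by exactly one) gives $e_j=(n+1)\big/\#\{u\in\cM_j:u\ge T_\ell\}$ on $\{v_\ell\ge T_\ell\}$ and $e_j=0$ otherwise, so that $\EE[e_j\mid\cM_j,(V_{n+k})_{k\ne j}]=\sum_{\ell:\,v_\ell\ge T_\ell}\#\{u\in\cM_j:u\ge T_\ell\}^{-1}$.

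The heart of the matter is then a monotonicity lemma for the ``FDP-estimate'' $g_v(t)=\tfrac{m}{n+1}\cdot\tfrac{1+\#\{i:V_i\ge t\}}{\#\{k:V_{n+k}\ge t\}\vee 1}$ that sits inside~\eqref{eq:conf_threshold} when $V_{n+j}=v$: lowering $v$ weakly increases the numerator (one fewer large calibration score) and weakly decreases the denominator (one fewer large test score), so at each fixed grid point $g_v(t)$ is non-decreasing as $v$ decreases; hence $T_1\le\cdots\le T_{n+1}$, and $L:=\{\ell:v_\ell\ge T_\ell\}$ is a prefix $\{1,\dots,\ell^\star\}$. For every $\ell\in L$ we then have $T_\ell\le T_{\ell^\star}\le v_{\ell^\star}$, so $\#\{u\in\cM_j:u\ge T_\ell\}\ge\#\{u\in\cM_j:u\ge v_{\ell^\star}\}\ge\ell^\star$, and therefore $\EE[e_j\mid\cM_j,(V_{n+k})_{k\ne j}]\le\ell^\star\cdot(1/\ell^\star)=1$; taking expectations proves~(1). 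I expect this monotonicity lemma, together with a careful treatment of ties (which can be avoided by assuming the score law is continuous, or by the randomized tie-break noted after~\eqref{eq:conf_pval}), to be the main obstacle; morally this is the martingale-style argument of~\cite{bashari2024derandomized} adapted to the threshold~\eqref{eq:conf_threshold}.

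For part~(2), note that $e_j=(n+1)\,\indc{V_{n+j}\ge T}\big/D$ with $D:=1+\sum_{i=1}^n\indc{V_i\ge T}$ the same for all $j$, so the nonzero $e_j$ all equal $(n+1)/D$; a short look at e-BH run on such two-valued inputs shows $\cR^{\ebh}=\{j:V_{n+j}\ge T\}$ exactly when $R:=\#\{j:V_{n+j}\ge T\}\ge mD/(\alpha(n+1))$, which is precisely the inequality $g(T)\le\alpha$ ($g$ being the function inside~\eqref{eq:conf_threshold}) furnished by the definition of $T$ (the cases $R=0$ and $T=\infty$ both give $\cR^{\ebh}=\varnothing$ and are dispatched in a line). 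On the BH side, since $p_j$ is non-increasing in $V_{n+j}$, the set $\cR^{\bh}$ consists of the units with the largest test scores and is the largest self-consistent set $\{S:\,p_j\le\alpha|S|/m\ \forall j\in S\}$; using the identity $g(V_{n+(k)})=m\,p_{(k)}/k$ (with $V_{n+(k)}$ the $k$th largest test score and $p_{(k)}$ its $p$-value) one verifies that $\{j:V_{n+j}\ge T\}$ is self-consistent --- each such $p_j\le D/(n+1)\le\alpha R/m$ --- while minimality of $T$ forces $p_{(k)}=k\,g(V_{n+(k)})/m>\alpha k/m$ for every test score lying below $T$, so no strictly larger self-consistent set exists; hence $\cR^{\bh}=\{j:V_{n+j}\ge T\}=\cR^{\ebh}$. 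Beyond the ties and edge cases, part~(2) is routine.
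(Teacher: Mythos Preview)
Your proposal is correct. For part~(2) you follow essentially the same route as the paper: both arguments identify $\cR^{\bh}$ and $\cR^{\ebh}$ with the common set $\{j:V_{n+j}\ge T\}$, using the two-valued structure of the conformal e-values on one side and the relation between $g(V_{n+(k)})$ and $p_{(k)}$ on the other.

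For part~(1), however, your approach is genuinely different. The paper embeds the problem in a discrete-time filtration indexed by the ordered pooled scores $V_{\pi(1)}\le\cdots\le V_{\pi(n+m)}$, shows that $M(k)=R_0(k)/(1+N(k))$ is a supermartingale via a hypergeometric calculation, recognizes the threshold in~\eqref{eq:conf_threshold} as a stopping time, and applies optional stopping. This buys generality: the paper in fact proves $\EE[e_j]\le 1$ for \emph{any} threshold of the form $V_{\pi(\tau)}$ with $\tau$ a stopping time adapted to that filtration, of which~\eqref{eq:conf_threshold} is only one instance. Your argument instead conditions on $\cM_j$ and $(V_{n+k})_{k\ne j}$ and hinges on a monotonicity property specific to the FDP estimate in~\eqref{eq:conf_threshold} --- lowering the slot assigned to $V_{n+j}$ can only raise $T_\ell$ --- so that the rejection slots form a prefix and each term in the conditional expectation is at most $1/\ell^\star$. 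This is more elementary (no martingales, just a combinatorial monotonicity and a prefix bound) and is closer in spirit to the paper's own proof of the \emph{weighted} analogue, Proposition~\ref{prop:conf_eval_valid}, which likewise conditions on the unordered bag $\cE_j$. Indeed you could shortcut your monotonicity lemma entirely by the same device used there: on $\{V_{n+j}\ge T\}$ one checks that $T$ coincides with $\hat T_j:=\inf\big\{t:\tfrac{m}{n+1}\cdot\tfrac{\#\{u\in\cM_j:u\ge t\}}{1+\sum_{k\ne j}\indc{V_{n+k}\ge t}}\le\alpha\big\}$, which is measurable with respect to your conditioning $\sigma$-field, yielding $\EE[e_j\mid\cM_j,(V_{n+k})_{k\ne j}]=1$ directly and making ties a non-issue. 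The trade-off for either version of your route is that it does not immediately extend to other stopping-time thresholds.
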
 
The proof of Proposition~\ref{prop:conf_bh_ebh_equiv} is provided in Appendix~\ref{app:dbh_equiv}.
\begin{remark}
Recall that one of the major sources e-BH's slackness is the 
inequality $\ind\{e_j \ge \frac{m}{\alpha|\cR|}\} \le \alpha |\cR| e_j / m$.
With the threshold in~\eqref{eq:conf_threshold}, e-BH is almost tight in this step:
$e_j$ is either $0$ or $(n+1)/(1+\sum_{i\in[n]}\ind\{V_i \ge T\})$;
the nonzero value is very close to the e-BH threshold $m/(\alpha 
\sum_{j}\ind\{V_{n+j}\ge T\})$ by the definition of $T$.
This, however, is no longer the case when we extend this result to the weighted conformal e-values,
or when one combines two e-values. 
In those cases, our conditional calibration method can be used to fill the gap and
 boost the power of e-BH. 
\end{remark}
\begin{remark}
As a side note, the selection set $\hcR^\bh$ is also 
proposed by~\cite{gao2023simultaneous} in a slightly different context.
One might also notice that a third way to define the selection set is via 
$\hcR^{\text{thres}} \coloneq \{j\in[m]: V_{n+j} \ge T\}$, which is indenpendently 
proposed in \citet{weinstein2017power} and~\citet{mary2022semi} under 
different settings. 
We find that this selection set $\hcR^{\text{thres}}$ is also equivalent to 
$\hcR^{\bh}$ and $\hcR^{\ebh}$, which is implied by 
our proof of Proposition~\ref{prop:conf_bh_ebh_equiv}.
\end{remark}

\subsection{Extension to covariate shift}

We now return to the more general setting, where the inliers in $\cD_\calib$ and $\cD_\test$ 
are respectively drawn from distributions $P$ and $Q$ that can be different due to a covariate shift. 
Recall that we assume that the Radon-Nikodym derivative between $P$ and $Q$ is known 
to be $w(x)$---a function of only the covariates.

For such a setting,~\cite{tibshirani2019conformal,hu2023two,jin2023weighted} 
have defined weighted analogues of the p-values in \eqref{eq:conf_pval}:
\begin{align}\label{eq:weighted_pval}
    p_j = \frac{w(X_{n+j}) + \sum_{i=1}^n w(X_i)\indc{V_i \ge V_{n+j}}}{w(X_{n+j}) 
    + \sum_{\ell=1}^n w(X_\ell)},\quad\forall j\in[m].
\end{align}
It can be shown that $p_j$ is still super-uniform and is thus valid. 
To achieve exact uniformity, we can again extend the construction 
by introducing random tie-breakers.

\cite{jin2023weighted} show that the weighted conformal p-values no longer satisfy the PRDS condition, which means BH no longer guarantees FDR control 
when applied to weighted conformal p-values. The authors propose a new procedure, 
called weighted conformal selection (WCS), which provably controls the FDR. 
We instead present a more straightforward e-BH alternative that is 
almost equivalent to (if not more powerful than) WCS.
We will define weighted conformal e-values, which are weighted analogues of \eqref{eq:conf_eval}, and run e-BH. Similar to their unweighted versions, we first define the following hypothesis-specific thresholds:
for $\forall j\in [m]$,
\begin{equation}\label{eq:w_conf_threshold}
    T_j = \inf\left\{t \in \{V_i\}^{n+m}_i\colon 
    \frac{ m }{w(X_{n+j})+\sum_{i=1}^n w(X_i)} \cdot 
    \frac{ w(X_{n+j})+\sum_{i=1}^n w(X_i)  \indc{V_i \ge t }  }
    {( \sum_{k=1}^m \indc{V_{n+k} \ge t} ) \vee 1} \le \alpha \right\}.
\end{equation}
Using $T_j$, we then construct the e-value $e_j$:
\begin{equation}\label{eq:w_conf_eval}
    e_j = \bigg(w(X_{n+j}) +\sum_{i=1}^n w(X_i)\bigg) \cdot 
    \frac{\indc{V_{n+j} \ge T_j}}{w(X_{n+j}) + \sum_{i=1}^n w(X_i) \indc{V_i \ge T_j}}.
\end{equation}
One can notice that in the presence of no covariate shift, where $w(x)
\equiv 1$, the weighted conformal e-values coincide with their unweighted versions.
The next proposition shows that the weighted conformal e-values
defined above are still valid e-values; its proof is deferred to Appendix~\ref{appd:conf_eval_weighted}.
\begin{proposition}\label{prop:conf_eval_valid}
For each $j\in[m]$, construct $e_j$ by~\eqref{eq:w_conf_eval} 
with $T$ defined in~\eqref{eq:w_conf_threshold}. 
Then $\forall j\in\cH_0, \EE[e_j] = 1$.
\end{proposition}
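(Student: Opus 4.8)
The plan is a weighted-exchangeability argument followed by a telescoping identity, mirroring the unweighted computation behind part (1) of Proposition~\ref{prop:conf_bh_ebh_equiv}. Fix $j\in\cH_0$, so $Z_{n+j}\sim Q$ while $Z_1,\dots,Z_n\sim P$ i.i.d. Since $\{Z_{n+k}:k\neq j\}$ is independent of $(Z_1,\dots,Z_n,Z_{n+j})$, it suffices to show $\EE[e_j\mid \cF_{-j}]=1$, where $\cF_{-j}=\sigma(\{Z_{n+k}:k\neq j\})$; I would then condition further on the unordered pool $\cE_j=\{\!\{Z_1,\dots,Z_n,Z_{n+j}\}\!\}$. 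Listing the pooled units as $z^{(1)},\dots,z^{(n+1)}$ with weights $\omega_r:=w(x^{(r)})$ and scores $\nu_r:=V(z^{(r)})$, the joint density of $(Z_1,\dots,Z_n,Z_{n+j})$ with respect to $P^{\otimes(n+1)}$ is $w(x_{n+j})$ under $H_j$, so conditionally on $(\cF_{-j},\cE_j)$ the event $\{Z_{n+j}=z^{(\ell)}\}$ has probability $\omega_\ell/S$, where $S:=\sum_{r=1}^{n+1}\omega_r$.

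Next I would evaluate $e_j$ in the configuration where $Z_{n+j}=z^{(\ell)}$. Two simplifications do the work: (i) $w(X_{n+j})+\sum_{i=1}^n w(X_i)=\omega_\ell+(S-\omega_\ell)=S$ is independent of $\ell$; and (ii) with $N(t):=\sum_{r=1}^{n+1}\omega_r\indc{\nu_r\ge t}$ and $c(t):=\sum_{k\neq j}\indc{V_{n+k}\ge t}$, the denominator of \eqref{eq:w_conf_eval} in this configuration is $\omega_\ell+N(t)-\omega_\ell\indc{\nu_\ell\ge t}$, which collapses to $N(t)$ on $\{\nu_\ell\ge t\}$. Hence, on $\{V_{n+j}\ge T_j\}$, the e-value in configuration $\ell$ is exactly $S/N(T_j^{(\ell)})$, where $T_j^{(\ell)}$ denotes the threshold \eqref{eq:w_conf_threshold} evaluated in that configuration.

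The crux is to show that a single threshold serves all configurations: with $\rho(t):=\tfrac{m}{S}\cdot\tfrac{N(t)}{(1+c(t))\vee 1}$ (which does not depend on $\ell$) and $t^\star:=\inf\{t\in\{V_i\}_i^{n+m}:\rho(t)\le\alpha\}$, one has, for every $\ell$, that $\indc{\nu_\ell\ge T_j^{(\ell)}}=\indc{\nu_\ell\ge t^\star}$ and $T_j^{(\ell)}=t^\star$ on this event. This follows because for candidates $t\le\nu_\ell$ the ratio in \eqref{eq:w_conf_threshold} for configuration $\ell$ equals $\rho(t)$: if $\nu_\ell\ge t^\star$ then every candidate below $t^\star$ has $\rho>\alpha$ while $\rho(t^\star)\le\alpha$, forcing $T_j^{(\ell)}=t^\star$; if $\nu_\ell<t^\star$ then a candidate $t\le\nu_\ell$ with ratio $\le\alpha$ would give $\rho(t)\le\alpha$ with $t<t^\star$, contradicting the definition of $t^\star$, so $T_j^{(\ell)}>\nu_\ell$ and the e-value vanishes. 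Granting this, the conditional expectation telescopes:
\begin{equation*}
  \EE[e_j\mid \cF_{-j},\cE_j]=\sum_{\ell=1}^{n+1}\frac{\omega_\ell}{S}\cdot\frac{S}{N(t^\star)}\,\indc{\nu_\ell\ge t^\star}=\frac{\sum_{\ell:\,\nu_\ell\ge t^\star}\omega_\ell}{N(t^\star)}=1,
\end{equation*}
and averaging over the conditioning yields $\EE[e_j]=1$.

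I expect the delicate point to be the degenerate case in this last step: if no candidate satisfies $\rho(t)\le\alpha$ (so $t^\star=+\infty$) or if $N(t^\star)=0$, then $e_j^{(\ell)}=0$ in every configuration and the conditional expectation is $0$ rather than $1$. I would rule these out under the standing assumptions --- e.g.\ continuously distributed, almost surely distinct scores together with enough calibration points that some calibration quantile drives $\rho$ below $\alpha$ --- and note that without such an assumption one only retains $\EE[e_j]\le 1$, which still gives a valid e-value (as in Proposition~\ref{prop:conf_bh_ebh_equiv}(1)). A secondary technicality is that for candidates $t>\nu_\ell$ the ratio in configuration $\ell$ takes a different form (the numerator gains an extra $\omega_\ell$ and the test count drops by one), so one must confirm, as in the argument above, that no such candidate can cross $\alpha$ before $t^\star$.
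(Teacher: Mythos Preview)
Your argument is correct and is essentially the paper's proof in different clothing: your common threshold $t^\star$ coincides with the paper's auxiliary threshold $\hat T_j$ in~\eqref{eq:w_conf_threshold_alt}, and both proofs rest on (i) the identity $T_j=\hat T_j$ on $\{V_{n+j}\ge T_j\}$ and (ii) the weighted-exchangeability law~\eqref{eq:weighted_exch_cond} for $Z_{n+j}\mid\cE_j$. Your observation about the degenerate case is apt and applies equally to the paper's argument (handled there only through the convention $0/0=0$); strictly speaking the conclusion is $\EE[e_j]\le 1$, with equality whenever $t^\star$ is almost surely finite --- which is all that is needed for e-value validity.
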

By Proposition~\ref{prop:conf_eval_valid}, applying e-BH to the weighted conformal e-values 
will control the FDR, but unlike the previous case, this is no longer equivalent to  
applying BH to the weighted conformal p-values.

The rejection set of e-BH, however, is (almost) identical to that of WCS with deterministic 
pruning. By ``almost identical'', we mean the following:~\citet{jin2023weighted} also propose 
an e-BH interpretation of WCS, and the corresponding e-values therein are provably no greater than 
the e-values constructed here  (but the gap is usually very small). 
A detailed discussion on the connection will be delegated to Appendix~\ref{appd:ebh_wcs}. 
Effectively, we can expect the two methods to deliver similar empirical performance.

In practice, the power of WCS can be improved at the cost of randomization: instead of deterministically pruning to get the resulting rejection set, a randomized pruning rule can be used instead~\citep{jin2023weighted}. 
The power gap between randomized and regular WCS can be quite significant in certain problem settings. This gap and the e-BH equivalence with WCS suggests that e-BH-CC has the potential to exhibit much higher power without invoking randomization. The next subsection demonstrates how we can fit the framework of conditional calibration to conformalized outlier detection.

\subsection{Conditional calibration for weighted conformal selection}

As in the other sections, we first identify the sufficient statistic for conditioning then prove that the constructed resampled e-values 
follow the desired conditional distribution. 
The proof of Proposition \ref{prop:conf_eval_valid} suggests that the 
statistic $S_j = (\cE_j,  \{Z_{n+k}\}_{k \in [m]\setminus\{j\}})$ 
can serve as a conditioning statistic, where $\cE_j$ is the unordered 
set of $\{Z_1,\ldots,Z_n, Z_{n+j}\}$ with repetitions allowed.

Formally, fix $j\in[m]$ and choose the sufficient statistic $S_j = (\cE_j,  \{Z_{n+k}\}_{k \in [m]\setminus\{j\}})$ of the combined data $\cD_\calib, \cD_\test$. Conditional on $S_j$, resample
\begin{equation}
    \label{eq:w_z_resampled}
    \tZ_{n+j} \sim Z_{n+j}\cond \cE_j,  \{Z_{n+k}\}_{k \in [m]\setminus\{j\}}\sim \sum_{Z \in \cE_j} \frac{w(X)}{\sum_{Z'\in \cE_j} w(X')} 
    \cdot \delta_Z,
\end{equation}
where $Z = (X,Y)$, $Z' = (X',Y')$, and $\delta_a$ denotes the point mass at $a$.
For the remaining elements $\{Z \in \cE_j\colon Z \neq \tZ_{n+j}\}$, 
arbitrarily assign them to $\tZ_1, \dots, \tZ_n$ without replacement by random permutation
 (their order will not matter---observe that $T$ does not depend on the ordering of $\cD_\calib$).
  Using the resampled calibration dataset $\tilde\cD_\calib^{(j)} = \{\tZ_k\}_{k \in [n]}$ 
  and resampled test dataset $\tilde\cD_\test^{(j)} = \{\tZ_{n+k}\}_{k \in [m]}$, where 
  $\tZ_{n+k} = Z_{n+k}$ for $k \neq j$,
  construct the thresholds (for each $k\in[m]$) defined in \eqref{eq:w_conf_threshold}, denoted $\tT_k^{(j)}$ to highlight its resampled nature. We then define the resampled e-values
\begin{equation}
    \label{eq:w_conf_eval_resampled}
    \te_k^{(j)} = \bigg(w(\tX_{n+j}) +\sum_{i=1}^n w(\tX_i)\bigg) 
    \cdot \frac{\ind\{V(\tZ_{n+k}) \ge \tT_k^{(j)}\}}{w(\tX_{n+k}) + 
    \sum_{i=1}^n w(\tX_i) \ind\{V(\tZ_i) \ge \tT_k^{(j)}\}},\quad\forall k\in[m],
\end{equation}
where again we denote $\tZ_i = (\tX_i,\tY_i)$ for $i \in [n+m]$.

We express the correctness of our resampling scheme in the following proposition.
\begin{proposition}
    \label{prop:conf_eval_weighted}
    For each $j\in[m]$, choose sufficient statistic $S_j = (\cE_j,  \{Z_{n+k}\}_{k \in [m]\setminus\{j\}})$ and construct the thresholds $\tT_k^{(j)}$ and their corresponding e-values $\te_k^{(j)}$ as defined in \eqref{eq:w_conf_eval_resampled}. Then 
    $$
    \left(\te_1^{(j)}, \ldots, \te_m^{(j)} \right)\cond S_j \sim (e_1, \ldots, e_m) \cond S_j
    $$
    under the null hypothesis $H_j \colon Z_{n+j} \sim Q$. 
    
    Furthermore, by choosing independent assignments of $(\tilde Z_{n+j}, \tZ_1, \dots, \tZ_n)$ by \eqref{eq:w_z_resampled} for each resample, the corresponding resampled e-values are i.i.d. conditional on $S_j$. 
\end{proposition}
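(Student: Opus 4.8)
The plan is to reduce the claim to the weighted-exchangeability structure already exploited in the proof of Proposition~\ref{prop:conf_eval_valid}, together with the observation that the whole vector of weighted conformal e-values is a deterministic function of the data which is \emph{symmetric} in (i.e.\ does not depend on the labelling of) the calibration points. \textbf{Step 1 (the key distributional fact).} Under $H_j$ the points $Z_1,\dots,Z_n$ are i.i.d.\ from $P$ and, independently, $Z_{n+j}\sim Q$ with $\dif Q/\dif P = w$, while $\{Z_{n+k}\}_{k\neq j}$ is independent of $(Z_1,\dots,Z_n,Z_{n+j})$. The standard weighted-exchangeability lemma --- which is precisely the mechanism behind Proposition~\ref{prop:conf_eval_valid}; see also~\citet{tibshirani2019conformal,jin2023weighted} --- then gives that, conditional on the unordered multiset $\cE_j=\{Z_1,\dots,Z_n,Z_{n+j}\}$ and on $\{Z_{n+k}\}_{k\neq j}$, the conditional law of $Z_{n+j}$ is the $w$-weighted empirical distribution
\[
  Z_{n+j}\biggiven S_j \;\sim\; \sum_{Z\in\cE_j}\frac{w(X)}{\sum_{Z'\in\cE_j}w(X')}\,\delta_Z ,
\]
i.e.\ exactly the resampling law in~\eqref{eq:w_z_resampled}; given this draw, the calibration multiset is $\cE_j$ with the selected point removed. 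The first thing I would do is record this fact carefully, checking that one may condition simultaneously on $\cE_j$ and on the other test points thanks to the independence noted above.

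\textbf{Step 2 (the e-value vector is a symmetric function of the data).} Inspecting~\eqref{eq:w_conf_threshold} and~\eqref{eq:w_conf_eval}, each $e_k$ depends on $(\cD_\calib,\cD_\test)$ only through the test points $Z_{n+1},\dots,Z_{n+m}$ and through the calibration pairs $\{(V_i,w(X_i))\}_{i=1}^n$ entered as an unordered collection, since $\sum_i w(X_i)\indc{V_i\ge t}$ and $\sum_i w(X_i)$ are symmetric in $i$. Hence there is a fixed measurable map $\Phi$ with $(e_1,\dots,e_m)=\Phi\big(\{Z_i\}_{i=1}^n;\,Z_{n+1},\dots,Z_{n+m}\big)$ that is invariant to permuting its calibration argument. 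Consequently, once $S_j$ is fixed, $(e_1,\dots,e_m)$ is a deterministic function of which element of $\cE_j$ equals $Z_{n+j}$ alone, and the (arbitrary) order in which the remaining elements are listed as $\tZ_1,\dots,\tZ_n$ is irrelevant --- this is the content of the parenthetical remark preceding the proposition.

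\textbf{Step 3 (matching the two laws and independence).} Combining Steps 1 and 2: conditionally on $S_j$ and under $H_j$, $(e_1,\dots,e_m)$ is distributed as $h(Z_{n+j},S_j)$, where $h(z,S_j)$ evaluates $\Phi$ at the configuration obtained by putting $z$ in the $j$th test slot, the remaining points of $\cE_j$ in the calibration slots (in any order), and $Z_{n+k}$ in the $k$th test slot for $k\neq j$, and where $Z_{n+j}$ has the $w$-weighted law on $\cE_j$. The resampling scheme~\eqref{eq:w_z_resampled}--\eqref{eq:w_conf_eval_resampled} produces exactly $h(\tZ_{n+j},S_j)$: $\tZ_{n+j}$ is drawn from the \emph{same} weighted law, the leftover points are placed in $\tilde\cD_\calib^{(j)}$ by an (irrelevant) random permutation, and $\te_k^{(j)}$ is computed from $(\tilde\cD_\calib^{(j)},\tilde\cD_\test^{(j)})$ by the same formulas~\eqref{eq:w_conf_threshold}--\eqref{eq:w_conf_eval}, i.e.\ by $\Phi$. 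Therefore $(\te_1^{(j)},\dots,\te_m^{(j)})\cond S_j \eqd (e_1,\dots,e_m)\cond S_j$ under $H_j$. For the independence claim, note each resample uses only $S_j$ plus fresh external randomness, so we may write $(\te_1^{(j)},\dots,\te_m^{(j)})=g(S_j,U)$ for a fixed measurable $g$ (encoding the weighted draw and the permutation) and $U\sim\mathrm{Unif}([0,1])$ independent of everything; i.i.d.\ copies of $U$ then yield resampled e-value vectors that are i.i.d.\ conditional on $S_j$, each with the law just identified.

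I expect the only delicate point to be Step~1: pinning down the precise conditioning event under which $Z_{n+j}$ has the stated $w$-weighted law, and verifying that no residual information about $Z_{n+j}$ leaks through $\{Z_{n+k}\}_{k\neq j}$ under $H_j$. Once this weighted-exchangeability statement is in hand --- which, as noted, is essentially re-used from the proof of Proposition~\ref{prop:conf_eval_valid} --- the remaining steps are routine: checking that the e-value map $\Phi$ is symmetric in the calibration units, and that the resampling reproduces the same map applied to a correctly distributed reconfiguration of $\cE_j$.
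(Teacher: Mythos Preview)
Your proposal is correct and follows essentially the same route as the paper: both reduce to the weighted-exchangeability characterization of the conditional law of the data given $S_j$ under $H_j$, then observe that the resampling scheme reproduces this law. The only cosmetic difference is that the paper matches the \emph{full} joint conditional law of $(Z_1,\dots,Z_n,Z_{n+j})$ given $\cE_j$ directly---computing $\PP(Z_{n+j}=z_{\sigma(n+1)},Z_1=z_{\sigma(1)},\dots,Z_n=z_{\sigma(n)}\mid\cE_j)=w(x_{\sigma(n+1)})/(n!\sum_i w(x_i))$ and checking this equals the resampling law---whereas you match only the marginal of $Z_{n+j}$ and then invoke the permutation-symmetry of the e-value map $\Phi$ to dispose of the calibration ordering; both arguments are valid and equivalent in substance.
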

The proof of Proposition~\ref{prop:conf_eval_weighted} is provided in Appendix~\ref{appd:conf_eval_resample}.

\subsection{Simulation studies}

\begin{figure}[t]
    \centering
    \includegraphics[scale=1.0]{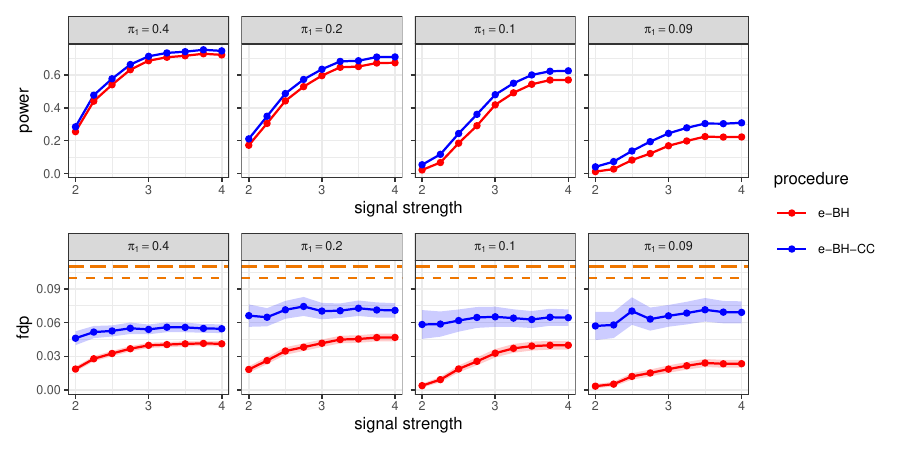} 
    \caption{Realized power and FDP of the simulated experiments for unsupervised outlier detection. Each plot contains the averaged metrics over 1,000 replications. The orange dashed lines in the FDP plots represent the target FDR (0.1) and the FDR bound with Monte-Carlo error (0.11). Shading represents error bars of two standard errors above and below.}
    \label{fig:outlier}
    \centering
  \end{figure}

We run simulations in the unsupervised outlier detection problem (i.e., $Z_i = X_i$ for each $i\in[n+m]$) in the presence of covariate shift to show the power gained by using e-BH-CC on weighted conformal e-values \eqref{eq:w_conf_eval}.

\paragraph{Settings.}
We closely follow the outlier detection setup in \cite{jin2023weighted}. The authors state that as $n, m$ get larger, the selection set of WCS approaches that of BH. To show the benefit of our method, we take $n = m =200$ so that we avoid that limiting regime. In addition, to get a sense of the power drop and subsequent power gain over both sparse and dense outlier regimes, we vary the proportion of outliers $\pi _1\in \{ 0.09, 0.1, 0.2, 0.4 \}$.

At the beginning of each experiment, we sample 50 i.i.d. draws from $\mathrm{Unif} ([-3,3]^{50})$ to get an initial subset of points $\cW  \subseteq \RR^{{50}}$. For each of the $ \pi_1 n$ outliers in the test set, generate them i.i.d. as $X_{n+j} = \sqrt{a} L_{n+j} + W_{n+j}$, where $L_{n+j} \sim \cN_{50}(0, I)$,  $W_{n+j} \sim \mathrm{Unif}(\cW)$ and the signal strength $a$ varies through $\{ 2, 2.25, 2.5, \dots, 4\}$. The inliers of the test set are instead generated i.i.d. as $X_{n+j} = L_{n+j} + W_{n+j}$, whose distribution we will denote as $Q_X$. The calibration dataset, all of which are inliers, is generated i.i.d from $P_X$, where $\frac{\dif Q_X}{\dif P_X} (x) = w(x) \propto \sigma(x^\top \theta)$. Here, $\sigma(\cdot)$ is the sigmoid function and $\theta \in \RR^{50}$ is defined as $\theta_j = (0.3, 0.3, 0.2, 0.2, 0.1, 0.1, 0, \dots, 0)$. As \cite{jin2023weighted} state, this choice reflects a situation where the calibration dataset is sampled weighted by preference given by a logistic function of $X$.

\paragraph{Methods.}
We construct weighted conformal e-values as in \eqref{eq:w_conf_eval}, using $w(\cdot)$ as described above and $V(\cdot)$ equal to a one-class SVM trained on an independent hold-out set of 500 calibration inliers, weighted by $w(\cdot)$. We use the implementation found in the \texttt{scikit-learn} Python package, choosing the default \texttt{rbf} kernel. 
We then run e-BH and e-BH-CC, both at level $\alpha = 0.1$, on the resulting e-values. 

\paragraph{Computational details.} 
We again use a hybrid AVCS formulation to control the Monte-Carlo error (see Section \ref{sec:sim-parametric} for details). The first 1,500 samples (in batches of 100) are used to construct the exact AVCS, while the next 1,000 samples are used to construct the asymptotic AVCS. After 2,500 total samples, if the resulting AVCS still contains zero we stop early and fail to boost the particular e-value.

\paragraph{Experiments and results.}
For each choice of $\pi_1$ and $a$, we run 1,000 replications of both e-BH and e-BH-CC. Figure \ref{fig:outlier} reports their mean power and FDP over varying signal strengths for each choice of $\pi_1$. e-BH-CC is able to improve upon e-BH while continuing to control the FDR at $\alpha$, regardless of the density of outliers. It is worth noting that there is a larger power gain when the outliers become sparser in the test dataset, reflecting the notion that the difficulty of the testing problem and the looseness of e-BH are interconnected.

\section{Real data analysis}
\label{sec:realdata}
We apply our method to the  national study of learning mindset (NSLM) observational study  
dataset~\citep{carvalho2019assessing}, with the goal of identifying individuals 
whose counterfactual outcomes satisfy certain conditions. In this study, 
the intervention is to instill students with a learning mindset---the belief that intelligence can be developed, 
as opposed to being fixed---and the outcome of interest is their academic performance. 
The NSLM dataset contains observations on $n=10,\!391$ students, among which 
$3,\!384$ students received the intervention and $7,\!007$ students did not.
Each student is associated with a set of $11$ covariates, including four student-level 
covariates and seven school-level covariates (more details about the 
covariates can be found in Table 1 of~\citet{carvalho2019assessing}). 

\paragraph{Problem setup.}
For each individual, denote by $X$ the covariates, $T \in \{1,0\}$
the intervention status ($T=1$ corresponds to receiving the intervention and 
$T=0$ otherwise), and $Y \in \RR$ the academic performance. We adopt the potential outcome 
framework~\citep{rubin1974estimating}, defining $Y(1)$ and $Y(0)$ as the 
academic performance of the student had this student received or not received the intervention, respectively. 
Under the standard \emph{stable unit treatment value assumption (SUTVA)}~\citep{imbens2015causal}, 
the observed outcome $Y = T \cdot Y(1) + (1-T) \cdot Y(0)$. 
Throughout, we also adopt the \emph{strong ignorability assumption}~\citep{imbens2015causal},
which states that
\$ 
(Y(1), Y(0)) \indep T \given X.
\$
Given a set of i.i.d.~data $\{(X_i, T_i, Y_i)\}_{i \in [n]}$, we seek to 
detect the individuals with large $Y(1)$ in the control group and 
those with small $Y(0)$ in the treatment group, while controlling the FDR. 


To proceed, we randomly split the dataset into three folds: the training fold $\cD_\train$, 
the calibration fold $\cD_\calib$, and the test fold $\cD_\test$. In what follows, we shall 
slightly abuse the notation and let $\cD_\train$, $\cD_\calib$, and $\cD_\test$ also 
refer to the corresponding index sets. Define  $\cD_\calib(t) = \{i \in \cD_\calib: T_i = t\}$
and $\cD_\test(t) = \{i \in \cD_\test: T_i = t\}$ for $t = 0,1$. 
We consider the following two tasks:
\begin{enumerate}
\item [(1)] {\em Inference on the control}: find a subset of $\cD_\test(0)$ whose $Y(1) > 0.3$ 
with FDR controlled by $\alpha$.
\item [(2)] {\em Inference on the treated}:  find a subset of $\cD_\test(1)$ with $Y(0)<-0.3$
with FDR controlled by $\alpha$.
\end{enumerate}

For the ease of presentation, we focus on task (1), with the details for 
task (2) provided in Appendix~\ref{appd:real_data}. 
Following~\citet{jin2023weighted}, we consider this problem under the multiple testing framework, 
where the null hypothesis for any $j \in \cD_\test(0)$ is 
\@ \label{eq:hypothesis_ite}
H_{j}: Y_j(1) \leq 0.3.
\@
Here, rejecting $H_j$ corresponds to identifying an individual $j$ who 
could have a outcome greater than $0.3$ had they received the intervention. 
In slight contrast to the standard multiple testing setting, the null hypothesis 
$H_j$ is random (see~\citet{jin2023weighted} for a detailed discussion) and we focus 
on the {\em hypothesis-conditional FDR}  defined as 
\$ 
\fdr[\cR] = \EE\left[\frac{\sum_{j \in \cH_0} 
\ind\{j\in \cR\}}{|\cR|\vee 1} \Biggiven \cH_0\right].
\$
From now on, we condition on $\cH_0$.
\paragraph{The conformal e-values.}
For task (1), we adopt a subset of $\cD_\calib$ as the calibration data: 
$\tD_\calib\coloneq \{i \in \cD_\calib: T_i = 1, Y_i\le 0.3\}$.
By the selection rule, for any $i \in \tD_\calib$, 
\$ 
(X_i,Y_i(1)) \given T_i = 1, Y_i(1) \le 0.3 \sim P_{X,Y(1)\given T=1, Y(1)\le 0.3}.
\$
For any $j\in \cH_0$---the inliers---we have
\$ 
(X_j, Y_j(1)) \given \cH_0, T_j = 0 \sim P_{X,Y(1)\given T=0, Y(1)\le 0.3}.
\$
We can compute the likelihood ratio between the test and calibration inliers,  
\$
\frac{dP_{X,Y(1) \given T=0,Y(1)\le 0.3}}
{dP_{X,Y(1) \given T=1, Y(0) \le 0.3}} 
& = \frac{dP_{Y(1) \given X,T=0, Y(1) \le 0.3}}{dP_{Y(1) \given X,T=1, Y(1) \le 0.3}}  
\cdot \frac{dP_{X \given T = 0, Y(1) \le 0.3}}{dP_{X \given T = 1, Y(1) \le 0.3}}\\
& \stepa{=} \frac{dP_{X \given T = 0, Y(1) \le 0.3}}{dP_{X \given T = 1, Y(1) \le 0.3}}\\
& \stepb{=} \frac{\PP(T=1, Y(1) \le 0.3)}{\PP(T=0, Y(1)\le 0.3)} \cdot \frac{1-e(x)}{e(x)}\\ 
& \propto \frac{1-e(x)}{e(x)} \,=:\, w(x).
\$
Above, steps (a) and (b) follow from the strong ignorability assumption, and 
$e(x) \coloneq \PP(T=1 \given X=x)$ is called the propensity score function.
Since the likelihood ratio is a function of only the covariates, the task can 
fit into the framework introduced in Section~\ref{sec:example-conformal}.

Given a nonconformity score function $V(x,y)$, 
we let $\hat V_i = V(X_i,0.3)$ for any $i \in \tD_\calib \cup \cD_\test(0)$ 
(note that the nonconformity scores here do not depend on the outcome).
Next, we define the conformal e-value for each $j \in \cD_\test(0)$:
\begin{equation}\label{eq:ite_conf_eval}
    e_j = \Big(w(X_{j}) +\sum_{i \in \tD_\calib} w(X_i)\Big) \cdot 
    \frac{\ind \{\hat V_{j} \ge T_j\}}{w(X_{j}) + \sum_{i\in \tD_\calib} w(X_i) \ind\{\hat V_i \ge T_j\}},
\end{equation}
where
\begin{equation}\label{eq:ite_conf_threshold}
    T_j = \inf\left\{t \in \{\hat V_i\}_{i\in \tD_\calib \cup \cD_\test(0)}
    \colon 
    \frac{ m }{w(X_{j})+\sum_{i \in \tD_\calib} w(X_i)} \cdot 
    \frac{w(X_{j})+\sum_{i \in \tD_\calib} w(X_i)  \ind\{\hat V_i \ge t \}  }
    {( \sum_{k \in \cD_\test(0)} \ind\{\hat V_{k} \ge t\} ) \vee 1} \le \alpha \right\}.
\end{equation}
To apply e-BH-CC, we choose the sufficient statistic $S_j = (\cE_j, \{X_\ell\}_{\ell \in \cD_\test(0) \backslash \{j\}})$, 
where now $\cE_j$ is the unordered set of $\{X_i\}_{i \in \tD_\calib\cup \{j\}}$.
Conditional on $(S_j,\cH_0)$, we sample
\$ 
\tX_{j} \sim X_{j} \given \cE_j, \{X_{\ell}\}_{\ell \in \cD_\test(0)\backslash \{j\}}, \cH_0
\sim   \sum_{X \in \cE_j} \frac{w(X)}{\sum_{X' \in \cE_j} w(X')} \delta_{X},
\$
and assign the remaining $X_{\ell}$'s arbitrarily to $\tX_{\ell}$ in $\cD_\test(0)\backslash \{j\}$.
The resampled e-values can be computed according to their definition.

\paragraph{Implementation and results.}
Since NSLM is an observational dataset, the propensity score function is unknown. We can nevertheless 
obtain an estimate of the propensity score function, $\hat e(x)$, with the training fold $\cD_\train$ by regressing 
$T$ on $X$;  we also fit a regression function $\hat m(x)$ for $Y(1)$ given $X$ with $\cD_\train$, 
and let $V(x,y) = \hat \mu(x) - y$. 
Recall that the result in Appendix~\ref{appd:ebh_wcs} implies that 
applying e-BH to the conformal e-values is almost equivalent to---if not more 
powerful than---applying WCS to the conformal p-values. So we shall refer 
to the e-BH procedure applied to the conformal e-values as WCS in what follows.

In our implementation, $|\cD_\train| = 8,\!000$, $|\cD_\calib| = 1,\!000$, 
and $|\cD_\test| = 1,\!000$. Random forests are used to estimate the propensity 
score function $\hat e(x)$ and 
the regression function $\hat m(x)$ for $Y(1) \given X = x$ with $\cD_\train$, 
where the algorithm is implemented using the \texttt{scikit-learn} 
package in Python~\citep{scikit-learn}. 

For each sample split, both WCS (its e-BH equivalent) with 
deterministic pruning and 
e-BH-CC are applied at FDR levels $\alpha \in\{ 0.1, 0.2, 0.3, 0.4, 0.5\}$. 
For reference, we also implement WCS with heterogeneous random pruning---this is not meant to be 
compare with our method (since it is a randomized procudure), but rather to show the 
potential improvement over base WCS that e-BH-CC could achieve.

For e-BH-CC, we take $\alpha_\cc$ to be a different value than $\alpha$ (recall 
Remark~\ref{remark:ebh_cc}), where $\alpha_\cc = 1.1\cdot \alpha$.
The AVCS-approximated e-values are adopted with $\alpha_0/\alpha = 0.0001$; 
the first $600$ samples (in batches of $100$) are used for constructing the non-asymptotic 
AVCS, and the next $400$ for the asymptotic AVCS.
If no conclusions can be drawn within $1000$ Monte Carlo samples, we stop early 
and fail to boost the particular e-value. Figure~\ref{fig:nslm} reports the 
average number of discoveries over $100$ sample splits in the treated and control groups, respectively. 
We can see that e-BH-CC improves substantially over WCS with deterministic pruning, 
and is comparable to WCS with heterogeneous random pruning, underlining the message that using conditional calibration 
reclaims the power gap between the deterministic pruning and the randomized pruning.


\begin{figure}[htbp!]
\centering
\includegraphics[width = 0.9\textwidth]{./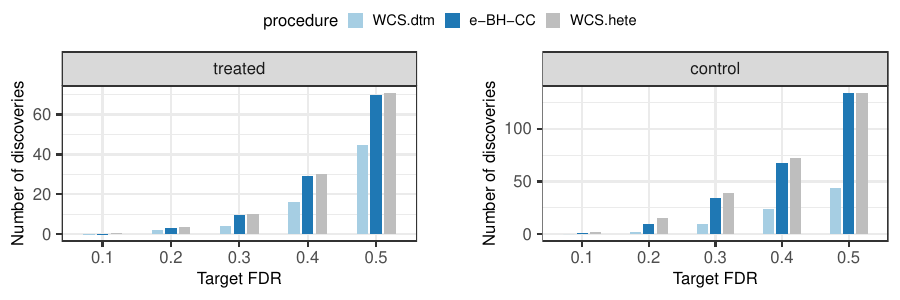}
\caption{Number of identified students with low $Y(0)$ in 
the treatment group (left) and those with large $Y(1)$ in 
the control group (right) from the NSLM dataset. WCS.dtm refers to 
WCS with deterministic pruning, WCS.hrp to WCS with heterogeneous random pruning, 
and e-BH-CC to our method.
The results are averaged over $100$ sample splits.}
\label{fig:nslm}
\end{figure}

\section{Discussion}
\label{sec:discussion}
In this paper, we introduce a framework to improve the power of e-BH via conditionally 
calibrating the e-values.
Through three classes of multiple testing problems, we demonstrate how the proposed 
method can bring substantial power gains while continuing to control the FDR. 
We end this paper with a discussion on the potential applications of the proposed method to other problems, 
as well as future research directions.
\paragraph{Application to other selective inference problems}
Aside from the three examples discussed in this paper, we envision our framework to be 
useful for boosting many more selective inference procedures.
For example, when testing $m$ hypotheses with independent p-values 
$p_1,p_2,\ldots,p_m$,~\citet{li2023values}   
propose combining the rejection set of BH and SeqStep+~\citep{barber2015controlling}
through their e-value representations, and the resulting procedure is shown 
to be consistently better than the worse of BH and SeqStep+. For this procedure, our 
framework can immediately be applied by recognizing that $p_j \given p_{-j} \sim 
\text{Unif([0,1])}$ under the null hypothesis. 
For another example, our framework can also be applied to other conformalized selection 
procedures (e.g.,~\citet{bashari2024derandomized,liang2022integrative})
by considering similar sufficient statistics as introduced in Section~\ref{sec:example-conformal}.
Lastly, we can even apply this framework to settings where p-values with arbtirary dependence are generated, as long as a viable 
sufficient statistic can be identfied. For example,~\citet{fithian2020conditional} 
describe problems such as edge testing in Gaussian graphical models and multiple comparisons with binary outcomes and identify possible sufficient statistics. By using a p-to-e calibrator~\citep{vovk2021values}, we can construct e-values from the p-values and translate the problem into the e-BH framework. The p-to-e calibrator typically leads to power loss, but by using e-BH-CC we can likely regain the lost power.

\paragraph{Estimating the null proportion}
When the input e-values are strict e-values, e-BH as well as 
e-BH-CC controls the FDR at the level $\pi_0 \alpha$, where 
$\pi_0\coloneqq {|\cH_0|}/{m}$ is the fraction of true null hypotheses. Such a guarantee 
can sometimes be too conservative, especially when the signal is dense, 
i.e., $\pi_0 \ll 1$. One possible solution, borrowed from~\citet{fithian2020conditional},   
is to find an estimator $\hat{\pi}_0$ of $\pi_0$  such that 
$ 
\EE[e_j/\hat{\pi}_0] \le 1/\pi_0 
$
and   modify the $\phi_j$ function to be 
\$
 \phi_j(c; S_j) = \EE \left[\frac{m}{\alpha} \cdot \frac{\indc{c e_j \ge \frac{m}{\alpha |\hat \cR_j(\bm e)|}}}{|\hat \cR_j(\bm e)|}  - \frac{e_j}{\hat{\pi}_0} \bigggiven S_j \right].
\$
Defining the boosted e-values as in the current version and applying e-BH to 
the boosted e-values, we achieve FDR control at level $\alpha$. 
The challenge in this scheme boils down to finding the estimator $\hat \pi_0$. 
It would be interesting to investigate the construction of $\hat \pi_0$ in 
different problem settings, leveraging ideas from related works such as~\citet{gao2023adaptive}. 

\paragraph{Boosting via auxiliary statistics}
Our current boosting framework finds a 
multiplicative boosting factor $\hat c_j$ in order to close the power gap.
As briefly mentioned in Section~\ref{sec:cknockoff}, we can fill 
in this gap by alternatively considering an auxiliary statistic. To be specific, suppose 
we have another statistic $W_j \in \RR$ for testing $H_j$ 
designed such that a large value of $W_j$ suggests evidence 
against the null. We can then consider the ``$W_j$-assisted'' 
$\phi$ function:
\$
 \phi_j(c; S_j) = \EE \left[\frac{m}{\alpha} \cdot \frac{\indc{e_j \ge \frac{m}{\alpha |\hat \cR_j(\bm e)|} \text{ or } W_j \ge c}}
 {|\hat \cR_j(\bm e)|}  - e_j \bigggiven S_j \right].
\$ 
Assuming we can (numerically) evaluate $\phi(c;S_j)$, we proceed to 
find the critical value $\hat{c}_j$:
\$
\hat c_j \coloneq \sup\big\{c \in \RR: \phi_j(c;S_j) \le 0\big\}.
\$
A new collection of e-values boosted by $W_j$ can be 
constructed as 
\$
e_j^{W,\boost} = \frac{m}{\alpha} \cdot \frac{\indc{e_j \ge \frac{m}{\alpha |\hat \cR_j(\bm e)|} \text{ or } W_j \ge \hat c_j}}
 {|\hat \cR_j(\bm e)|}. 
\$
Since $\phi(+\infty;S_j) \le 0$, $\EE[e_j^{W,\boost}]\le 1$
and $\cR^\ebh(\be^{W,\boost}) \supset \cR(\be)$. 
The proposed e-BH-CC is a specific instance of this formulation, where
$W_j^{-1} = \frac{m}{\alpha |\hcR_j(\bm e)|} / e_j$ and the threshold would be ${1}/{c}$.
A future research direction is to investigate powerful choices of  
auxiliary statistics in different scenarios and determine whether they may exhibit higher power than simply using e-BH-CC. 

\paragraph{Generalization to online testing problems}
So far, we have primarily focused on multiple testing with batched data.
An interesting research direction is to extend our framework to online testing problems, 
where the data arrives sequentially (e.g.,~\citet{xu2023online}). The dependence structure 
in these problems is often more complex than in the batched setting, and the major challenge 
is to identify the sufficient statistics $S_j$, as well the conditional distributions $\be \given S_j$.

\subsection*{Reproducibility}
The code for reproducing the numerical experiments and real data analysis 
in this paper is available at \url{https://github.com/leejunu/e-bh-cc}.

\subsection*{Acknowledgments}
The authors would like to thank the Wharton Research Computing team for the computational resources provided
and the great support from the staff members.
J.L. will be partially supported by a Graduate Research Fellowship from the National Science Foundation starting Fall 2024.

\newpage
\bibliographystyle{apalike}
\bibliography{ref}

\newpage
\appendix
\section{Technical proofs}
\subsection{Proof of Theorem~\ref{thm:eval_valid}}
\label{appd:proof_eval_valid}

Note that $\hat c_j$ and $\phi_j(\hat c_j;S_j)$ are both deterministic functions of $S_j$.
For any $j \in \cH_0$, we can write the expectation of $e_j^\boost$ as 
\$ 
\EE[e_j^\boost] = \EE\Big[\ind\{\phi_j(\hat c_j;S_j) \le 0\} \cdot \EE[e_j^\boost \given S_j] 
+ \ind\{\phi_j(\hat c_j;S_j) > 0\}\cdot \EE[e_j^\boost \given S_j] \Big].
\$
When $\phi_j(\hat c_j;S_j)\le 0$, by construction of $e_j^\boost$ we have
\$ 
\EE[e_j^\boost \given S_j] =
\EE\Bigg[\frac{m}{\alpha |\hcR_j(\be)|} \ind\bigg\{\hat c_j e_j \ge \frac{m}{\alpha |\hcR_j(\be)|}\bigg\}
\bigggiven S_j\Bigg] = \phi_j(\hat c_j;S_j) + \EE[e_j \given S_j] \le \EE[e_j \given S_j]. 
\$
When $\phi_j(\hat c_j;S_j) > 0$, we let $\hat c_{j,t} = \hat c_j - 1/t$. 
Then we have determistically that 
\$ 
\ind\big\{\hat c_j \cdot e_j > \tfrac{m}{\alpha|\hcR_j(\be)|}\big\}
= \lim_{t \to \infty}
\ind\big\{\hat c_{j,t} \cdot e_j \ge \tfrac{m}{\alpha|\hcR_j(\be)|}\big\}.
\$
As a result,
\$
\EE[e_j^\boost \given S_j] &= 
\EE\Bigg[\frac{m}{\alpha |\hcR_j(\be)|} \ind\bigg\{\hat c_j e_j > \frac{m}{\alpha |\hcR_j(\be)|}\bigg\}\bigggiven S_j\Bigg] \\
& = 
\EE\Bigg[\lim_{t \to \infty}\frac{m}{\alpha |\hcR_j(\be)|} \ind\bigg\{\hat c_{j,t} e_j \ge \frac{m}{\alpha |\hcR_j(\be)|}\bigg\}\bigggiven S_j\Bigg] \\
& \stepa{=} \lim_{t \to \infty}
\EE\Bigg[\frac{m}{\alpha |\hcR_j(\be)|} \ind\bigg\{\hat c_{j,t} e_j \ge \frac{m}{\alpha |\hcR_j(\be)|}\bigg\}\bigggiven S_j\Bigg] \\
& = \lim_{t \to \infty} \phi(\hat c_{j,t} ; S_j) \\ 
& \stepb{\le} \EE[e_j \given S_j].
\$
Above, step (a) follows from the dominated covergence theorem and step (b)
follows from the definition of the critical value $\hat c_j$.

Combining the two cases, we have $\EE[e_j^\boost] \le \EE[e_j]$ for all $j \in \cH_0$.
Since $(e_1,\ldots,e_m)$ are valid (resp. generalized) e-values, 
the boosted e-values are valid (resp. generalized) e-values. The FDR control 
then follows from the FDR control of the e-BH procedure.

\subsection{Proof of Proposition~\ref{prop:mc_error}}
\label{appd:proof_mc_error}
For each $j\in \cH_0$, we have 
\$
\EE\big[e_j^{\boost,\ci}\big]
& =  \EE\bigg[\frac{m\ind\{U_{j,K}\le 0, \phi_j(\tilde{c}_j;S_j) > 0\} }{\alpha |\hcR_j(\be)|}\bigg]
+ \EE\bigg[\frac{m\ind\{U_{j,K}\le 0, \phi_j(\tilde{c}_j;S_j) \le 0\} }{\alpha |\hcR_j(\be)|}\bigg].
\$
Above, the first term can be bounded as follows:
\$
\EE\bigg[\frac{m \ind\{\phi_j(\tilde c_j; S_j) > U_{j,K}\}}{\alpha |\hcR_j(\be)|}\bigg] 
=  
\EE\Bigg[\frac{m}{\alpha |\hcR_j(\be)|} 
\PP\Big(\phi_j(\tilde c_j; S_j) \notin C_{j,K} \given S_j, \be\Big)\Bigg] 
\le  \EE\bigg[\frac{m \cdot \alpha_\ci}{\alpha |\hcR_j(\be)|}\bigg] \le \alpha_0/\alpha.
\$
As for the second term, we have
\$
\EE\bigg[\frac{m\ind\{U_{j,K}\le 0, \phi_j(\tilde{c}_j;S_j) \le 0\} }{\alpha |\hcR_j(\be)|}\bigg]
& \le \EE\bigg[\frac{m\ind\{\phi_j(\tilde{c}_j;S_j) \le 0\} }{\alpha |\hcR_j(\be)|}\bigg]\\
& =
\underbrace{\EE\bigg[\frac{m \ind\{\tilde{c}_j \le \hat c_j, \phi_j(\hat c_j;s_j)\le 0\}}{\alpha |\hcR_j(\be)|}\bigg]}_{\text{(i)}} 
+ \underbrace{\EE\bigg[\frac{m \ind\{\tilde{c}_j < \hat c_j, \phi_j(\hat c_j;s_j)> 0\}}{\alpha |\hcR_j(\be)|}\bigg]}_{\text{(ii)}}. 
\$ 
Recalling that $\tilde c_j = \frac{m}{\alpha |\hcR_j(\be)|}/e_j$, we then have
\$ 
\text{(i)}  =  \EE\bigg[\frac{m \ind\{\hat c_j e_j \ge \frac{m}{\alpha |\hcR_j(\be)|}, \phi_j(\hat c_j;s_j)\le 0\}}{\alpha |\hcR_j(\be)|}\bigg] 
& = \EE\Big[\ind\big\{\phi_j(\hat c_j;S_j) \le 0\big\} \cdot 
\big(\phi_j(\hat c_j; S_j) + \EE[e_j\given S_j]\big)\Big]\\
& \le \EE\big[\ind\{\phi_j(\hat c_j;S_j)\le 0\} e_j\big].
\$
Next, we again take $\hat c_{j,t} = \hat c_j - 1/t$, and then we have
\$ 
\text{(ii)}  = \EE\bigg[\frac{m \ind\big\{\hat c_j e_j > \frac{m}{\alpha |\hcR(\be)|}, \phi_j(\hat c_j;s_j)> 0\big\}}{\alpha |\hcR_j(\be)|}\bigg]
& =\EE\Bigg[\ind\{\phi_j(\hat c_j; S_j) > 0\}\cdot 
\EE\bigg[\lim_{t \to \infty} \frac{m\ind\big\{\hat c_{j,t}e_j \ge \frac{m}{\alpha|\hcR_j(\be)|}\big\}}{|\hcR_j(\be)|}\bigggiven S_j\bigg]\Bigg]\\
& \stepa{=} \lim_{t \to \infty} \EE\Bigg[\ind\{\phi_j(\hat c_j; S_j) > 0\}\cdot 
\EE\bigg[\frac{m\ind\big\{\hat c_{j,t}e_j \ge \frac{m}{\alpha|\hcR_j(\be)|}\big\}}{|\hcR_j(\be)|}\bigggiven S_j\bigg]\Bigg]\\
& = \lim_{t \to \infty} \EE\Big[\ind\{\phi_j(\hat c_j; S_j) > 0\}\cdot 
\EE\big[\phi_j(\hat c_{j,t}) + e_j\biggiven S_j\big]\Big]\\
& \le \EE\big[\ind\{\phi_j(\hat c_j; S_j) > 0\} e_j \big],
\$
where step (a) applies the dominated convergence theorem. Combining (i) and (ii), we have 
bounded the second term by $\EE[e_j]$, and therefore $\EE[e_j^{\boost,\ci}] \le \EE[e_j]+\alpha_0/\alpha$ for any $j\in \cH_0$, 
invoking the proof of e-BH completes the proof.

\paragraph{Proof of Corollary~\ref{cor:asymptotic_mc_error}}
If $C_{j,K}$ is a $(1-\alpha_\ci)$ asymptotic confidence interval for $\phi_j(\hat c_j;S_j)$,  
it suffices to modify the upper bound of the first term in the proof of Proposition~\ref{prop:mc_error}:
\$
\lim_{K\to \infty}
\EE\bigg[\frac{m \ind\{\phi_j(\tilde c_j; S_j) > U_{j,K}\}}{\alpha |\hcR_j(\be)|}\bigg] 
& =  \lim_{K\to \infty}\EE\Bigg[\frac{m}{\alpha |\hcR_j(\be)|} 
\PP\Big(\phi_j(\tilde c_j; S_j) \notin C_{j,K} \given S_j, \be\Big)\Bigg]\\ 
& = \EE\Bigg[\frac{m}{\alpha |\hcR_j(\be)|} 
\lim_{K\to \infty} \PP\Big(\phi_j(\tilde c_j; S_j) \notin C_{j,K} \given S_j, \be\Big)\Bigg]\\ 
& \le  \EE\bigg[\frac{m \cdot \alpha_\ci}{\alpha |\hcR_j(\be)|}\bigg] \le \alpha_0/\alpha.
\$
Again, the second step follows from the dominated convergence theorem.

\subsection{Proof of Proposition~\ref{prop:avcs_mc_error}}
\label{appd:proof_avcs_mc_error}
For any $j\in \cH_0$, we have 
\$ 
\EE[e_j^{\boost, \avcs}] 
& = \EE\Bigg[m\cdot \frac{\ind\{\exists k, U_{j,k}\le 0\}}{\alpha |\hcR_j(\be)|} \Bigg]\\
& = \EE\Bigg[m\cdot \frac{\ind\{\exists k, U_{j,k}\le 0, \phi_j(\tilde c_j ; S_j) > 0\}}{\alpha |\hcR_j(\be)|} \Bigg]
+ \EE\Bigg[m\cdot \frac{\ind\{\exists k, U_{j,k}\le 0, \phi_j(\tilde c_j ; S_j) \le 0\}}{\alpha |\hcR_j(\be)|} \Bigg]\\
& \le\EE\Bigg[m\cdot \frac{\ind\{\exists k, U_{j,k} < \phi_j(\tilde c_j ; S_j) \}}{\alpha |\hcR_j(\be)|} \Bigg]
+ \EE\Bigg[m\cdot \frac{\ind\{\phi_j(\tilde c_j ; S_j) \le 0\}}{\alpha |\hcR_j(\be)|} \Bigg].
\$
Above, the first term is bounded by $\alpha_0/\alpha$ by the construction of 
$\{C_{j,k}\}_{k\ge 1}$; the second term is bounded by $\EE[e_j]$ following 
exactly the same steps in the proof of Proposition~\ref{prop:mc_error}.
Now that $\EE[e_j^{\boost,\avcs}] \le \EE[e_j] + \alpha_0/\alpha$, invoking 
the proof of the e-BH procedure completes the proof.

\paragraph{Asymptotic anytime-valid confidence sequences}
The definition of an asymptotic anytime-valid confidence sequence 
below is adapted from~\citet{waudby2021time}.
\begin{definition}[Asymptotic anytime-valid confidence sequences (Asymp-AVCS)]
We say that $(\hat \theta_k - L_k, \hat \theta_k + U_k)_{k \ge 1}$ centered around the 
estimators $\{\hat \theta_k\}_{k\ge 1}$ with $L_k,U_k >0$ for any $k\ge 1$
forms a $(1-\alpha)$-asymptotic anytime-valid confidence sequence for a parameter 
$\theta$ if there exists a non-asymptotic $(1-\alpha)$-anytime-valid confidence sequence 
$(\hat \theta_k - L_k^*, \hat \theta_k + U_k^* )_{k\ge 1}$ such that 
\$ 
L_k^*/L_k \stackrel{\text{a.s.}}{\rightarrow} 1, \quad
U_k^*/U_k \stackrel{\text{a.s.}}{\rightarrow} 1.
\$
\end{definition}
In practice, we can replace the $(1-\alpha_\avcs)$-AVCS with an Asymp-AVCS 
when $k$ is sufficiently large. The time-uniform coverage of the Asymp-AVCS 
has been established in~\citet{waudby2021time} under certain conditions, and 
we refer the readers to their work for more details.

\subsection{Proof of Proposition~\ref{prop:filtering}}\label{appd:filtering_proof}
  We can write the intermediate e-values
  $
    e^{\boost, \cS}_j  \coloneqq  e^\boost_j \indc{j \in \cS},
  $ for $j\in [m]$.
  Decomposing the FDR as in the proof of e-BH in~\eqref{eq:fdr_ineq}, we see:
  \begin{align}
    \begin{split}
      \fdr[\cS] &= \sum_{j\in \cH_0} \EE\bigg[\frac{\indc{j \in \cS}}{|\cS|\vee 1}\bigg] = \sum_{j\in \cH_0} \EE\bigg[\frac{\indc{j \in \cS}}{|\cS\cup\{j\}|}\bigg]\\
      &\stepa{\le} \sum_{j\in\cH_0}\EE\left[ \frac{\indc{j\in\cS} \cdot \indc{e_j^\boost \ge \frac{m}{\alpha |\hcR_j(\be)| } }}{|\cR(\be)\cup\{j\}|} \right]\\
      &\stepb{\le}  \sum_{j\in\cH_0}\EE\left[ \indc{j\in\cS} \cdot\frac{\frac{\alpha |\hcR_j(\be)| }   {m} e_j^\boost  }{|\cR(\be)\cup\{j\}|} \right]\\
      & \stepc{=} \sum_{j \in \cH_0}  \frac{\alpha}{m} \EE\bigg[ \indc{j\in\cS}  e_j^\boost \bigg]\\
      & \stepd{\le} \alpha.
    \end{split}
  \end{align}
  Step (a) is due to $\cR(\be) \subseteq \cS$ (the denominator), and 
  $\cS \subseteq \cR(\be^\boost)$: if $j \in \cS$, we also have 
  $j \in \cR(\be^\boost)$ and therefore $e_j^\boost = {m}/{(\alpha |\hcR_j(\be)|)}$ by construction.
  Step (b) follows from the deterministic inequality $\ind\{X\ge t\}\le X/t$ for $t >0$ 
  in the e-BH proof. Step (c) is from the definition $\hcR_j(\be)  = \cR(\be)\cup\{j\}$. 
  Step (d) follows since $e_j^\boost \indc{j \in \cS}$ is a valid e-value, as mentioned previously. Note that the entire inequality chain will also hold when $\be$ are generalized e-values.

\subsection{Proof of Proposition~\ref{prop:recursive_refinement}}
\label{appd:proof_recursive_refinement}
We proceed with an inductive argument. 
When $t=0$, we know the desired containment is true by Theorem \ref{thm:power-improvement}. 
To see the equality $\cR^{(1)} = \{j\in[m] \colon \hat c_j^{(0)} \ge  \frac{m}{\alpha |\hcR_j^{(0)}(\be)|}/e_j \}$, 
note that for all $j \notin\cR^{(0)}$, the boosted e-value $e_j^{\boost, (1)}$ is either zero or 
$\frac{m}{\alpha ( | \cR^{(0)}|+1 )}$ (which is vacuously true when no such $j$ exists). 
The e-BH procedure will reject thus all nonzero e-values (noting that previously rejected e-values will automatically be boosted to $\frac{m}{\alpha | \cR^{(0)}|}$), 
so the rejection set $\cR^{(1)}$ is exactly $\{j \in [m]\colon e_j^{\boost, (1)}\neq 0\}$, 
which is equivalent to $\{j \in [m]\colon \hat c_j^{(0)} \ge  \frac{m}{\alpha |\hcR_j^{(0)}(\be)|}/e_j \}$.

Treating this as our base case in an inductive argument, now consider some $T$ such that 
$$
\cR^{(T)} = \bigg\{j \colon \hat c_j^{(T-1)} \ge \frac{m}{\alpha |\hcR_j^{(0)}(\be)|}/e_j\bigg \} \supseteq \cR^{(T-1)}.
$$
A direct result of the above containment is that $\phi_j^{(T)}(c; S_j) \le \phi_j^{(T-1)}(c;S_j)$. To see why, simply compare the size of the denominator of the fraction between the two conditional expectations. Thus, we have that the corresponding critical values satisfy $\hat c_j^{(T)} \ge \hat c_j^{(T-1)}$ for all $j$.

To conclude the inductive step, first note that $\forall j\in \cR^{(T)}$,
\$ 
\hat c_j^{(T)} \ge \hat c_j^{(T-1)} \ge \frac{m}{\alpha |\hcR_j^{(0)}(\be)|}/e_j.
\$ 
The first step is from the critical value inequality, while the second is from the inductive assumption. As each of these $j$ will have boosted e-value $e_j^{\boost, (T+1)}   = \frac{m}{\alpha |\cR^{(T)}|}$, and there are $ |\cR^{(T)}|$ such indices, $\cR^\ebh(\be^{\boost, (T+1)}) \supseteq \cR^{(T)}$. Furthermore, for $j\notin \cR^{(T)}$, $e_j^{\boost, (T+1)}  $ will be either zero or $ \frac{m}{\alpha ( | \cR^{(T)}|+1 )} $, analogous to the base case. Using the same argument as in the base case, we conclude that $
\cR^{(T+1)} = \{j\colon \hat c_j^{(T)} \ge \frac{m}{\alpha |\hcR_j^{(0)}(\be)|}/e_j \}
$, as desired.

\subsection{Proof of Proposition~\ref{prop:conf_bh_ebh_equiv}}
    \label{appd:conf_bh_ebh_equiv}
\subsubsection{Proof of (a)}
Define $\cV$ to be the unordered collection of nonconformity scores 
$\big\{V_i: i\in[n+m]\big\}$. 
Let $\pi$ be a permutation of $[n+m]$ such that
$V_{\pi(1)} \le \cdots \le V_{\pi(n+m)}$, and let 
    $N(k) := \sum_{i\in[n]} \ind\{V_i \ge V_{\pi(k)}\}$, 
    $R_0(k) = \sum_{j \in \cH_0} \ind\{V_{n+j} \ge V_{\pi(k)}\}$, and 
    $R_1(k) = \sum_{j \in \cH_1} \ind\{V_{n+j} \ge V_{\pi(k)}\}$.
    Consider the discrete time filtration, 
    \@\label{eq:eval_filtr}
    \cF_k = \sigma\big(\cV,\{N(\ell)\}_{\ell\le k},
    \{R_0(\ell)\}_{\ell\le k},\{R_1(\ell)\}_{\ell\le k}\big), 
    \text{ for }k\ge 1.
    \@
    Instead of directly proving (a), we show the following stronger 
    result. Suppose the conformal e-value takes the 
    following form: 
    \@\label{eq:conf_eval_general_form}
    e_j = (n+1)\frac{\ind\{V_{n+j} \ge T\}}{1+\sum_{i\in [n]}
     \ind\{V_i \ge T\}},
    \text{ for }j \in [m].
    \@
    Then $e_j$ is an e-value as long as 
    $T = V_{\pi(\tau)}$, 
    where is $\tau$ a stopping time adapted to the filtration 
    $\{\cF_k\}_{k \ge 1}$.
    Apparently, the threshold $T$ defined in~\eqref{eq:conf_threshold} 
    satisfies the condition of Proposition~\ref{prop:conf_bh_ebh_equiv_general}, 
    and therefore the e-value is a strict e-value.  

    The stronger result is stated in the following proposition, followed 
    by its proof. 
    \begin{proposition}
    \label{prop:conf_bh_ebh_equiv_general}
    Under the setting of Proposition~\ref{prop:conf_bh_ebh_equiv},
    let $\tau$ be a stopping time adapted to the filtration $\{\cF_t\}_{k \ge 1}$
    defined in \eqref{eq:eval_filtr}.
    Suppose the e-value takes the form defined in~\eqref{eq:conf_eval_general_form}.
    Then we have $\EE[e_j] \le 1$, for any $j\in \cH_0$.
    \end{proposition}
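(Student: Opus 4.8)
The plan is to condition on a sigma-algebra that ``freezes'' everything except the assignment of the $n+1$ null-slot scores, which reduces the statement to a one-dimensional optional-stopping computation. First assume the scores $\{V_i\}_{i\in[n+m]}$ are almost surely distinct; ties can be broken with auxiliary i.i.d.\ uniform randomizers, which affects neither the validity of $e_j$ nor the counts below. Fix $j\in\cH_0$. Under $H_j$ the variables $Z_{n+j},Z_1,\dots,Z_n$ are i.i.d.\ from $P$ and independent of the remaining test points $\{Z_{n+\ell}:\ell\neq j\}$. Let $\cG$ be the sigma-algebra generated by $\{Z_{n+\ell}:\ell\neq j\}$, by $\cH_0$ and $\cH_1$, and by the \emph{unordered} multiset $\{V_{n+j},V_1,\dots,V_n\}$. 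Given $\cG$, the full multiset $\cV$ is determined, hence so is the order statistic $V_{\pi(k)}$ for every $k$; moreover, by exchangeability of the i.i.d.\ block, the assignment of the $n+1$ values $\{V_{n+j},V_1,\dots,V_n\}$ to the $n+1$ ``null slots'' (slot $n+j$ together with slots $1,\dots,n$) is a uniformly random permutation, so the rank $\rho$ of $V_{n+j}$ among these $n+1$ values is uniform on $\{1,\dots,n+1\}$.

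Next I would set $c_k:=\#\{v\in\{V_{n+j},V_1,\dots,V_n\}:v\ge V_{\pi(k)}\}$ and $a_k:=n+2-c_k$; since $V_{\pi(k)}$ is nondecreasing in $k$, the sequence $c_k$ is nonincreasing and $a_k$ nondecreasing, with $c_1=n+1$, $a_1=1$. Writing $B_k:=\ind\{V_{n+j}\ge V_{\pi(k)}\}$ one has $B_k=\ind\{\rho\ge a_k\}$, $N(k)=c_k-B_k$, and $R_0(k)=B_k+r_k$ with $r_k:=\#\{\ell\in\cH_0\setminus\{j\}:V_{n+\ell}\ge V_{\pi(k)}\}$ being $\cG$-measurable; also $R_1(k)$ and $\cV$ are $\cG$-measurable. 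Hence $\cF_k\vee\cG=\sigma(B_1,\dots,B_k)\vee\cG$, so $\tau$ (a stopping time for $(\cF_k)_k$) is a stopping time for $(\cF_k\vee\cG)_k$. On $\{B_k=1\}$ we have $c_k\ge1$, $a_k\le n+1$, $1+N(k)=n+2-a_k$; therefore $e_j=(n+1)B_\tau/(1+N(\tau))$ equals $(n+1)/(n+2-a_\tau)$ on $\{B_\tau=1\}$ and $0$ on $\{B_\tau=0\}$ (and if the infimum defining $\tau$ were over the empty set, $e_j=0$ and the bound is trivial, so WLOG $\tau\le n+m$).

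Then I would check that $M_k:=\dfrac{n+1}{n+2-a_k}\,\ind\{\rho\ge a_k\}$, with the convention $M_k:=0$ once $a_k\ge n+2$ (i.e.\ $c_k=0$), is a supermartingale for $(\cF_k\vee\cG)_k$ with $M_1=1$ and $M_\tau=e_j$. The increment check is a short sampling-without-replacement computation: on $\{B_k=0\}$ one has $\rho<a_k\le a_{k+1}$, so $M_{k+1}=0=M_k$; on $\{B_k=1\}$ the conditional law of $\rho$ given $\cF_k\vee\cG$ is uniform on $\{a_k,\dots,n+1\}$, so $\PP(B_{k+1}=1\mid\cF_k\vee\cG)=(n+2-a_{k+1})_+/(n+2-a_k)$ and hence $\EE[M_{k+1}\mid\cF_k\vee\cG]=\frac{n+1}{n+2-a_{k+1}}\cdot\frac{(n+2-a_{k+1})_+}{n+2-a_k}\le\frac{n+1}{n+2-a_k}=M_k$, with equality unless $a_{k+1}\ge n+2$. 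Since $M$ is bounded and $\tau\le n+m$, optional stopping gives $\EE[e_j\mid\cG]=\EE[M_\tau\mid\cG]\le M_1=1$, and the tower rule yields $\EE[e_j]\le1$ for every $j\in\cH_0$, which is the claim; applying it to the threshold of \eqref{eq:conf_threshold} (a stopping time for $(\cF_k)_k$) recovers Proposition~\ref{prop:conf_bh_ebh_equiv}(a).

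The main obstacle, and the place needing the most care, is the bookkeeping in the second paragraph: one must verify that conditioning on $\cG$ turns every coordinate of the filtration into a deterministic function of the single indicator process $(B_k)$, so that the \emph{arbitrary} stopping time $\tau$ stays adapted and the problem genuinely collapses to the uniform-rank calculation. Secondary points to be handled cleanly are the degenerate regime $c_k=0$ (where $a_k>n+1$ and the naive formula for $M_k$ is $0/0$), the empty-infimum case for $\tau$, and the tie-breaking reduction; none of these is deep but each must be addressed for the argument to be airtight.
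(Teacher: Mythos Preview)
Your argument is correct and takes a genuinely different route from the paper's. The paper works with a single ``global'' supermartingale
$M(k)=R_0(k)/(1+N(k))$
for the filtration $(\cF_k)_k$, establishes the supermartingale increment via the hypergeometric law for the calibration--versus--null-test split at each level, applies optional stopping to get $\sum_{j\in\cH_0}\EE[e_j]\le |\cH_0|$, and then uses the permutation invariance of $\tau$ over the null test points to pass to the per-$j$ bound $\EE[e_j]\le 1$. You instead fix a single $j\in\cH_0$, condition on the rich $\sigma$-algebra $\cG$ (other test points, the hypothesis labels, and the unordered multiset of the $n+1$ ``null-slot'' scores), and collapse the problem to the uniform rank $\rho$ of $V_{n+j}$; your martingale $M_k=(n+1)\,\ind\{\rho\ge a_k\}/(n+2-a_k)$ then yields $\EE[e_j\mid\cG]\le 1$ directly by optional stopping. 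The trade-offs: your route is more elementary (no hypergeometric calculation), delivers a slightly stronger conditional statement, and sidesteps the final symmetry step entirely; the paper's route stays inside the filtration $(\cF_k)_k$ as written, avoids introducing the auxiliary $\cG$, and handles ties case-by-case rather than by a tie-breaking reduction. Your acknowledged loose ends---ties, the $c_k=0$ boundary, and the empty-infimum case---are indeed routine but should be written out.
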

    \begin{proof}
    To start, we define for $k \in \mathbb{N}_+$ that 
    \$
    M(k) = \frac{\sum_{j\in \cH_0} \ind\{V_{n+j} \ge V_{\pi(k)}\}}{1+\sum_{i \in [n]}
    \ind\{V_i \ge V_{\pi(k)}\}} = \frac{R_0(k)}{1+N(k)},
    \$
    where the last step is by the definition of $R_0(k)$ and $N(k)$.
    We claim that $M(k)$ is a supermartingale with respect to the 
    filtration $\{\cF_k\}_{k \ge 1}$. To see this, consider the following 
    two scenarios. 
    \begin{enumerate}
    \item [(1)] When $V_{\pi(k)} = V_{\pi(k-1)}$, we have 
    $N(k) = N(k-1)$ and $R_0(k) = R_0(k-1)$, and thus 
    $\EE[M(k)\given \cF_{k-1}] = M(k-1)$.
    \item [(2)] When $V_{\pi(k)} > V_{\pi(k-1)}$,
    we let $\Delta(k) = \sum_{i\in[n+m]} \ind\{V_i = V_{\pi(k-1)}\}$, 
    By the exchangeability of the inliers, conditional on $\cF_{k-1}$, 
    $N(k)$ follows 
    a hypergeometric distribution with parameters 
    $N(k-1) + R_0(k-1)$ (population size), $N(k-1)$ (success states in the population), 
    and $N(k-1)+R_0(k-1) -\Delta(k)$ (number of draws).
    As a result, by direct computation of the 
    conditional expectation with respect to the 
    hypergeometric distribution (see e.g.,~\citet[Lemma 3.2]{weinstein2017power}), 
    we have $\EE[M_k \given \cF_{k-1}] \le M_{k-1}$.
    \end{enumerate}
    Combing the two cases, we have shown that $M(k)$ is a supermartingale. 
    Since $\tau$ is a stopping time with respect to $\{\cF_k\}_{k\ge 1}$, 
    applying the optional stopping theorem leads to 
    \@ \label{eq:eval_martingale_1}
    \EE[M_\tau] \le M_1 = \frac{|\cH_0|}{1+n}.
    \@
    Note also that 
    \@\label{eq:eval_martingale_2}   
    \EE\bigg[\frac{\sum_{j\in \cH_0} \ind\{V_{n+j}\ge T\}}
    {\sum_{i\in [n]} \ind\{V_i \ge T\}}\bigg]  
    = \frac{1}{1+n}\sum_{j\in \cH_0}\EE[e_j] 
    \@ 
    Combining~\eqref{eq:eval_martingale_1} and~\eqref{eq:eval_martingale_2} 
    yields $\sum_{j\in \cH_0}\EE[e_j] \le |\cH_0|$. Note also that $\tau$ is 
    invariant to the permutation on $\{Z_{n+j}\}_{j \in \cH_0}$, and thus
    by the exchangeability of $\{Z_{n+j}\}_{j \in \cH_0}$, we have 
    $\EE[e_j] \le 1$ for any $j\in \cH_0$.

    \end{proof}

    \subsubsection{Proof of (2)}
    The statement is immediate from the following lemmas about $T$, the threshold defined in \eqref{eq:conf_threshold}.
    \begin{lemma}\label{lem:bh_to_ebh-1}
        $\hcR^{\bh} = \{j \in [m] \colon {V_{n+j} \ge T}\}$.
    \end{lemma}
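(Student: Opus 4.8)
The plan is to prove this deterministic set identity by a two-way inclusion, exploiting the self-consistency characterization of the BH rejection set. I would take as known the single fact that $\hcR^\bh$ is the largest set $\cR\subseteq[m]$ satisfying $p_j\le\alpha|\cR|/m$ for all $j\in\cR$ (equivalently, $\hcR^\bh$ collects the $k^*$ smallest p-values with $k^*=\max\{k\colon p_{(k)}\le\alpha k/m\}$), with $\hcR^\bh=\varnothing$ when no nonempty set has this property. I would also use the elementary monotonicity of \eqref{eq:conf_pval}: whenever $V_{n+j}\ge t$ we have $\sum_{i=1}^n\ind\{V_i\ge V_{n+j}\}\le\sum_{i=1}^n\ind\{V_i\ge t\}$, hence $p_j\le\big(1+\sum_{i=1}^n\ind\{V_i\ge t\}\big)/(n+1)$. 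Finally, note that the infimum in \eqref{eq:conf_threshold} ranges over the finite set $\{V_i\}_{i=1}^{n+m}$, so (with the convention $\inf\varnothing=+\infty$) it is attained whenever $T<\infty$, in which case $T$ itself satisfies the defining inequality.

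For $\{j\colon V_{n+j}\ge T\}\subseteq\hcR^\bh$: the claim is vacuous if the left-hand side is empty, so assume it has cardinality $R\ge1$. Then $T<\infty$, $T\in\{V_i\}_{i=1}^{n+m}$, and by the definition of $T$ we get $\tfrac{m}{n+1}\cdot\tfrac{1+\sum_{i=1}^n\ind\{V_i\ge T\}}{R}\le\alpha$ (the $\vee1$ being redundant since $R\ge1$). For each $j$ with $V_{n+j}\ge T$, the monotonicity above gives $p_j\le\tfrac{1+\sum_{i=1}^n\ind\{V_i\ge T\}}{n+1}\le\tfrac{\alpha R}{m}$, so $\{j\colon V_{n+j}\ge T\}$ is a self-consistent set and is therefore contained in the largest one, $\hcR^\bh$.

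For $\hcR^\bh\subseteq\{j\colon V_{n+j}\ge T\}$: again assume $\hcR^\bh\neq\varnothing$, write $r=|\hcR^\bh|$, and let $v=\min_{j\in\hcR^\bh}V_{n+j}$, attained at some $j^*$. Since every $j\in\hcR^\bh$ has $V_{n+j}\ge v$, we have $\sum_{k=1}^m\ind\{V_{n+k}\ge v\}\ge r\ge1$, while self-consistency gives $p_{j^*}\le\alpha r/m$, i.e.\ $\tfrac{1+\sum_{i=1}^n\ind\{V_i\ge v\}}{n+1}\le\tfrac{\alpha r}{m}\le\tfrac{\alpha}{m}\sum_{k=1}^m\ind\{V_{n+k}\ge v\}$. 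Rearranging shows that $v\in\{V_i\}_{i=1}^{n+m}$ satisfies the inequality defining the set in \eqref{eq:conf_threshold}, so $T\le v$; hence $V_{n+j}\ge v\ge T$ for every $j\in\hcR^\bh$. Combining the two inclusions yields the lemma.

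I do not expect a genuine obstacle here: the argument is exactly the short chain of inequalities above. The only points requiring care are stating the self-consistency description of BH correctly, handling the two empty-set corner cases, and keeping track of the $\vee1$ and $\inf\varnothing$ conventions — all dispatched by the ``assume nonempty, otherwise trivial'' device.
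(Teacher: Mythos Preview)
Your proof is correct. The two-way inclusion via the self-consistency characterization of BH is clean: in one direction you show $\{j:V_{n+j}\ge T\}$ is a self-consistent set (hence contained in the maximal one), and in the other you show the minimum score in $\hcR^\bh$ already satisfies the defining inequality for $T$.

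The paper takes a slightly different route: it works directly with the ordered-statistic description of BH, rewriting the rejection rule as ``reject all test scores above a threshold $\hat v$'' and then arguing that $\hat v$ and $T$ coincide (with a separate argument for the empty case). Your approach avoids this threshold-matching step entirely by invoking the maximality of $\hcR^\bh$ among self-consistent sets, which makes the corner cases fall out automatically and arguably keeps the bookkeeping lighter. Both arguments are short; yours is a touch more conceptual, while the paper's is more computational in flavor.
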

    \begin{proof}
        Let $V_{n+(1)}\ge V_{n+(2)} \ge \cdots \ge V_{n+(m)}$ denote the 
        ordered nonconformity test scores in the descending order (with 
        ties broken arbitrarily). By the definition of the conformal p-values, 
        we also have $p_{(1)} \le p_{(2)} \le \cdots \le p_{(m)}$.
        
        When  $\hcR^\bh$ is empty, then for any $t \le V_{n+(1)}$, let 
        $m(t) = \sum_{j \in [m]} \ind\{V_{n+j} \ge t\}$. By the definition of 
        $m(t)$, one can check that $V_{n+(m(t))} \ge t$  and 
        \$
        \frac{m}{n+1}\frac{1+\sum_{i\in [n]}\ind\{V_i \ge t\}}
        {\sum_{j\in[m]} \ind\{V_{n+j}\ge t\}}
        =  
        \frac{m}{n+1}\frac{1+\sum_{i\in [n]}\ind\{V_i \ge t\}}
        {m(t)} 
         = \frac{m p_{m(t)}}{m(t)} > \alpha. 
        \$
        The last step is because $\hcR^\bh = \varnothing$. 
        By the definition of $T$, we have $T > V_{n+{(1)}}$, 
        and thus $\hcR^\bh = \{j\in[m]: V_{n+j} \ge T\}$.

        When $\hcR^{\bh}$ is nonempty, 
        BH rejects the hypotheses corresponding to the 
        $R \coloneqq |\hcR^{\bh}|$ largest nonconformity test scores, where
        $$
            R = \max\left\{ r\colon  \frac{1 + \sum_{i=1}^n \indc{V_i \ge V_{n+(r)}}  }{n+1} \le \frac{\alpha r}m \right\}.
        $$ 
        Observe that the rejection rule is equivalent to rejecting all nonconformity test scores that are at least 
        $$
            \hat{v} = \max\left\{ v\colon  \frac{1 + \sum_{i=1}^n \indc{V_i \ge v}  }{n+1} \le \frac{\alpha \cdot |\{j\in[m]\colon V_{n+j} \ge v\}| }m \right\}.
        $$
        We can rearrange the condition inside the maximum for the equivalent statement
        $$
            \hat{v} = \max\left\{ v\colon  \frac{m}{n+1} \cdot \frac{1 + \sum_{i=1}^n \indc{V_i \ge v}  }{|\{j\in[m]\colon V_{n+j} \ge v\}|} \le \alpha  \right\}.
        $$
        When the rejection set is non-empty, $\hat{v}$ and $T$ coincide. 
        Thus, $\hcR^{\bh} = \{j\colon {V_{n+j} \ge T}\}$. 
    \end{proof}
    \begin{lemma}\label{lem:bh_to_ebh-2}
        $\hcR^{\ebh} = \{j \in [m]\colon {V_{n+j} \ge T}\}$.
    \end{lemma}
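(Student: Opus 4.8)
The plan is to compute the e-BH rejection set directly from the very rigid structure of the conformal e-values in \eqref{eq:conf_eval}. Writing $R \coloneqq |\{j\in[m]\colon V_{n+j}\ge T\}|$ and $c \coloneqq 1+\sum_{i=1}^n \indc{V_i\ge T}$, the e-value $e_j$ equals $(n+1)/c$ when $V_{n+j}\ge T$ and equals $0$ otherwise; thus there are exactly $R$ nonzero e-values, all equal to the common value $(n+1)/c$. So the ordered e-values satisfy $e_{(k)}=(n+1)/c$ for $k\le R$ and $e_{(k)}=0$ for $k>R$, and the whole problem reduces to showing that the e-BH index $k^\ast=\max\{k\colon e_{(k)}\ge m/(\alpha k)\}$ equals $R$.

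For the lower bound $k^\ast \ge R$, I would use the definition of $T$ in \eqref{eq:conf_threshold}: since $\{V_i\}_{i=1}^{n+m}$ is a finite set, the infimum defining $T$ is attained (or $T=+\infty$, handled below), so $T$ itself satisfies the defining inequality. When $R\ge 1$ the term $(\sum_{j}\indc{V_{n+j}\ge T})\vee 1$ is just $R$, and the inequality rearranges to $(n+1)/c\ge m/(\alpha R)$, i.e.\ $e_{(R)}\ge m/(\alpha R)$, giving $k^\ast\ge R$. For the upper bound $k^\ast\le R$, note $e_{(k)}=0<m/(\alpha k)$ for every $k>R$, so no such $k$ can attain the maximum. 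Hence $k^\ast=R$, and e-BH rejects exactly the $R$ indices carrying the nonzero (hence largest) e-values, namely $\{j\colon V_{n+j}\ge T\}$. The degenerate case $R=0$ (which includes $T=+\infty$) is immediate: every $e_j=0$, so $\hcR^{\ebh}=\varnothing=\{j\colon V_{n+j}\ge T\}$.

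Combining this with Lemma~\ref{lem:bh_to_ebh-1}, which already identifies $\hcR^{\bh}$ with the same threshold set $\{j\colon V_{n+j}\ge T\}$, yields statement (2) of Proposition~\ref{prop:conf_bh_ebh_equiv}. The only delicate points are bookkeeping ones that mirror the proof of Lemma~\ref{lem:bh_to_ebh-1}: confirming the infimum in \eqref{eq:conf_threshold} is attained when the qualifying set is nonempty, and correctly handling the $\vee 1$ in the denominator together with the edge case where no test score exceeds $T$. None of these require real work, but they are the spots where care is needed, and I would phrase the argument so that the $R=0$ and $R\ge1$ cases are separated exactly as in the companion lemma.
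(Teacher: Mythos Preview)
Your proposal is correct and follows essentially the same approach as the paper: both arguments observe that the nonzero e-values share a common value $(n+1)/c$, use the defining inequality of $T$ to verify this common value meets the e-BH threshold $m/(\alpha R)$, and conclude that e-BH selects exactly the $R$ indices with $V_{n+j}\ge T$. Your version is slightly more explicit about the e-BH index $k^\ast$ and more careful with the edge cases (attainment of the infimum, the $\vee 1$, and $R=0$), but the substance is identical.
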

    \begin{proof}
        First, note that when $V_{n+j} < T$, the corresponding e-value is 0 and will never be rejected. 
        Thus, we have $\hcR^{\ebh} \subseteq \{j \in [m]\colon {V_{n+j} \ge T}\}$. 
        To prove the reverse inclusion, we can assume without loss of generality that 
        $\{j \in [m] \colon {V_{n+j} \ge T}\}$ is nonempty (since otherwise the inclusion is trivial).
         Observe that when $V_{n+j}\ge T$, the e-value is positive and takes the value
        $$
            e_j = \frac{n+1}{1 + \sum_{i=1}^n \indc{V_i\ge T}}
        $$
        which does not depend on the index $j$ (i.e., the nonzero e-values take the same value). Taking the definition of $T$ directly, we see that 
        $$  
            \frac{n+1}{ 1 +   \sum_{i=1}^n \indc{V_i\ge T}} \ge \frac {m}{\alpha\cdot |\{j\in[n]\colon V_{n+j} \ge T\}|}.
        $$
        However, $R\coloneqq  |\{j\in[n]\colon V_{n+j} \ge T\}|$ is exactly the number of nonzero e-values in our collection. Each of the $R$ nonzero e-values take the same value, which is bounded below by $\frac{m}{aR}$. By the e-BH procedure, each of these e-values will be selected, proving the reverse inclusion $\hcR^{\ebh} \supseteq \{j\colon {V_{n+j} \ge T}\}$.
    \end{proof} 
    \noindent Since the two lemmas show equality of 
    $\hcR^\bh$ and $\hcR^\ebh$ to the same rejection set, we conclude the desired proposition.

\subsection{Proof of Proposition~\ref{prop:conf_eval_valid}}
    \label{appd:conf_eval_weighted}
    
    For each $j\in[m]$, let us define an alternative threshold to \eqref{eq:w_conf_threshold}:
    \begin{equation}
        \label{eq:w_conf_threshold_alt}
        \hat T_j = \inf\left\{t \in \{V_i\}_{i=1}^{n+m} \colon
            \frac{ m }{w(X_{n+j})+\sum_{i=1}^n w(X_i)} \cdot 
            \frac{w(X_{n+j})\indc{V_{n+j} \ge t}+\sum_{i=1}^n w(X_i)  \indc{V_i \ge t }  }
            {1 + \sum_{k \in [m]\setminus\{j\}} \indc{V_{n+k} \ge t} }\le \alpha
        \right\}.
    \end{equation}
    The significance of the above threshold is that on the event $\{V_{n+j} \ge T_j\}$, $T_j = \hat T_j$. 
    To see this, we first note that by construction, $\hat{T}_j \le T_j$.  
    On the event $\{V_{n+j} \ge T_j\}$, there is also $V_{n+j} \ge \hat T_j$.
   
    To prove the inverse inequality, consider the following quantity 
    \$ 
    & \frac{m}{w(X_{n+j})+\sum_{i=1}^n w(X_i)} \cdot 
    \frac{w(X_{n+j})+\sum_{i=1}^n w(X_i)  \ind\{V_i \ge \hat T_j \}  }
    {\ind\{V_{n+j} \ge \hat T_j\} + \sum_{k \in [m]\setminus\{j\}} \ind\{V_{n+k} \ge \hat T_j\} }\\
    =~& 
    \frac{m}{w(X_{n+j})+\sum_{i=1}^n w(X_i)} \cdot 
    \frac{w(X_{n+j})\ind\{V_{n+j} \ge \hat T_j\} + \sum_{i=1}^n w(X_i)  \ind\{V_i \ge \hat T_j \}  }
    {1 + \sum_{k \in [m]\setminus\{j\}} \ind\{V_{n+k} \ge \hat T_j\} } \le \alpha,
    \$
    where the first step is because $V_{n+j} \ge T_j \ge \hat T_j$ and the second step 
    is by the definition of $\hat T_j$. We can then conclude that $T_j \le \hat T_j$, 
    and therefore $T_j = \hat T_j$.

    In addition, define the set $\cE_j$ to be the unordered collection of $\{Z_1, Z_2, \dots, Z_n, Z_{n+j}\}$ 
    (with repetitions allowed). 
    Then $\hat T_j$, formally constructed using each of $Z_1, \dots, Z_{n+m}$, actually only depends on the data through $\cE_j$ and $\{Z_{n+k}\}_{k \in [m]\setminus\{j\}}$. That is, $\hat T_j$ is agnostic to the ordering of the elements in $\cE_j$.

    Using these facts, we can analyze the expectation of $e_j$ under the null $H_j$:
    \begin{align}\label{eq:weighted_conf_eval_proof}
        \begin{split}
            \EE[e_j] &= \EE\left[  \frac{\big(w(X_{n+j}) +\sum_{i=1}^n w(X_i)\big) \indc{V_{n+j} \ge T_j}}
            {w(X_{n+j}) + \sum_{i=1}^n w(X_i) \indc{V_i \ge T_j}} \right]\\
            &=   \EE\left[  \frac{\big(w(X_{n+j}) +\sum_{i=1}^n w(X_i)\big) \indc{V_{n+j} \ge T_j}}
            {w(X_{n+j})\indc{V_{n+j}\ge T_j} + \sum_{i=1}^n w(X_i) \indc{V_i \ge T_j}} \right]\\
            &=  \EE\left[  \frac{\big(w(X_{n+j}) +\sum_{i=1}^n w(X_i)\big) \ind\{V_{n+j} \ge  \hat T_j\}}
            {w(X_{n+j})\ind\{V_{n+j}\ge \hat T_j\} + \sum_{i=1}^n w(X_i) \ind\{V_i \ge \hat T_j\}} \right]\\
            &= \EE \Biggl[
                \frac{w(X_{n+j}) +\sum_{i=1}^n w(X_i) }{w(X_{n+j})\ind\{V_{n+j}\ge \hat T_j\} 
                + \sum_{i=1}^n w(Z_i) \ind\{V_i \ge \hat T_j\}}\\ &\qquad
                \qquad \times 
                \EE\bigg[\indc{V_{n+j} \ge \hat T_j} \bigggiven \cE_j, \{Z_{n+k}\}_{k \in [m]\setminus\{j\}}\bigg]
            \Biggr].
        \end{split}
    \end{align}
    The second equality technically adopts the notation $0/0 = 0$. The third equality follows from the equivalence of $T_j$ and $\hat T_j$ on the event $\indc{V_{n+j} \ge T_j}=1$ as discussed above.
    The last step uses the tower property of conditional expectation.

    Under the null $H_j$,
    we have following characterization of the conditional distribution of $Z_{n+j}$ given $\cE_{j}$~\citep{jin2023weighted}:
    \begin{equation}\label{eq:weighted_exch_cond}
        Z_{n+j}\cond \{\cE_j = z\}\sim \sum_{k \in [n] \cup \{n+j\}}  \frac{w(z_k )}{\sum_{i=1}^n w(z_i) + w(z_{n+j})} 
        \cdot \delta_{z_k},
    \end{equation} 
    where $\delta_a$ denotes a point mass at $a$.
    In the above, $z = \{z_1, z_2, \dots, z_n, z_{n+j}\}$ is a realization of $\cE_j$, with $z_i = (x_i,y_i)$. 
     From \eqref{eq:weighted_exch_cond}, 
    it is immediate that for a constant $t$ (conditional on $\cE_j$), $\PP(V_{n+j}\ge t \cond \cE_j)$ is equal to the weighted sum of indicator random variables:
    \begin{equation}\label{eq:weighted_exch_cond_cdf}
        \PP(V_{n+j}\ge t \cond \cE_j) = \sum_{k \in [n] \cup \{n+j\}}\frac{w(x_k)\indc{V(z_k) \ge t}}{w(x_{n+j})+\sum_{i=1}^n w(x_i) }.
    \end{equation}
    Since $$\EE\bigg[\indc{V_{n+j} \ge \hat T_j} \cond \cE_j, \{Z_{n+k}\}_{k \in [m]\setminus\{j\}}\bigg] = \PP\big({V_{n+j} \ge \hat T_j} \cond \cE_j, \{Z_{n+k}\}_{k \in [m]\setminus\{j\}}\big),$$
    $\hat T_j$ is constant conditioned on $\cE_j$ and $ \{Z_{n+k}\}_{k \in [m]\setminus\{j\}}$, and $V_{n+j}$ is independent of $\{Z_{n+k}\}_{k \in [m]\setminus\{j\}}$ by 
    assumption, we can use \eqref{eq:weighted_exch_cond_cdf} to directly conclude that 
    $$
    \EE\bigg[\indc{V_{n+j} \ge \hat T_j} \cond \cE_j, \{Z_{n+k}\}_{k \in [m]\setminus\{j\}}\bigg] 
    = \frac{w(X_{n+j})\ind\{V_{n+j}\ge \hat T_j\} 
    + \sum_{i=1}^n w(X_i) \ind\{V_i \ge \hat T_j\}}{w(X_{n+j}) +\sum_{i=1}^n w(X_i) }.
    $$
    The above in conjunction with \eqref{eq:weighted_conf_eval_proof} implies that $\EE[e_j] = 1$.

\subsection{Proof of Proposition~\ref{prop:conf_eval_weighted}}
\label{appd:conf_eval_resample}
    The proposition is immediate if we can conclude that $\tZ_1, \tZ_2, \ldots, \tZ_{n+m}$ 
    and $Z_1, Z_2, \ldots, Z_{n+m}$ are jointly equal in distribution conditioned 
    on $(\cE_j,  \{Z_{n+k}\}_{k \in [m]\setminus\{j\}})$. 
    Since $\tZ_{n+k} = Z_{n + k}$ for $k \neq j$ by construction, 
    we only need to consider the joint distribution corresponding to 
    the indices $[n]\cup \{n+j\}$ (conditional on $\cE_j$, as the other test units were independently drawn).

    Assuming the null $H_j$, 
    we can write joint probability density function of $Z_1,\ldots, Z_n,Z_{n+j}$ 
    in terms of the weight function $w$ and the density function $p$ of $P$~\citep{tibshirani2019conformal}:
    \begin{equation}
        f(z_1, z_2, \dots, z_n, z_{n+j}) = w(x_{n+j})\prod_{i\in[n]\cup \{n+j\}} p(z_i).\footnote{As in~\citet{tibshirani2019conformal},
        we use the term ``density function'' loosely to refer to the Radon-Nikodym derivative with respect to an arbitrary base measure.}
    \end{equation}
    We can use this to calculate the joint conditional probabilities. Treat $\cE_j$ as fixed and denote its elements as $\{z_1, z_2, \dots, z_n, z_{n+1}\}$, without any particular order.  Let $\cS_{n+1}$ denote all permutations of $[n+1]$. Then for any permutation $\sigma \in \cS_{n+1}$, we have 
    \begin{align}
        \begin{split}
            \PP(Z_{n+j} = z_{\sigma(n+1)}, Z_1 = z_{\sigma(1)},\ldots, Z_n = z_{\sigma(n)} \cond \cE_j) 
            &= \frac{w(x_{\sigma(n+1)}) \prod_{i=1}^{n+1}p(z_i)  }{\sum_{\sigma' \in \cS_{n+1}} 
             w(x_{\sigma'(n+1)}) \prod_{i=1}^{n+1}p(z_i) }\\
            &= \frac{w(x_{\sigma(n+1)})}{ \sum_{i=1}^{n+1} w(x_i) \cdot n! }.
        \end{split}
    \end{align}
    Meanwhile, the event $\{\tZ_{n+j} = z_{\sigma(n+1)}, \tZ_1 = z_{\sigma(1)},\ldots, \tZ_n = z_{\sigma(n)}\} \cond \cE_j$ occurs when we first resample $\tZ_{n+j}$ to be $z_{\sigma(n+1)}$ and subsequently assign $\{z_1, z_2, \dots, z_{n+1}\}\setminus\{z_{\sigma(n+1)}\}$  to $\tZ_1, \ldots, \tZ_n$ uniformly at random (without replacement). Thus,
    \begin{equation}
        \PP(\tZ_{n+j} = z_{\sigma(n+1)}, \tZ_1 = z_{\sigma(1)},\ldots, \tZ_n = z_{\sigma(n)} \cond \cE_j)  
        = \frac{w(x_{\sigma(n+1)})}{ \sum_{i=1}^{n+1} w(x_i) \ } \cdot \frac{1}{n!}.
    \end{equation}
    We conclude that the conditional joint distributions match, as desired. The i.i.d. property of the resamples follows immediately by observing that the only randomness of the e-values, conditional on $S_j$, comes from the random (weighted) assignment to $(\tilde Z_{n+j}, \tZ_1, \dots, \tZ_n)$.

\section{Additional simulation results}
\label{appd:additional_sims}

\subsection{Marginal boosting is not enough}
\label{appd:marg_boost_not_enough}

As referenced in Section \ref{sec:background},~\cite{wang2022false} detail a method to 
boost a collection of e-values using marginal distributional information. To compare 
their boosting mechanism with e-BH-CC, which leverages conditional distributional information,
we run numerical experiments in the context of one-sided $m$-dimensional $z$-testing
(Section \ref{sec:example-parametric}). 

Let $\be = (e_1, \dots, e_m)$ be a collection of e-values constructed as in 
\eqref{eq:zstat-evalue}, with $a_j$ set to some fixed $\delta > 0$ for all $j\in[m]$. 
These likelihood ratio test (LRT) statistic e-values are distributed log-normally. 
This marginal distributional information can be used to increase each e-value $e_j$
by a multiplicative factor $b$, which is the solution to
\begin{equation}
    \label{eq:marg_boost_eqn}
    b \Phi\bigg(\frac \delta2 + \frac{\log (\alpha b)}{\delta }\bigg) = 1,
\end{equation}
with $\Phi(\cdot)$ representing the standard Gaussian cumulative density 
function~\citep{wang2022false}. As $b$ turns out to be greater than 1, running e-BH
on $(be_1, \dots, be_m)$ is uniformly more powerful than doing so on the
original collection while still controlling the FDR at level $\alpha$.

Using the setup of the experiments detailed in Section \ref{sec:sim-parametric}, we 
implement marginally-boosted e-values, which will be denoted 
$b\be \coloneqq (be_1, \dots, be_m)$ (hiding the dependence of $b$ on the parameters 
used in the construction of $e_j$). Importantly, we run all experiments at the target
level of $\alpha = 0.05$ (with additional Monte-Carlo error of $\alpha_0 = 0.005$; see 
the relevant simulations for details). We find that, empirically, 
$$
    |\cR(\be^\boost )| \ge |\cR(b\be)| \ge |\cR(\be)|
$$
over all choices of $\delta$ (for both correctly-specified and misspecified LRT e-values).
Although $\cR(b\be)$ can show significant improvement over base e-BH---especially when the latter
is nearly powerless due to misspecified e-values---there continues to be a wide power gap 
with respect to the p-value BH baseline. e-BH-CC, on the other hand, attains higher power than
marginal boosting and is consequently more comparable to BH.
The observation that $\cR(\be^\boost)$ is more powerful than
marginal boosting continues to hold even when the signal strength 
is not high.
In a sense, e-BH-CC is able to use conditional calibration to reclaim 
significantly more of the 
power drop, while $\cR(b\be)$ only makes limited progress.

In addition, we observe that we are easily able to generate resamples of $b\be\cond S_j$ as well (since the boosting factor is constant). As the simulations studies show that marginal boosting is not enough for tightening the FDR inequality, we consider using CC to recover the rest of the power gap. Since $be_j$ is not a valid e-value, we cannot directly run e-BH-CC on the collection $b\be$ as the FDR guarantee would be $b\alpha$, which is too large. Instead, we use the original e-value as the budget:
\begin{equation}
    \phi_j(c;S_j)  = \EE\Bigg[ \frac m\alpha \cdot\frac{\indc{c\cdot be_j \ge \frac{m}{\alpha |\hcR_j(b\be)|}}}{\hcR_j(b\be)} - e_j\Biggiven S_j \Bigg].
\end{equation}
{By using the above $\phi$-function in the boosting step, 
the resulting ``doubly-boosted'' e-values, which we denote as $(b\be)^{\boost}$ 
in a serious abuse of notation, are valid e-values. 
The rejection set returned by running e-BH on these 
doubly-boosted e-values has the theoretical guarantee 
$\cR((b\be)^{\boost}) \supseteq \cR(\be)$, although it fails 
to also guarantee containment of the marginally-boosted 
improvement $\cR(b\be)$.} Empirically, however, 
we find that 
$$|\cR((b\be)^\boost)| \ge |\cR(\be^\boost )|.$$
Generally, the two methods are identical. However, the hybrid method of combining the two boosting types 
can show a small power improvement over $\cR(\be^\boost)$ in specific settings. One can interpret this result as a sign that 
e-BH-CC itself does not recapture the \emph{entire} power drop; 
with improvements in design, the gap can be closed even further.

Lastly, we call attention to the FDP comparison between marginal boosting and e-BH-CC. It is evident
that conditional calibration uses the FDR budget much more efficiently than its marginal counterpart,
as it consistently achieves much higher realized FDP. 

These results are presented in visual form in Figure \ref{fig:zstat_margboost}. We separate the e-value
construction ($\be$ versus $b\be$) with the multiple testing procedure used (e-BH versus e-BH-CC) so that 
it is straightforward to visually compare $|\cR(\be)|, |\cR(b\be)|, |\cR(\be^\boost)|,$ and $|\cR((b\be)^\boost)|$.
\begin{figure}[!t]
    \centering
    \includegraphics[scale=1.0]{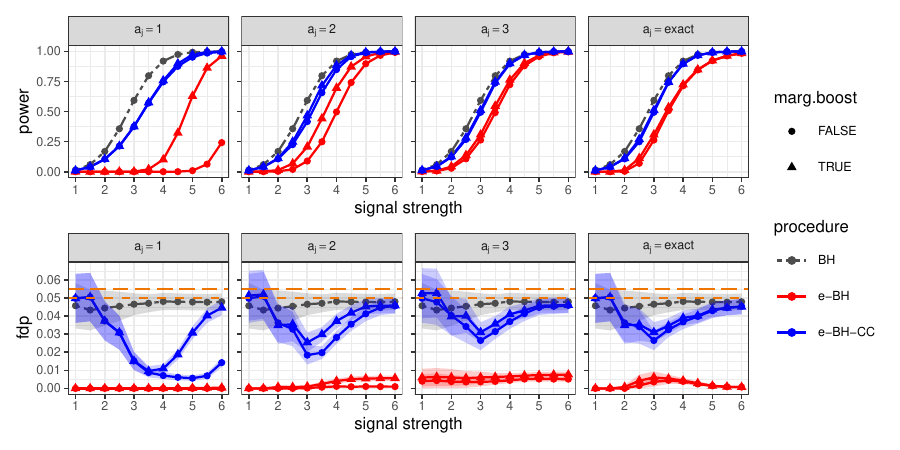} 
    \caption{Comparison of power and FDP between e-BH-CC and marginal boosting~\citep{wang2022false} 
    in the setting of one-sided $z$-testing (Section \ref{sec:sim-parametric}). In addition, the hybrid
    procedure of using e-BH-CC on marginally-boosted e-values is also included. Each plot contains the 
    averaged metric over 1000 replications. Shading represents error bars of two standard errors above and below.}
    \label{fig:zstat_margboost}
    \centering
\end{figure}

\section{Connection between dBH and e-BH}
\label{app:dbh_equiv}
To see why dBH has an e-value representation, we write 
\@\label{eq:dbh_eval}
e_j^\dbh = \frac{m\ind\{p_j \le \hat{\tau}_j\}}{\alpha \hat{R}_j(\bp)}.
\@ 
For simplicity, we assume that the conditional expectation in~\eqref{eq:dbh} 
is less than or equal to $\alpha/m$ at the critical value (otherwise we can replace the 
``$\le$'' with ``$<$'' as in our e-BH-CC procedure).
As a consequence, $\EE[e_j^\dbh] \le 1$ for any $j \in \cH_0$, which 
means that $e_j^\dbh$ is a valid e-value.
With this observation, we can apply the e-BH procedure to the e-values 
and obtain a deterministic selection set with FDR control.
One can also check that applying e-BH to the ``dBH e-values'' in~\eqref{eq:dbh_eval} is 
equivalent to dBH with all the $U_j$'s replaced by $1$; moreover, 
the dBH proceudure is equivalent to applying the e-BH procedure to 
$(e^\dbh_1/U_1,\ldots,e_m^\dbh/U_m)$---this 
is the pe-BH procedure in~\citet{ignatiadis2023evalues} or equivalently 
the U-eBH procedure in~\citet{xu2023more}.
Such a connection has also been noticed under a different context in~\citet{jin2023weighted}.
We formalize the equivalence in the following proposition.
\begin{proposition}
\label{prop:dbh_equiv}
The e-BH procedure applied to 
$(e^\dbh_1/U_1,\ldots,e^\dbh_m/U_m)$ is equivalent to 
the dBH procedure, where $e^\dbh_j$'s are as defined in~\eqref{eq:dbh_eval}. 
In particular, if we replace all the $U_j$'s by $1$, then 
dBH is equivalent to e-BH applied to $(e_1^\dbh,\ldots,e_m^\dbh)$.
\end{proposition}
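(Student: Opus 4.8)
The plan is to prove Proposition~\ref{prop:dbh_equiv} by unfolding the definition of the e-BH procedure applied to the dBH e-values $e_j^\dbh/U_j$ and showing it reproduces, step by step, the two-stage dBH rejection rule (first the preliminary set $\hat\cR^+$, then the pruning step involving $r^*$). The key observation is that e-BH on a collection of e-values of the form $e_j^\dbh/U_j = \frac{m\ind\{p_j \le \hat\tau_j\}}{\alpha \hat R_j(\bp) U_j}$ is a thresholding procedure: a hypothesis $j$ is rejected iff $e_j^\dbh/U_j \ge m/(\alpha k^*)$ where $k^*$ is the number of rejections. So the first step is to write out the self-consistency characterization of e-BH: $j \in \cR^\ebh$ iff $e_j^\dbh/U_j \ge \frac{m}{\alpha |\cR^\ebh|}$, equivalently (plugging in the form of $e_j^\dbh$) iff $p_j \le \hat\tau_j$ \emph{and} $U_j \le \frac{|\cR^\ebh|}{\hat R_j(\bp)}$. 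This immediately shows $\cR^\ebh \subseteq \hat\cR^+ = \{j : p_j \le \hat\tau_j\}$, matching the domain of the dBH pruning step.

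Next I would match the count. Let $k = |\cR^\ebh|$. By the characterization above, $\cR^\ebh = \{j \in \hat\cR^+ : U_j \le k/\hat R_j(\bp)\}$, and $k$ must be the largest integer for which this set has size at least $k$ — this is exactly the maximality built into the definition of $k^*$ in e-BH ($k^* = \max\{k : e_{(k)} \ge m/(\alpha k)\}$). Comparing with the dBH definition of $r^* = \max\{r : |\{j \in \hat\cR^+ : U_j \le r/\hat R_j(\bp)\}| \ge r\}$, we see $k^*$ and $r^*$ solve the same extremal problem, hence $k^* = r^*$, and therefore $\cR^\ebh = \{j \in \hat\cR^+ : U_j \le r^*/\hat R_j(\bp)\} = \cR^\dbh$ in the pruning case. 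I should also handle the non-pruning case of dBH: when $|\hat\cR^+| \ge \hat r_j(\bp)$ for all $j$, dBH returns $\hat\cR^+$ directly; I need to check that e-BH (with $U_j \le 1$) then also returns all of $\hat\cR^+$, which follows because $|\hat\cR^+| \ge \hat R_j(\bp) \ge \hat R_j(\bp) U_j$ forces the threshold inequality $e_j^\dbh/U_j \ge m/(\alpha|\hat\cR^+|)$ for every $j \in \hat\cR^+$, so e-BH makes at least $|\hat\cR^+|$ rejections, and since it can make no more, equality holds. (A small subtlety: one must use the convention on ties / the critical-value definition in~\eqref{eq:dbh} consistently, which is why the proposition statement already flags replacing ``$\le$'' with ``$<$'' if needed; I would note this parenthetically rather than belabor it.)

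Finally, the second sentence of the proposition — setting all $U_j = 1$ recovers exactly case (1) of dBH-with-no-pruning behavior on the deterministic side — is essentially the specialization just described, so I would dispatch it in a line: with $U_j \equiv 1$, $\cR^\ebh$ from $(e_1^\dbh,\dots,e_m^\dbh)$ equals $\{j : p_j \le \hat\tau_j,\ \hat R_j(\bp) \le |\cR^\ebh|\}$, and running the same $k^* = r^*$ argument (now with all $U_j = 1$) collapses this to the dBH output with randomization switched off.

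**The main obstacle** I anticipate is bookkeeping around the maximality/self-consistency equivalence: e-BH is usually phrased via the ordered e-values $e_{(k)}$ and the BH-type threshold, while dBH is phrased via the set-size condition defining $r^*$, so the crux is a clean lemma stating that ``$k$ is the e-BH rejection count for $\{v_j/U_j\}$'' $\iff$ ``$k = \max\{r : |\{j : v_j/U_j \ge m/(\alpha r)\}| \ge r\}$'', and then recognizing the $U_j \le r/\hat R_j(\bp)$ rewriting turns this into precisely the $r^*$ problem. Everything else is routine substitution; the care needed is only in the edge cases (empty rejection sets, ties at the threshold, and the strict-vs-weak inequality in the critical value).
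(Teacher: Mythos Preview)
Your proposal is correct and follows essentially the same approach as the paper's proof: both rewrite the e-BH threshold condition $e_j^\dbh/U_j \ge m/(\alpha k)$ as $j \in \hat\cR^+$ and $U_j \le k/\hat R_j(\bp)$, and then identify the e-BH count $k^*$ with the dBH pruning count $r^*$ as solutions to the same maximization problem. Your separate treatment of the no-pruning case is unnecessary---the $r^*$ formula already gives $r^* = |\hat\cR^+|$ in that case, so the paper handles both cases at once---but it is not incorrect.
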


\begin{proof}
Let $\tilde{\be} = (e_1^\dbh / U_1,\ldots,e_m/U_m)$,
and $\cR^\ebh(\tilde{\be})$ the selection set returned by the e-BH procedure 
applied to $\tilde{\be}$. Let $\cR^{\dbh}(\bp)$ denote the selection set
of the dBH procedure with $U_j$'s.
We denote by $\tilde{e}_{(1)} \ge \tilde{e}_{(2)} \ge \ldots \ge \tilde{e}_{(m)}$ the 
ordered statistics of $\tilde{\be}$ in descending order.
Let $k^* = |\cR^\ebh(\be^\dbh)|$, and recall that for dBH 
\$ 
r^* = \max \big\{r \in [m]: \big|\{j\in \hat{\cR}^+: 
U_j \le r/\hat{R}_j(\bp) \} \big|\ge r \big\}
\$ 
We can check that 
\$ 
\tilde{e}_{(k)} \ge \frac{m}{\alpha k} \iff 
\bigg|\Big\{j \in[m]: \tilde{e}_j \ge \frac{m}{\alpha k}\Big\}\bigg| \ge k.
\$
As a result, by the definition of $k^*$ and $r^*$, 
\$ 
k^* = \max\Big\{k \in [m]: \tilde{e}_{(k)} \ge \frac{m}{\alpha k}\Big\}
& = \max\bigg\{k \in [m]: \Big|\Big\{j\in[m]: \tilde{e}_{j} \ge \frac{m}{\alpha k}\Big\}\Big|\ge k\bigg\}\\
& = \max\bigg\{k \in [m]: \bigg|\Big\{j\in\hcR^+: \frac{m}{\alpha \hat{R}_j(\bp)U_j}
 \ge \frac{m}{\alpha k}\Big\}\bigg|\ge k\bigg\}\\
& = \max\bigg\{k \in [m]: \bigg|\Big\{j\in\hcR^+: U_j \le 
 \frac{k}{\hat{R}_j(\bp)}\Big\}\bigg|\ge k\bigg\}\\
& = r^*,
\$
where the second step follows from the construction of $\tilde{e}_j$'s.
This leads to 
\$ 
\cR^\ebh(\tilde{\be}) = \Big\{j\in[m]: \tilde{e}_j \ge \frac{m}{\alpha k^*}\Big\}
& = \Big\{j\in[m]: \frac{m \ind\{j \in \hcR^+\}}{\alpha \hat{R}_j(\bp)U_j} 
\ge \frac{m}{\alpha k^*}\Big\}\\
& = \Big\{j\in\hcR^+: \frac{r^*}{\hat{R}_j(\bp)} 
\ge U_j\Big\}\\
& = \cR^{\dbh}(\bp).
\$
The proof is therefore concluded.
\end{proof}

\section{Connection between eBH and WCS}
\label{appd:ebh_wcs}
\newcommand{\hcS}{\hat \cS}
\newcommand{\hS}{\hat S}
\newcommand{\hR}{\hat R}
\newcommand{\citest}{\cI_{\text{test}}}
\newcommand{\cic}{\cI_{\text{calib}}}
\newcommand{\quant}{{\text{quant}}}
For the outlier detation problem, 
the WCS procedure~\citep{jin2023weighted} with deterministic pruning 
computes a weighted conformal p-value for 
each $j \in [m]$ as in~\eqref{eq:weighted_pval} and returns the selection set 
\$
\cR^\wcs = \Big\{j \in [m]: p_j \le \frac{\alpha}{m}\cdot|\hcR_j|,~
|\hcR_j| \le r^* \Big\}.
\$
Above, $\hcR_j$ is a ``proxy'' selection set, obtained via
applying the BH procedure
to $(p_1^{(j)},\ldots, p_{j-1}^{(j)},0,p_{j+1}^{(j)},\ldots, p_m^{(j)})$, 
where for each $\ell \in [m]$,
\$ 
p_\ell^{(j)} = \frac{\sum_{i \in [n]} w(X_i) \cdot \ind\{V_i \ge V_{n+\ell}\} 
+ w(X_{n+j}) \cdot \ind\{V_{n+j}\geq V_{n+\ell}\}}
{\sum_{i \in [n]} w(X_i) + w(X_{n+j})},
\$
and the threshold $r^*$ is defined via
\$ 
r^* = \max\bigg\{r \in [m]: \sum_{j\in [m]} 
\ind\big\{p_j \le \alpha|\hat{R}_j|/m, 
~|\hat{R}_j| \le r \big\} \ge r  \bigg\}.
\$
As pointed out by~\citet{jin2023weighted}, $\cR^\wcs$ can equivalently 
obtained by applying the BH procedure to the e-values: 
\$
e_j^\wcs =  \frac{\ind\{p_j \le \alpha |\hat R_j|/m\}}{\alpha |\hat R_j|/m}, 
\quad \forall j \in [m].
\$
We are about to show that the weighted conformal e-value $e_j$ constructed in~\eqref{eq:w_conf_eval}
satisfies $e_j \ge e_j^\wcs$ determistically. The following lemma is key 
to establish this connection. 

\begin{lemma} \label{lemma:weighted_equivalence}
For any $j\in [m]$, the following holds. 
\begin{enumerate}
\item[(1)] On the event that $\{p_j \le \alpha |\hat R_j|/m\}$, 
$\hR_j = \{\ell\in[m]\colon V_{n+\ell} \ge T_j\}$.
\item[(2)] $p_j \le \alpha |\hat R_j|/m$ if and only if $V_{n+j} \ge T_j$; 
\end{enumerate}
\end{lemma}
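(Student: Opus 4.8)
## Proof Proposal for Lemma~\ref{lemma:weighted_equivalence}

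The plan is to mirror the strategy used to prove Proposition~\ref{prop:conf_eval_valid} in Appendix~\ref{appd:conf_eval_weighted}, exploiting the auxiliary threshold $\hat T_j$ defined in~\eqref{eq:w_conf_threshold_alt}. Recall that on the event $\{V_{n+j} \ge T_j\}$ we have $T_j = \hat T_j$, and that $\hat T_j$ depends only on $(\cE_j, \{Z_{n+k}\}_{k \in [m]\setminus\{j\}})$ --- in particular it is symmetric in the ordering of $\cE_j$. This is the workhorse fact for both parts.

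For part (1), I would argue that both $\hR_j$ and $\{\ell \in [m]\colon V_{n+\ell} \ge T_j\}$ are ``thresholding'' selection sets driven by the same stopping rule, so they must coincide on the relevant event. Concretely, $\hR_j$ is obtained by applying BH to $(p_1^{(j)},\dots,p_{j-1}^{(j)},0,p_{j+1}^{(j)},\dots,p_m^{(j)})$. Just as in Lemma~\ref{lem:bh_to_ebh-1}, BH on this modified p-value vector rejects exactly the hypotheses whose test scores exceed a data-dependent cutoff, and after rearranging the BH condition $p_\ell^{(j)} \le \alpha r/m$ into the ``estimated-FDP $\le \alpha$'' form one recognizes precisely the defining inequality of $\hat T_j$ in~\eqref{eq:w_conf_threshold_alt} (note that the $\ell = j$ entry being set to $0$ corresponds to the ``$1 + \sum_{k\neq j}$'' in the denominator of~\eqref{eq:w_conf_threshold_alt}, and the numerator $w(X_{n+j})\ind\{V_{n+j}\ge t\} + \sum_i w(X_i)\ind\{V_i \ge t\}$ matches what $p_\ell^{(j)}$ contributes). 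So $\hR_j = \{\ell\colon V_{n+\ell} \ge \hat T_j\}$ whenever $\hR_j \neq \varnothing$. On the event $\{p_j \le \alpha|\hR_j|/m\}$ the set $\hR_j$ is nonempty, and I would then show this event forces $V_{n+j} \ge T_j$ (this is essentially the ``only if'' half of part (2)), whence $T_j = \hat T_j$ and part (1) follows. I should also handle the boundary where $\hR_j$ could be empty separately, as in Lemma~\ref{lem:bh_to_ebh-1}.

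For part (2), the ``if'' direction ($V_{n+j} \ge T_j \Rightarrow p_j \le \alpha|\hR_j|/m$) uses $T_j = \hat T_j$ on this event together with part (1): plugging $\hR_j = \{\ell\colon V_{n+\ell}\ge T_j\}$ and $V_{n+j}\ge T_j$ into the definition of $p_j$ in~\eqref{eq:weighted_pval} and comparing with the inequality defining $\hat T_j$ (at the value $t = \hat T_j$, which is attained since $\hat T_j$ is an infimum over the finite score set) yields exactly $p_j \le \alpha|\hR_j|/m$. The ``only if'' direction: suppose $V_{n+j} < T_j$ for contradiction. Then the $j$th weighted conformal e-value is $0$; more usefully, I would show that $V_{n+j} < T_j$ forces $p_j > \alpha|\hR_j|/m$ by unwinding the definition of $T_j$ as the smallest qualifying threshold and noting that at any $t < T_j$ the ``estimated FDP'' strictly exceeds $\alpha$ --- translating this back through the $p_\ell^{(j)}$--BH correspondence gives the strict inequality on $p_j$.

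I expect part (1) to be the main obstacle, specifically the bookkeeping needed to verify that BH applied to the ``$0$-augmented'' modified p-value vector $(p_1^{(j)},\dots,0,\dots,p_m^{(j)})$ produces a rejection set with exactly the threshold structure encoded by $\hat T_j$ in~\eqref{eq:w_conf_threshold_alt} --- the asymmetry introduced by the $j$th coordinate (both the forced $0$ and the $w(X_{n+j})$ weighting) has to be matched carefully against the numerator/denominator of~\eqref{eq:w_conf_threshold_alt}, and the empty-rejection-set edge cases need separate treatment exactly as in the proof of Lemma~\ref{lem:bh_to_ebh-1}. Once that identification is in hand, parts (1) and (2) feed into each other and close quickly.
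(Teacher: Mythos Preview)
Your plan inverts the logical dependence between (1) and (2) relative to the paper, and the step you leave vaguest --- the standalone proof of the ``only if'' half of (2) --- has a real gap. Evaluating the $T_j$-defining inequality at $t=V_{n+j}$ only yields
\[
p_j \;>\; \frac{\alpha}{m}\sum_{k\in[m]}\ind\{V_{n+k}\ge V_{n+j}\},
\]
and to upgrade this to $p_j>\alpha|\hat R_j|/m$ you need $\sum_k\ind\{V_{n+k}\ge V_{n+j}\}\ge |\hat R_j|$. That inequality is \emph{not} automatic: since $j$ enters $\hat R_j$ via the forced $0$ rather than via its score rank, $\hat R_j$ can contain indices $\ell$ with $V_{n+\ell}<V_{n+j}$. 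Relatedly, your claimed identity $\hat R_j=\{\ell:V_{n+\ell}\ge\hat T_j\}$ is only correct as $\hat R_j=\{j\}\cup\{\ell\neq j:V_{n+\ell}\ge\hat T_j\}$; the two agree precisely when $V_{n+j}\ge\hat T_j$, which is exactly the point at issue. Your route can be salvaged, but it needs one extra ingredient you do not mention: the observation that $V_{n+j}\ge\hat T_j\Rightarrow T_j=\hat T_j$ (a mild strengthening of the Appendix fact, proved by checking that when $V_{n+j}\ge\hat T_j$ the $T_j$ and $\hat T_j$ defining inequalities coincide at $t=\hat T_j$). Its contrapositive gives $V_{n+j}<T_j\Rightarrow V_{n+j}<\hat T_j$, and since the BH cutoff for $\hat R_j$ is $\hat T_j$, this yields the missing inequality above.

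The paper avoids this circularity by proving (1) first, directly, via a double inclusion that does not touch $\hat T_j$ at all. On the event $p_j\le\alpha k^*/m$ (with $k^*=|\hat R_j|$) one first uses $p_j^{(j)}=p_j$ to rewrite $\hat R_j=\{\ell:p_\ell^{(j)}\le\alpha k^*/m\}$, which is a pure score-threshold set $\{\ell:V_{n+\ell}\ge V_{n+(k^*)}\}$. The forward inclusion $\hat R_j\subseteq\{\ell:V_{n+\ell}\ge T_j\}$ is obtained by checking the $T_j$ condition at $t=V_{n+(k^*)}$ (not at $t=V_{n+j}$); the backward inclusion comes from checking $p_{(\ell^*)}^{(j)}\le\alpha\ell^*/m$, where $\ell^*=|\{\ell:V_{n+\ell}\ge T_j\}|$. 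With (1) in hand, both directions of (2) are short: in particular, (1) gives $|\hat R_j|=\ell^*$, and then the ``only if'' contradiction goes through because $V_{n+j}<T_j$ trivially implies $\sum_k\ind\{V_{n+k}\ge V_{n+j}\}\ge\ell^*$.
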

\noindent Lemma~\ref{lemma:weighted_equivalence} then implies that 
\$ 
e_j & = \Big(w(X_{n+j}) + \sum^n_{i=1}w(X_i)\Big) \cdot 
 \frac{\ind\{V_{n+j} \ge T_j\}}{\sum_{i\in n}w(X_{i})\ind\{V_i \ge T_j\} + w(X_{n+j})}\\
 & \stackrel{\text{(a)}}{\ge} \frac{m}{\alpha} \cdot \frac{\ind\{V_{n+j} \ge T_j\}}{|\{\ell \in [m]\colon V_{n+\ell} \ge T_j\}|}
 \stackrel{\text{(b)}}{=} \frac{m}{\alpha} \cdot \frac{\ind\{p_j\le \alpha |\hat R_j|/m\}}{\hat R_j}
 = e_j^\wcs.
\$
Above, step (a) follows from the definition of $T_j$ and step (b) uses Lemma~\ref{lemma:weighted_equivalence}.
We also note that step (a) is often quite tight due to the choice of $T_j$.

Now, the only missing piece is the proof of Lemma~\ref{lemma:weighted_equivalence}, which 
we provide below.
\paragraph{Proof of Lemma~\ref{lemma:weighted_equivalence}.}
\begin{enumerate}
\item [(1)]Fix $j\in[m]$, 
and let $k^* = |\hat R_j|$ and $\ell^* = |\{\ell \in[m]: V_{n+\ell} \ge T_j\}|$.

Suppose $p_j \le \alpha|\hat R_j|/m$.
By the property of the BH procedure, 
\$
\hat R_j = \{j\} \cup \{\ell \neq j: p^{(j)}_\ell \le \alpha k^* / m\}
= \{\ell \in [m]: p^{(j)}_\ell \le \alpha k^* / m\},
\$
where the last step is because $p^{(j)}_j = p_j \le \alpha k^* / m$.
Let $p_{(1)}^{(j)} \le \cdots \le p_{(m)}^{(j)}$ denote the ordered p-values 
in an ascending order.
Since the rank of $p_\ell^{(j)}$ is determined by 
$V_{n+\ell}$, we have (with a slight abuse of notation) that 
$V_{n+(1)} \ge V_{n+(2)} \ge \cdots \ge V_{n+(m)}$.

Next, note that
\@\label{eq:wcs_ebh_1}
\frac{\sum_{i \in [n]} w(X_i) \ind\{V_i \ge V_{n+(k^*)}\}+w(X_{n+j})}
{\sum_{\ell \in [m]}\ind\{V_{n+\ell} \ge V_{n+(k^*)}\}}
\frac{m}{w(X_{n+j}) + \sum_{i\in[n]}w(X_i)}
= p^{(j)}_{(k^*)} \frac{m}{k^*} \le \alpha.
\@
where the first inequality follows from the definition of $p^{(j)}_{k^*}$ and 
that $\sum_{\ell \in [m]} \ind\{V_{n+\ell} \ge V_{n+(k^*)}\} = k^*$;
the last step is due to the property of the BH procedure.
Eqn.~\eqref{eq:wcs_ebh_1} implies that $V_{n+(k^*)} \ge T_j$.
Therefore for any $\ell \in \hat R_j$, $V_{n+\ell} \ge V_{n+(k^*)} \ge T_j$,
and $\hat R_j \subset \{\ell \in[m]: V_{n+\ell} \ge T_j\}$.

Conversely, we can see that 
\@\label{eq:wcs_ebh_2}
p^{(j)}_{(\ell^*)} & = \frac{\sum_{i\in[n]}w(X_i)\ind\{V_i \ge V_{n+(\ell^*)}\}+w(X_{n+j})}
{\sum_{i\in [n]}w(X_i) + w(X_{n+j})} \notag\\ 
& \stackrel{\text{(a)}}{=}  \frac{\sum_{i\in[n]}w(X_i)\ind\{V_i \ge V_{n+(\ell^*)}\}+w(X_{n+j})}
{\sum_{\ell \in m} \ind\{V_{n+\ell} \ge V_{n+(\ell^*)}\}}
\frac{m}{\sum_{i\in [n]}w(X_i) + w(X_{n+j})}\frac{\ell^*}{m}\notag\\
&\stackrel{\text{(b)}}{\le} \frac{\alpha \ell^*}{m}.
\@
Above, step (a) is because $\sum_{\ell \in [m]} \ind\{V_{n+\ell} \ge V_{n+(\ell^*)}\} = \ell^*$, 
and step (b) is because $V_{n+(\ell^*)} \ge T_j$. As a result of~\eqref{eq:wcs_ebh_2} and the 
property of the BH procedure, we have $\ell^* \le k^*$.
For any $\ell\in[m]$ such that $V_{n+\ell} \ge T_{n+j}$, there is also 
$p_\ell \le p_{(\ell^*)} \le \alpha k^* /m$, implying that $\ell \in \hat R_j$.
Collectively, we have $\hat R_j = \{\ell \in[m]: V_{n+\ell} \ge T_j\}$.

\item [(2)]
When $p_j \le \alpha |\hat R_j| / m$, we have by (1) that
$ 
p_j \le \frac{\alpha \ell^*}{m}.
$
Suppose otherwise $V_{n+j}<T_j$. We can check that
\$
&~\frac{\sum_{i \in [n]} w(X_i) \ind\{V_i \ge V_{n+j}\} + w(X_{n+j})}{\sum_{\ell \in [m]}\ind\{V_{n+\ell}\ge V_{n+j}\}}
\frac{m}{\sum_{i\in [n]} w(X_i) + w(X_{n+j})} \\
=&~ p_j \cdot \frac{m}{\sum_{\ell \in [m]}\ind\{V_{n+\ell}\ge V_{n+j}\}}\\
\le &~ p_j \cdot \frac{m}{\sum_{\ell \in [m]}\ind\{V_{n+\ell}\ge T_j\}}\\
= &~ \frac{p_j m}{\ell^*} \le \alpha.
\$
The above implies that $V_{n+j} \ge T_j$, which is a contradiction. Therefore, we conclude 
that $V_{n+j} \ge T_j$.

Conversely, when $V_{n+j} \ge T_j$, there is 
\$
\frac{\sum_{i\in[n]} w(X_i) \ind\{V_i \ge V_{n+j}\} + w(X_{n+j})}{\sum_{\ell \in [m]}\ind\{V_{n+\ell}\ge V_{n+j}\}}
\frac{m}{\sum_{i \in [n]} w(X_i) + w(X_{n+j})}\le \alpha .
\$
Rearranging the above inequality, we have
\$
 \le \frac{\alpha}{m} \sum_{\ell \in [m]} \ind\{V_{n+\ell}\ge V_{n+j}\} \le \frac{\alpha \ell^*}{m}.
\$ 
Recall that Eqn.~\eqref{eq:wcs_ebh_2} proves $p^{(j)}_{(\ell^*)} \le \alpha \ell^*/m$.
By the property of the BH procedure, 
\$
\ell^* \le |\cR^\bh(p_1^{(j)},\ldots,p_m^{(j)})| \le |\hat R_j|.
\$
As a result, we have $p_j \le \alpha |\hat R_j|/m$, completing the proof.

\end{enumerate}

\section{Additional details in the real data analysis}
\label{appd:real_data}
For task (2), our target is to discover the individuals in $\cD_\test(1)$ 
with a counterfactual less that $-0.3$: the null hypothsis for any $j \in \cD_\test(1)$ is 
\@ \label{eq:hypothesis_ite_att}
H_{j}: Y_j(0) \ge -0.3.
\@

In this case, we include the subset of $\cD_\calib(0)$ with $Y_i \ge -0.3$
as the calibration set, which we abuse the notation and also denote it as $\tD_\calib$.
For any $i\in \tD_\calib$, 
\$ 
X_i, Y_i \given T_i = 0, Y_i \ge -0.3 \sim P_{X,Y(0)\given T=0, Y\ge -0.3}.
\$
For any $j\in \cH_0$, 
\$
X_j,Y_j(0) \given T_j = 1, \cH_0 \sim P_{X,Y(0)\given T=1, Y(0)\ge -0.3}. 
\$
The likelihood ratio between the test and calibration inliers is 
\$
\frac{dP_{X,Y(0) \given T=1,Y(0)\ge -0.3}}
{dP_{X,Y(0) \given T=0, Y(0) \ge -0.3}} 
& = \frac{dP_{Y(0) \given X,T=1, Y(0) \ge -0.3}}{dP_{Y(0) \given X,T=0, Y(0) \ge -0.3}}  
\cdot \frac{dP_{X \given T = 1, Y(0) \ge -0.3}}{dP_{X \given T = 0, Y(0) \ge -0.3}}\\
& \stepa{=} \frac{dP_{X \given T = 1, Y(0) \ge -0.3}}{dP_{X \given T = 0, Y(0) \ge -0.3}}\\
& \stepb{=} \frac{\PP(T=0, Y(0) \ge -0.3)}{\PP(T=1, Y(0) \ge -0.3)} \cdot \frac{e(x)}{1-e(x)}\\ 
& \propto \frac{e(x)}{1-e(x)} \,=:\, w(x).
\$
The rest of the implementation of 
all procedures is the same as in task (1).

\end{document}